\newcommand\reallywidehat[1]{%
\savestack{\tmpbox}{\stretchto{%
  \scaleto{%
    \scalerel*[\widthof{\ensuremath{#1}}]{\kern-.6pt\bigwedge\kern-.6pt}%
    {\rule[-\textheight/2]{1ex}{\textheight}}%WIDTH-LIMITED BIG WEDGE
  }{\textheight}% 
}{0.5ex}}%
\stackon[1pt]{#1}{\tmpbox}%
}
\newcommand{\bE}{\mathbb{E}}
\newcommand{\Var}{\text{Var}}
\newtheorem{definition}{Definition}
\newtheorem{theorem}{Theorem}
\newtheorem{corollary}{Corollary}
\newtheorem{conjecture}{Conjecture}
\newtheorem{lemma}{Lemma}
\newtheorem{problem}{Problem}
\newtheorem{rmk}{Remark}
\newenvironment{fminipage}%
  {\begin{Sbox}\begin{minipage}}%
  {\end{minipage}\end{Sbox}\fbox{\TheSbox}}
\newcommand*{\rom}[1]{\expandafter\@slowromancap\romannumeral #1@}
\newcommand{\Ind}{\mathbbm{1}}
\newcommand{\abs}[1]{\left|#1\right|}
\newcommand {\pr} {\mathbb{P}}
\newcommand{\calA}{{\cal A}}
\newcommand{\calB}{{\cal B}}
\newcommand{\calC}{{\cal C}}
\newcommand{\calE}{{\cal E}}
\newcommand{\calG}{{\cal G}}
\newcommand{\calH}{{\cal H}}
\newcommand{\calI}{{\cal I}}
\newcommand{\calK}{{\cal K}}
\newcommand{\calL}{{\cal L}}
\newcommand{\calQ}{{\cal Q}}
\newcommand{\calR}{{\cal R}}
\newcommand{\calS}{{\cal S}}
\newcommand{\calV}{{\cal V}}
\newcommand{\calW}{{\cal W}}
\newcommand{\be}{\begin{equation}}
\newcommand{\ee}{\end{equation}}
\newcommand{\beqna}{\begin{eqnarray}}
\newcommand{\eeqna}{\end{eqnarray}}
\newcommand{\p}[1]{\left(#1\right)}
\newcommand{\pp}[1]{\left[#1\right]}
\newcommand{\ppp}[1]{\left\{#1\right\}}
\newcommand{\norm}[1]{\left\|#1\right\|}
\newcommand{\indep}{\perp \!\!\! \perp}
\newcommand{\s}[1]{\mathsf{#1}}
\def\thanks#1{\protected@xdef\@thanks{\@thanks
        \protect\footnotetext{#1}}}
\begin{document}
%\title{Detecting a Planted Bipartite Graph}
\title{Planted Bipartite Graph Detection}
%\author{Asaf Rotenberg~~~~~Wasim Huleihel~~~~~Ofer Shayevitz}
\author{Asaf~Rotenberg~~~~~~~~~~~Wasim~Huleihel~~~~~~~~~~~Ofer Shayevitz\thanks{A. Rotenberg, W. Huleihel and O. Shayevitz are with the Department of Electrical Engineering-Systems at Tel Aviv university, {T}el {A}viv 6997801, Israel (e-mail:  \texttt{asaf.rotenberg@gmail.com, wasimh@tauex.tau.ac.il, ofersha@tauex.tau.ac.il}). The work of A. Rotenberg was supported by the ISRAEL SCIENCE FOUNDATION (grants No. 1734/21 and 1766/22). The work of W. Huleihel was supported by the ISRAEL SCIENCE FOUNDATION (grants No. 1734/21). The work of O. Shayevitz was supported by the ISRAEL SCIENCE FOUNDATION (grant No. 1766/22). This paper was presented in part at the 2023 IEEE International Symposium on Information Theory, Taipei, Taiwan.}}
\maketitle

\begin{abstract}

We consider the task of detecting a hidden bipartite subgraph in a given random graph. This is formulated as a hypothesis testing problem, under the null hypothesis, the graph is a realization of an Erd\H{o}s-R\'{e}nyi random graph over $n$ vertices with edge density $q$. Under the alternative, there exists a planted $k_{\s{R}} \times k_{\s{L}}$ bipartite subgraph with edge density $p>q$. We characterize the statistical and computational barriers for this problem. Specifically, we derive information-theoretic lower bounds, and design and analyze optimal algorithms matching those bounds, in both the dense regime, where $p,q = \Theta\left(1\right)$, and the sparse regime where $p,q = \Theta\left(n^{-\alpha}\right), \alpha \in \left(0,2\right]$. We also consider the problem of testing in polynomial-time. As is customary in similar structured high-dimensional problems, our model undergoes an ``easy-hard-impossible" phase transition and computational constraints penalize the statistical performance. To provide an evidence for this statistical computational gap, we prove computational lower bounds based on the low-degree conjecture, and show that the class of low-degree polynomials algorithms fail in the conjecturally hard region.
\end{abstract}
\sloppy
\section{Introduction}
\label{sec:intro}
The problem of detecting hidden structures in large graphs arises frequently in a vast verity of fields, such as, social networks analysis, computational biology, and computer vision, and have been well-studied in the literature in many interesting settings. It is quite clear by now, that the inference of such structures, especially in the high-dimensional regime, impose many statistical and computational challenges \cite{Cai2018}. In this work, we study a hypothesis testing problem that underlies some of these challenges. To motivate this problem, at least mathematically, let us briefly review a few classical results. To that end, consider the following somewhat general formulation of the hidden subgraph detection problem. Observing a graph over $n$ vertices, assume our goal is to infer whether there is a hidden planted structure in the graph, or if it is purely random, say, a realization of an Erd\H{o}s-R\'{e}nyi random graph. Throughout the paper we denote the latter by $\calG(n,q)$, namely, a random graph over $n$ vertices with edge probability $q$. If the planted structure is a fully connected subset of $k<n$ vertices, namely, a clique, then the above problem is well-known as the \emph{planted clique} ($\mathsf{PC}$) problem (e.g., \cite{Jerrum1992}, \cite{Kucera1995}, \cite{Alon1998}, \cite{Barak2016}, \cite{Dekel2014}, \cite{Feldman2017}). If the planted structure is a realization of an Erd\H{o}s-R\'{e}nyi $\calG(k,p)$ random graph, then we get the well-studied \emph{planted dense subgraph} ($\mathsf{PDS}$) problem (e.g., \cite{Hajek2015}, \cite{Chen2016}, \cite{Brennan2018}). These problems have received widespread attention in the literature (also) because there is a region of parameters $(n,k,p,q)$ for which, while statistically possible, it appears to be computationally hard to solve the detection problem.

In this work, we focus on the case where the planted structure in $\calG(n,q)$ is a $k_{\s{R}} \times k_{\s{L}}$ bipartite subgraph with edge density $p>q$. If $p=1$, the planted structure appears as a \emph{complete} bipartite graph with $k_{\s{R}}+k_{\s{L}}$ vertices and $k_{\s{R}} k_{\s{L}}$ edges. If $p<1$, each of the $k_{\s{R}} k_{\s{L}}$ possible edges exist with probability $p$, independently of other edges. Compared to the previously mentioned complete and dense subgraph structures, it turns that the ``geometry" of bipartite graphs is richer in the following sense. Our results suggest that the planted bipartite model interpolates between two extremes: on the one hand, we have the $\s{PC}$ and $\s{PDS}$ problems, while on the other hand, we (essentially) have the planted star problem.\footnote{Recall that a star graph on $k$ vertices is a tree with one internal node and $k-1$ leaves.} Most importantly, as we explain below, the behavior of these two extremes differs when scrutinized from the statistical and computational perspectives. Specifically, in the special case where $k_{\s{R}}=\Theta(k_{\s{L}})$, that is, the planted structure is a \emph{balanced} bipartite graph, we will prove that the statistical and computational barriers are essentially the same as those of the famous $\mathsf{PC}$ and $\mathsf{PDS}$ problems. If $k_{\s{R}}\wedge k_{\s{L}}=O(\log n)$, however, then we get completely different barriers compared to the previous case; most notably, there is no hard region! This was observed also in \cite{Laurent2019}, for the special case of a star configuration and edge probabilities $p=1$ and $q = \Theta(n^{-1})$.

More generally, we conjecture that the detection boundaries for a general planted structure strongly depend on the following quantity.\footnote{In fact, some of our analysis can be carried out for a general planted structure, and it can be shown that the maximal subgraph density indeed plays a role.}
\begin{definition} [Maximal subgraph density]\label{def:maximal subgraph density}
Fix a subgraph $\Gamma = (V(\Gamma),E(\Gamma))$ with a set of vertices $V(\Gamma)$ and a set of edges $E(\Gamma)$. The maximal subgraph density of $\Gamma$ is defined as:  
\begin{align}
m{\left(\Gamma \right)} \triangleq \max_{\emptyset\neq H \subseteq \Gamma} \frac{\vert E{\left(H\right)}\vert}{\vert V{\left(H\right)}\vert}.
\end{align}
\end{definition}
Note that $\Omega \left(1\right) \leq m{\left(\Gamma \right)} \leq O \left(\vert V{\left(\Gamma\right)}\vert\right)$. The conjecture above is reasonable, as it is well-known \cite{Erdos1960,Bollobas1981} that the maximal subgraph density plays an important role in the characterization of the subgraphs count in Erd\H{o}s-R\'{e}nyi random graphs, which in turn are strongly connected to the question of impossibility and possibility of detection. Now, in continuation to the discussion above, note that for the $\mathsf{PC}$ and $\mathsf{PDS}$ problems, where $\Gamma$ is a complete graph on $k$ vertices,\footnote{In both $\mathsf{PC}$ and $\mathsf{PDS}$ we ``plant" $k$ vertices chosen uniformly at random from $[n]$; this is equivalent to planting a complete graph on $k$ vertices, and thus $\Gamma$ is indeed a clique of size $k$, irrespectively of the fact that in $\mathsf{PDS}$ we keep the edges in $\Gamma$ with probability $p$.} it is clear that the maximal subgraph density is $\frac{k-1}{2} = \Theta(k)$, namely, it is proportional to the number of vertices in the structure. In our problem, where $\Gamma$ is a complete bipartite graph, we get,
\begin{align} 
m{\left({\mathsf{K}}_{\calR, \calL}\right)} = \frac{k_{\s{R}} k_{\s{L}}}{k_{\s{R}} + k_{\s{L}}}.
\end{align}
Accordingly, in the balanced case, we get that $m{\left({\mathsf{K}}_{\calR, \calL}\right)} = \Theta \left(\vert v{\left({\mathsf{K}}_{\calR, \calL}\right)}\vert\right)$, i.e., it is proportional to the number of vertices in the structure, as in the $\mathsf{PC}$ and $\mathsf{PDS}$ problems. In the case where $k_{\s{R}}\wedge k_{\s{L}}=O(\log n)$, we get that $m{\left({\mathsf{K}}_{\calR, \calL}\right)} = O(\log n)$. %In fact, we conjecture that the maximal subgraph density affects (at least partially) the statistical and computational limits of the detection problem. 
In this paper, we analyze the bipartite subgraph detection problem, for any values of $k_{\s{R}}$ and $k_{\s{L}}$, and as so cover a spectrum of possible scaling of the maximal subgraph density; to the best of our knowledge, our work is the first to face a detection problem of a planted structure with geometry containing every possible value of the maximal subgraph density. 

%In the second part of this paper, we consider a variant of the above detection problem, where one can only observe a relatively small part of the graph, by using at most $\mathsf{Q}$ edge queries. This problem is relevant, for example, in cases where access to connections (edges) between individuals may be expensive to obtain or limited, e.g., due to privacy concerns.

\paragraph{Main contributions.} For the detection problem described above (see, Problem~\ref{prob:pdbs} for a precise definition), we characterize the statistical and computational limits of detection. To that end, we derive impossibility results in the form of information-theoretic lower bounds that applies to \emph{all} algorithms, providing conditions under which strong detection is impossible. We then propose several algorithms, analyze their performance, and show that our bounds are tight (up to poly-log factors). 
\begin{table}[t!]
\centering
\begin{tabular}{|c||c||c|}
 \hline
& $\s{Dense}\;\s{regime}$ & $\s{Sparse}\;\s{regime}$\\
 \hline
 $\s{Impossible}$  & $k_{\s{R}}\wedge k_{\s{L}} = O(\log n),k_{\s{R}}^2\vee k_{\s{L}}^2\ll n$ & $\chi^2(p||q)\ll \frac{1}{k_{\s{R}}\wedge k_{\s{L}}}\wedge\frac{n^2}{k_{\s{R}}^2k_{\s{L}}^2}\wedge\frac{n}{k^2_{\s{R}}\vee k^2_{\s{L}}}$\\
\hline
 $\s{Hard}$ &  $k_{\s{R}}\wedge k_{\s{L}} = \Omega(\log n),k_{\s{R}}^2\vee k_{\s{L}}^2\ll n$& $\frac{1}{k_{\s{R}}\wedge k_{\s{L}}}\ll\chi^2(p||q)\ll \frac{n^2}{k^2_{\s{R}} k^2_{\s{L}}}\wedge\frac{n}{k^2_{\s{R}}\vee k^2_{\s{L}}}$ \\
 \hline
 $\s{Easy}$ &  $k_{\s{R}}^2\vee k_{\s{L}}^2\gg n$& $\chi^2(p||q) \gg \frac{n^2}{k_{\s{R}}^2 k_{\s{L}}^2}\wedge\frac{n}{k^2_{\s{R}}\vee k^2_{\s{L}}}$\\
 \hline
\end{tabular}
\caption{Bipartite subgraph detection thresholds.}
\label{table:detection thresholds}
\end{table}
\begin{comment}
\begin{table}[h!]
\centering
\begin{tabular}{|p{5.4cm}||c|}
 \hline
& \textbf{Detection threshold}\\
 \hline
 \textbf{Bipartite dense}  & $1 \asymp \min\left \{\frac{\log{\left(n^{1-o\left(1\right)}\right)}}{k_{\s{R}}\wedge k_{\s{L}}}, \frac{n }{\left(k_{\s{R}}\vee k_{\s{L}}\right)^2}\right \}$\\
\hline
 \textbf{Bipartite sparse} &  $\chi^2(p||q) \asymp \min \left\{\frac{1}{k_{\s{R}} \wedge k_{\s{L}}},\frac{n^2}{k_{\s{R}}^2 k_{\s{L}}^2},\frac{n}{\left(k_{\s{R}} \vee k_{\s{L}}\right)^2}\right\}$\\
 \hline
\end{tabular}
\caption{Bipartite subgraph detection thresholds.}
\label{table:detection thresholds}
\end{table}    
\end{comment}
Similarly to other related high-dimensional problems, we observe a gap between the statistical limits we derive and the performance of the efficient algorithms we construct. We conjecture that below the computational barrier polynomial-time algorithms do not exist, namely, the detection task is computationally hard (on an average-case). To provide an evidence for this phenomenon, we follow a recent line of work \cite{hopkins2017bayesian,Hopkins18,Kunisky19,Cherapanamjeri20,gamarnik2020lowdegree,Huleihel2022}, and show that the class of low-degree polynomials fail to solve the detection problem in this conjecturally hard regime. A looser but simpler version of our bounds is given in Table~\ref{table:detection thresholds}, where we recall that $\chi^2(p||q) = \frac{(p-q)^2}{q(1-q)}$ is the $\chi^2$-divergence between two Bernoulli random variables with parameters $p$ and $q$, $a\vee b\triangleq\max\{a,b\}$ and $a\wedge b\triangleq\min\{a,b\}$, and the notation ``$\ll$" hides factors that are sub-polynomial in $n$ (see, the notation paragraph below for a precise definition). 

To illustrate our results further, let us focus here on the special case of complete bipartite, where $p=1$. From an algorithmic perspective, the detection thresholds in Table~\ref{table:detection thresholds} are achieved by a combination of the following three procedures:
\begin{itemize}
    \item \emph{Scan test}: scans for the densest bipartite subgraph in the observed graph, and compare its size to a certain threshold.
    \item \emph{Count test}: counts the number of edges in the observed graph, and compare this count to a certain threshold.
    \item \emph{Degree test}: looks for the maximal vertex degree in the observed graph, and compare this degree to a certain threshold.
\end{itemize}
Precise definitions of the above procedures are given in Section~\ref{sec:main}. It should be emphasized that the above algorithms are by no means novel; the exact same algorithms were suggested for the $\mathsf{PC}$ and $\mathsf{PDS}$ problems in, e.g., \cite{Hajek2015,Brennan2018}. %Nonetheless, there is an interesting a-priori non-trivial phenomenon that takes place in the case of bipartite subgraph detection. 
Now, for the $\mathsf{PC}$ problem we have the phase diagram shown in Figure~\ref{fig:planted clique}. Note that the scan test is always at least as good as degree and count tests. Also, note that count and degree tests have the same performance. Finally, there is a region where solving $\mathsf{PC}$ is (conjecturally) computationally hard.
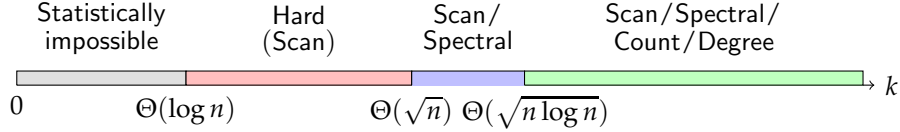
\begin{figure}[t!]
\centering
%\resizebox{6cm}{7cm}{

\begin{tikzpicture}[scale=1.5]
\tikzstyle{every node}=[font=\footnotesize]

\def\xUnit{5}
\def\yUnit{1.25}
\def\xmin{0}
\def\xmax{1.5 * \xUnit + 0.1}
\def\ymin{0}

\draw[->] (\xmin,\ymin) -- (\xmax,\ymin) node[right] {$k$};

\node at (\xUnit * 0.7, 0) [below] {$\Theta(\sqrt{n})$};
\node at (\xUnit * 0.92, 0) [below] {$\Theta(\sqrt{n\log n})$};
\node at (\xUnit * 0.3, 0) [below] {$\Theta(\log{n})$};
\node at (0, 0) [below] {$0$};

\filldraw[fill=gray!25, draw=black] (0, 0) -- (0, 0.1 * \yUnit) -- (0.3 * \xUnit,  0.1 * \yUnit) -- (0.3 * \xUnit, 0) -- (0, 0);
\filldraw[fill=red!25, draw=black] (0.3 * \xUnit, 0) -- (0.3 * \xUnit, 0.1 * \yUnit) -- (0.7 * \xUnit, 0.1 * \yUnit) -- (0.7 * \xUnit, 0) -- (0.3 * \xUnit, 0);
\filldraw[fill=blue!25, draw=black] (0.7 * \xUnit, 0) -- (0.7 * \xUnit, 0.1 * \yUnit) -- (0.9 * \xUnit, 0.1 * \yUnit) -- (0.9 * \xUnit, 0) -- (0.9 * \xUnit, 0);
\filldraw[fill=green!25, draw=black] (0.9 * \xUnit, 0) -- (0.9 * \xUnit, 0.1 * \yUnit) -- (1.5 * \xUnit, 0.1 * \yUnit) -- (1.5 * \xUnit, 0) -- (0.9 * \xUnit, 0);

\node at (\xUnit * 0.15 , \yUnit * 0.5) {$\s{{Statistically}}$};
\node at (\xUnit * 0.15 , \yUnit * 0.3) {$\s{{impossible}}$};
\node at (\xUnit * 0.5 , \yUnit * 0.5) {$\s{{Hard}}$};
\node at (\xUnit * 0.5 , \yUnit * 0.3) {$(\s{{Scan}})$};
\node at (\xUnit * 0.8 , \yUnit * 0.5) {$\s{{Scan/}}$};
\node at (\xUnit * 0.8 , \yUnit * 0.3) {$\s{{Spectral}}$};

\node at (\xUnit * 1.2 , \yUnit * 0.5) {$\s{{Scan/Spectral/}}$};
\node at (\xUnit * 1.2 , \yUnit * 0.3) {$\s{{Count/Degree}}$};

\end{tikzpicture}%}

\caption{Phase diagram for detecting a planted clique of size $k$, as a function of $k$.}
\label{fig:planted clique}
\end{figure}
For the complete bipartite case, we obtain the phase diagram shown in Figure~\ref{fig:planted bipartite}, in which the axes correspond to the number of vertices in the right and left planted sets. As compared to the $\s{PC}$ case, it can be seen that the obtained phase diagram is much more involved. For example, note that there exists a region of parameters, i.e., $k_{\s{R}}\wedge k_{\s{R}} = O(\log n)$, where the detection problem is either statistically impossible or possible to solve in polynomial-time using the degree test. This was observed also in \cite{Laurent2019}, for the special case of a star configuration, where $k_{\s{R}}\wedge k_{\s{R}} = 1$, and edge probabilities $p=1$ and $q=\Theta(n^{-1})$.

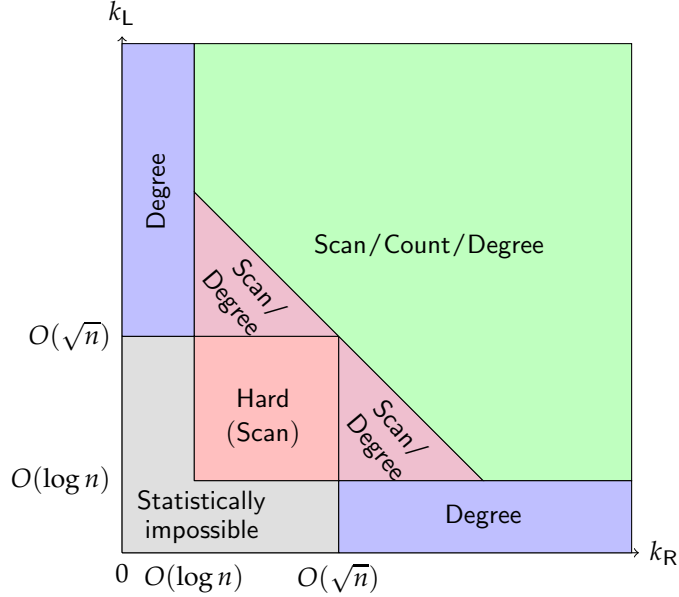
\begin{figure}[t!]
\centering
%\resizebox{6cm}{7cm}{

\begin{tikzpicture}[scale=1.5]
\tikzstyle{every node}=[font=\footnotesize]
\def\xUnit{3.2}
\def\yUnit{0.85}
\def\xmin{0}
\def\ymin{0}
\def\xmax{1.4 * \xUnit + 0.1}
\def\ymax{1.4 * \xUnit + 0.1}

\draw[->] (\xmin,\ymin) -- (\xmax,\ymin) node[right] {$k_{\s{R}}$};
\draw[->] (\xmin,\ymin) -- (\xmin,\ymax) node[above] {$k_{\s{L}}$};

\node at (\xUnit * 0.6, 0) [below] {$O(\sqrt{n})$};
\node at (\xUnit * 0.2, 0) [below] {$O(\log{n})$};
\node at (0, 0) [below] {$0$};

\node at (0, \xUnit * 0.6) [left] {$O(\sqrt{n})$};
\node at (0, \xUnit * 0.2) [left] {$O(\log{n})$};

\filldraw[fill=gray!25, draw=black] (0, 0) -- (0, 0.6 * \xUnit) -- (0.2 * \xUnit,  0.6 * \xUnit) -- (0.2 * \xUnit,  0.2 * \xUnit) -- (0.6 * \xUnit,  0.2 * \xUnit) -- (0.6 * \xUnit, 0) -- (0, 0);
\filldraw[fill=blue!25, draw=black] (0, 0.6 * \xUnit) -- (0,  \xmax-\xUnit * 0.02) -- (0.2 * \xUnit,  \xmax-\xUnit * 0.02) -- (0.2 * \xUnit,  0.6 * \xUnit) -- (0, 0.6 * \xUnit);
\filldraw[fill=blue!25, draw=black] (0.6 * \xUnit, 0) -- (\xmax-\xUnit * 0.02, 0) -- (\xmax-\xUnit * 0.02, 0.2 * \xUnit) -- (0.6 * \xUnit, 0.2 * \xUnit) -- (0.6 * \xUnit, 0);
\filldraw[fill=red!25, draw=black] (0.2 * \xUnit, 0.2 * \xUnit) -- (0.2 * \xUnit, 0.6 * \xUnit) -- (0.6 * \xUnit, 0.6 * \xUnit) -- (0.6 * \xUnit, 0.2 * \xUnit) -- (0.2 * \xUnit, 0.2 * \xUnit);
\filldraw[fill=purple!25, draw=black] (0.2 * \xUnit, 0.6 * \xUnit) -- (0.2 * \xUnit, 1 * \xUnit) -- (0.6 * \xUnit, 0.6 * \xUnit) -- (0.2 * \xUnit, 0.6 * \xUnit);
\filldraw[fill=purple!25, draw=black] (0.6 * \xUnit, 0.2 * \xUnit) -- (1 * \xUnit, 0.2 * \xUnit) -- (0.6 * \xUnit, 0.6 * \xUnit) -- (0.6 * \xUnit, 0.2 * \xUnit);
\filldraw[fill=green!25, draw=black] (0.2 * \xUnit, 1 * \xUnit) -- (0.2 * \xUnit, \xmax-\xUnit * 0.02) -- (\xmax-\xUnit * 0.02, \xmax-\xUnit * 0.02) -- (\xmax-\xUnit * 0.02, 0.2 * \xUnit) -- (\xUnit, 0.2 * \xUnit) -- (0.2 * \xUnit, 1 * \xUnit);

\node at (\xUnit * 0.22 , \xUnit * 0.14) {$\s{{Statistically}}$};
\node at (\xUnit * 0.22 , \xUnit * 0.06) {$\s{{impossible}}$};
\node at (\xUnit * 0.39 , \xUnit * 0.43) {$\s{{Hard}}$};
\node at (\xUnit * 0.39 , \xUnit * 0.33) {$(\s{{Scan}})$};
\node[rotate=-50] at (\xUnit * 0.38 , \xUnit * 0.73) {$\s{{Scan/}}$};
\node[rotate=-50] at (\xUnit * 0.3 , \xUnit * 0.69) {$\s{Degree}$};
\node[rotate=-50] at (\xUnit * 0.77 , \xUnit * 0.34) {$\s{{Scan/}}$};
\node[rotate=-50] at (\xUnit * 0.7 , \xUnit * 0.29) {$\s{Degree}$};
\node at (\xUnit * 1, \xUnit * 0.1) {$\s{Degree}$};
\node[rotate=90] at (\xUnit * 0.1, \xUnit * 1) {$\s{Degree}$};
\node at (\xUnit * 0.85, \xUnit * 0.85) {$\s{{Scan/Count/Degree}}$};

\end{tikzpicture}%}

\caption{Phase diagram for detecting a planted $k_{\s{R}} \times k_{\s{L}}$ complete bipartite.}
\label{fig:planted bipartite}
\end{figure}

%\textcolor{red}{Finally, we discuss some of the technical challenges in the proofs of our lower bounds. For Problem \ref{prob:pdbs}, as in many related problems (e.g., \cite{Hajek2015,Brennan2018}), deriving a lower bound on the $\mathsf{Type}$ $\mathsf{I}$+$\mathsf{II}$ risk of any algorithm boils down to upper bounding the moment-generating function of the intersection of two random copies of the planted bipartite subgraph. Using several properties of the distribution of this intersection we were able to obtain tight lower bounds on the optimal detection risk.}

\paragraph{Related Work.} The $\mathsf{PC}$ detection problem has been studied from many different perspectives in the last decades. Many polynomial-time algorithms have been proposed, including simple edge count, degree test \cite{Kucera1995}, that achieves detection if $k=\Omega\left(\sqrt{n \log{n}}\right)$, and more sophisticated spectral approaches, e.g., \cite{Alon1998} that succeed if $k=\Omega\left(\sqrt{n}\right)$. Statistical limits as well as the analysis of inferring algorithms are also studied in other subgraphs detection problems, including, e.g., community detection \cite{Castro2014}, \cite{Verzelen2015}. It was first conjectured in \cite{Jerrum1992} that the $\mathsf{PC}$ problem cannot be solved in polynomial-time if $k=o\left(\sqrt{n}\right)$, while an exhaustive scan succeeds if $k=\Omega\left(\log{n}\right)$. This $\mathsf{PC}$  conjecture has been used through reduction arguments in other high-dimensional problems, such as sparse $\mathsf{PCA}$ \cite{Berthet2013} and $\mathsf{PDS}$ \cite{Hajek2015}, who observed an impossible-hard-easy phase transition phenomena. For recent surveys we refer the reader to \cite{Abbe,Wu2018StatisticalPW}, and many references therein. 

Most of the above high-dimensional detection problems typically deal with a planted structure whose elements are \emph{all} either fully connected ($\mathsf{PC}$) or connected with some probability (e.g., $\mathsf{PDS}$). Accordingly, existing techniques used to determine the statistical and computational limits are specialized to these types of structures. There are, however, some exceptions. For example, in \cite{massoulie19a} the detection and recovery task of $\s{D}$-ary trees superimposed in a \emph{sparse} Erd\H{o}s-R\'{e}nyi random graph, where $p=1$ and $q=\Theta(n^{-1})$, were considered. As mentioned above, a spacial case of \cite{massoulie19a} is the star subgraph, which is captured also by our model. Another exception is \cite{Bagaria20}, where the problem of planted Hamiltonian cycle recovery was addressed. Recently in \cite{Huleihel2022} the detection and recovery problems of a \emph{general} planted \emph{induced} subgraph was considered; however, since the latent structure appears as an induced subgraph, the techniques and results in \cite{Huleihel2022} are irrelevant to our setting, where we take a union of bipartite graph and the underlying Erd\H{o}s-R\'enyi random graph. While our bipartite framework is still specific, it turns out, as mentioned above, that it covers asymptotically the entire range of possible maximal subgraph densities, and that its analysis is technically interesting. Moreover, bipartite graphs have many theoretical and practical applications (see, e.g., \cite{Asratian1998}, \cite{Pavlopoulos2018}), which provide further motivation for this paper.

Over the last few years, there has been a success in developing a rigorous notion of what can and cannot be achieved by efficient algorithms. Recent results, e.g., \cite{berthet2013complexity,ma2015computational,cai2015computational,krauthgamer2015semidefinite,hajek2015computational,chen2016statistical,wang2016average,wang2016statistical,gao2017sparse,brennan18a,brennan19,wu2018statistical,brennan20a,hopkins2017bayesian,Hopkins18,Kunisky19,Cherapanamjeri20,gamarnik2020lowdegree,barak2016nearly,deshpande2015improved,meka2015sum,TengyuWig15,kothari2017sum,hopkins2016integrality,raghavendra2019highdimensional,hopkins2017power,mohanty2019lifting,Feldman17,feldman2018complexity,Diakonikolas17,DiakonikolasKong19,Lenka16,Lesieur_2015,Lesieur_2016,Krzakala10318,Ricci_Tersenghi_2019,bandeira2018notes} revealed an intriguing phenomenon that is common to many high-dimensional problems with a planted structure: there is an inherent gap between the amount of data needed by all computationally efficient algorithms and what is needed for statistically optimal algorithms. Various forms of rigorous evidence for this phenomenon, i.e., hardness in the average-case sense, have been proposed, and they can roughly be classified into two groups: 1) \emph{Failure under certain computation models}, namely, showing that powerful classes of computationally efficient algorithms, such as, low-degree polynomials \cite{hopkins2017bayesian,Hopkins18,Kunisky19,Cherapanamjeri20,gamarnik2020lowdegree}, sum-of-squares hierarchy \cite{barak2016nearly,deshpande2015improved,meka2015sum,TengyuWig15,kothari2017sum,hopkins2016integrality,raghavendra2019highdimensional,hopkins2017power,mohanty2019lifting}, statistical query algorithms \cite{Feldman17,feldman2018complexity,Diakonikolas17,DiakonikolasKong19}, message-passing algorithms \cite{Lenka16,Lesieur_2015,Lesieur_2016,Krzakala10318,Ricci_Tersenghi_2019,bandeira2018notes}, etc., fail in the conjectured computationally hard regime of the problem. 2) \emph{Average-case reductions} from another problem, such as the planted clique problem, conjectured to be computationally hard, e.g., \cite{berthet2013complexity,ma2015computational,cai2015computational,chen2016statistical,hajek2015computational,wang2016average,wang2016statistical,gao2017sparse,brennan18a,brennan19,wu2018statistical,brennan20a}. As mentioned above, in this paper we use the machinery of low-degree polynomials to provide an evidence for the statistical computational gap conjectured in our model. 

%The problem of finding cliques in an Erd\H{o}s-R\'enyi random graph using $\mathsf{Q}$ edge queries was first considered in \cite{Feige2020}, where it was shown that any algorithm makes $\mathsf{Q}=O\left(n^{\alpha}\right), \alpha < 2$ adaptive (with some limitations) edge queries finds, in $l$ rounds, cliques of size at most $\left(2-\epsilon\right) \log_2{n}, \epsilon = \epsilon\left(\alpha, l\right) > 0$. This result was improved by \cite{Alweiss2020}, where the dependency of $\epsilon$ on $l$ was removed. On the other hand, there is an inherent gap (see, e.g., \cite{Feige2021}) from the current state-of-the-art algorithms that find a clique of size at most $\left(1+\frac{\alpha}{2}\right) \log_2{n}$ by using $\mathsf{Q}=O\left(n^{\alpha}\right), \alpha < 2$ edge queries. The detection problem of a planted clique using $\mathsf{Q}$ edge queries was studied in \cite{Racz2020}, where it was proved that any algorithm makes $\mathsf{Q}=O\left(n^2/k^2\right)$ adaptive edge queries cannot solve the detection problem with probability tending to 1, while scan-based test achieves detection by using $\mathsf{Q}=\Omega\left(\frac{n^2 \log^2{n}}{k^2}\right)$ edge queries. This gap was later closed by \cite{Huleihel2021}, proving that detection is impossible if $\mathsf{Q}=o\left(\frac{n^2 \log^2{n}}{k^2}\right)$ edge queries. Moreover, \cite{Huleihel2021} extended the results for the $\mathsf{PDS}$ case, and proposed a polynomial-time degree-based algorithm.

\paragraph{Notation.} For a simple undirected graph $\s{G}$, denote its adjacency matrix by $\mathsf{A}$. Also, let $V\left(\s{G}\right)$ and $E\left(\s{G}\right)$ be the vertex set and the edges set of $\s{G}$, respectively. We denote the size of any finite set $\calS$ by $\vert \calS \vert$. Let $m{\left(\s{G}\right)}$ be the \emph{maximal subgraph density} of $\s{G}$, as defined in Definition~\ref{def:maximal subgraph density}. Moreover, let $\Delta{\left(\s{G}\right)}$ be the \emph{maximal vertex degree} of $\s{G}$, that is, 
$\Delta{\left(\s{G}\right)} \triangleq \max_i \mathsf{deg}_{i}$, where $\mathsf{deg}_{i}$ is the degree of the $i^{\s{th}}$ vertex, that is, the number of edges with one end point at vertex $i$. %Let us remind that we have defined ${\mathsf{K}}_{\calR, \calL}$ as some $k_{\s{R}} \times k_{\s{L}}$ complete bipartite graph, in an empty graph on $n$ vertices. Thus: $\vert v\left({\mathsf{K}}_{\calR, \calL}\right) \vert = k_{\s{R}}+k_{\s{L}}$, $\vert e\left({\mathsf{K}}_{\calR, \calL}\right) \vert = k_{\s{R}} k_{\s{L}}$, $m\left({\mathsf{K}}_{\calR, \calL}\right) = \frac{k_{\s{R}} k_{\s{L}}}{k_{\s{R}}+k_{\s{L}}}$, and $\Delta{\left({\mathsf{K}}_{\calR, \calL}\right)} = \max\left(k_{\s{R}},k_{\s{L}}\right)$. 
Let $\mathsf{N}_{{\mathsf{K}}_{\calR, \calL}}\triangleq\binom{n}{k_{\s{R}}}\binom{n-k_{\s{R}}}{k_{\s{L}}}$ denote the number of all possible $k_{\s{R}} \times k_{\s{L}}$ copies of bipartite graphs in the complete graph over $[n]$ vertices; we label those copies as ${\mathsf{K}}^{\left(1\right)}_{\calR, \calL}, {\mathsf{K}}^{\left(2\right)}_{\calR, \calL},\ldots,{\mathsf{K}}_{\calR, \calL}^{(\mathsf{N}_{{\mathsf{K}}_{\calR, \calL}})}$, and denote by $\calK_{n,k_{\s{R}
},k_{\s{L}}}$ the set of all these copies. Let $\left[n\right]$ denote the set of integers $\{1,2,\ldots,n\}$, for any $n\in\mathbb{N}$. For probability measures $\mathbb{P}$ and $\mathbb{Q}$, let $d_{\s{TV}}(\mathbb{P},\mathbb{Q})=\frac{1}{2}\int |\mathrm{d}\mathbb{P}-\mathrm{d}\mathbb{Q}|$ and $\chi^2(\mathbb{P}||\mathbb{Q}) = \int\frac{(\mathrm{d}\mathbb{P}-\mathrm{d}\mathbb{Q})^2}{\mathrm{d}\mathbb{Q}}$ denote the total variation distance and the $\chi^2$-divergence, respectively. Let $\s{Bern}(p)$ and $\s{Binomial}(n,p)$ denote the Bernoulli and Binomial distributions with parameter $p$ and $n$, respectively. We use
standard asymptotic notation: for two positive sequences $\{a_n\}$ and $\{b_n\}$, we write $a_n = O(b_n)$ if $a_n\leq Cb_n$, for some absolute constant $C$ and for all $n$; $a_n = \Omega(b_n)$, if
$b_n = O(a_n)$; $a_n = \Theta(b_n)$, if $a_n = O(b_n)$ and $a_n = \Omega(b_n)$, $a_n = o(b_n)$ or $b_n = \omega(a_n)$, if $a_n/b_n\to0$, as $n\to\infty$. The notation $\ll$ refers to polynomially less than in $n$, namely, $a_n\ll b_n$  if $\liminf_{n\to\infty}\log_n a_n<\liminf_{n\to\infty}\log_n b_n$, e.g., $n\ll n^2$, but $n\not\ll n\log_2 n$. Finally, for $a,b\in\mathbb{R}$, we let $a\vee b\triangleq\max\{a,b\}$ and $a\wedge b\triangleq\min\{a,b\}$.

\section{Problem Formulation} \label{sec:model}
Let us describe the general setting of planted bipartite graph detection. We have a total of $n$ vertices. We pick $k_{\s{R}}$ out of $n$ vertices, uniformly at random, and denote this set by $\calR \subset \left[n\right]$. From the remaining $n-k_{\s{R}}$ vertices, we again pick another set of $k_{\s{L}}$ vertices, uniformly at random, and denote the obtained set by $\calL \subseteq \left[n\right] \backslash \calR$. We refer to $\calR$ and $\calL$ as the \emph{right} and \emph{left} planted sets, respectively. Let $\mathscr{E}\left(\calR, \calL\right)$ be the set of edges $\left \{\left(i,j\right) : i<j, i \in \calR, j \in \calL \text{ or } i \in \calL, j \in \calR\right\}$, and let ${\mathsf{K}}_{\calR, \calL}$ be an empty graph on $n$ vertices, except for all the elements in $\mathscr{E}\left(\calR, \calL\right)$. Let $V\left({\mathsf{K}}_{\calR, \calL}\right) \triangleq \calR \cup \calL$ and $E\left({\mathsf{K}}_{\calR, \calL}\right) \triangleq \mathscr{E}\left(\calR, \calL\right)$ denote the set of vertices and edges of $\mathsf{K}_{\calR, \calL}$, respectively. Note that $\abs{V\left({\mathsf{K}}_{\calR, \calL}\right)} = k_{\s{R}}+k_{\s{L}}$ and $\abs{E\left({\mathsf{K}}_{\calR, \calL}\right) } = k_{\s{R}} k_{\s{L}}$. We refer to ${\mathsf{K}}_{\calR, \calL}$ as the planted structure. Finally, let ${\mathsf{K}}^{p}_{\calR, \calL}$ be the random graph obtained by keeping each edge in $E\left({\mathsf{K}}_{\calR, \calL}\right)$, with probability $p$. Consequently, our detection problem can be phrased as the following simple hypothesis testing problem: under the null hypothesis $\calH_0$, the observed graph $\s{G}$ is an Erd\H{o}s-R\'{e}nyi random graph $\calG(n,q)$ with edge density $0<q<1$, which might be a function of $n$. Under the alternative hypothesis $\calH_1$, we first draw a random bipartite graph ${\mathsf{K}}^{p}_{\calR, \calL}$ on $n$ vertices, as described above, and then join every other two vertices, such that at least one of the vertices is outside $\calL\cup\calR$, with probability $q$; the resulted graph is $\s{G}$. Thus, the probability of a planted edge is $p$, and the probability of a non-planted edge is $q$. Without loss of generality, we assume throughout that $p>q$.\footnote{Any achievability or converse result for the $p > q$ case imply a corresponding result for $p<q$ by flipping each edge in the observed graph.} Another equivalent way to define the alternative distribution is as follows: a base graph $\s{\tilde{G}}$ is sampled from $\calG(n,q)$, and we observe the union of this base graph with ${\mathsf{K}}^{p'}_{\calR,\calL}$, i.e., $\s{G}=\s{\tilde{G}}\cup {\mathsf{K}}^{p'}_{\calR, \calL}$, where $p' = \frac{p-q}{1-q}$.\footnote{The probability $p'$ is chosen such that the probability of a planted edge in the union graph is $p$, i.e., $p = p'+q(1-p')$.} We denote the ensemble of graphs under the alternative hypothesis as $\calG(n,k_{\s{R}},k_{\s{L}},p,q)$. 
\begin{problem}[Detection problem]\label{prob:pdbs}
The planted dense bipartite subgraph (PDBS) detection problem, denoted by $\mathsf{PDBS}{\left(n,k_{\s{R}},k_{\s{L}},p,q\right)}$, refers to the problem of distinguishing between the hypotheses:
\begin{align}
\calH_0: \s{G} \sim \calG(n,q) \quad \mathsf{vs}. \quad \calH_1 : \s{G} \sim \calG(n,k_{\s{R}},k_{\s{L}},p,q).\label{eqn:super_hypo}   
\end{align}     
\end{problem}
We study the above framework in the asymptotic regime where $n\to \infty$, and $(k_{\s{R}},k_{\s{L}},p,q)$ may also change as a function of $n$. Throughout the paper we assume that $k_{\s{R}}+k_{\s{L}} = o\left(n\right)$. Also, ``dense regime" and ``sparse regime" correspond to the cases where $p,q=\Theta(1)$ and $p,q = \Theta(n^{-\alpha})$, for $\alpha\in(0,2]$, respectively. Observing $\s{G}$, the goal is to design a test $\phi:\ppp{0,1}^{\binom{n}{2}}\to \{0, 1\}$ that distinguishes between $\calH_0$ and $\calH_1$. Specifically, the average $\mathsf{Type}$ $\mathsf{I}$+$\mathsf{II}$ risk of a test $\phi$ is defined as
$\mathsf{R}_n (\phi) = \pr_{\calH_0}(\phi(\s{G}) = 1)+ \pr_{\calH_1}(\phi(\s{G}) = 0)$. A sequence of tests $\phi_n$ indexed by $n$ succeeds if $\lim_{n \to \infty} \mathsf{R}_n(\phi_n) = 0$, while if $\lim_{n \to \infty} \mathsf{R}_n(\phi_n)>0$, it fails. Finally, we denote the optimal risk by $\mathsf{R}_n^\star$, i.e., $\mathsf{R}_n^\star\triangleq\inf_{\phi_n}\mathsf{R}_n(\phi_n)$.  The above is summarized in the following definition.
\begin{definition}[Strong detection]\label{def:detection}
Let $\pr_{\calH_0}$ and $ \pr_{\calH_1}$ be the distributions of $\s{G}$ under the uniform and planted hypotheses, respectively. A possibly randomized algorithm $\phi_n(\s{G}) \in \{0, 1\}$ achieves strong detection, if its $\mathsf{Type}$ $\mathsf{I}$+$\mathsf{II}$ risk satisfies $\lim_{n \to \infty} \mathsf{R}_n(\phi_n)=0$. Conversely, if for all sequences of possibly randomized algorithms $\phi_n(\s{G}) \in \{0, 1\}$, the $\mathsf{Type}$ $\mathsf{I}$+$\mathsf{II}$ risk satisfies $\liminf_{n \to \infty} \mathsf{R}_n(\phi_n)>0$, then strong detection is impossible.
\end{definition}

\section{Main Results}\label{sec:main}
In this section, we determine the statistical and computational limits of the planted bipartite subgraph detection problem. We start by presenting our information-theoretic lower bounds, followed by our algorithmic upper bounds. Then, we state our computational lower bounds, and finally, present several phase diagrams in order to illustrate our results as a function of the various parameters in our model. %We start with Problem~\ref{prob:pdbs} and then move forward to Problem~\ref{prob:query-pdbs}.
%\subsection{Detecting a Planted Bipartite Graph (Problem~\ref{prob:pdbs} )} \label{pdbs Detection results}

\paragraph{Statistical lower bounds.} The following result give conditions under which strong detection is statistically (or, information-theoretically) impossible. For simplicity of notation, we define
\begin{align}
    \gamma_n(x,y)\triangleq\log\p{1+\frac{n}{xy}\frac{\log 2}{2}},
\end{align}
for any $x,y\in\mathbb{N}$. Also, recall that the $\chi^2$-divergence between two Bernoulli random variables $\s{Bern}(p)$ and $\s{Bern}(q)$ is given by $\chi^2(p||q) = \frac{(p-q)^2}{q(1-q)}$. 
\begin{theorem}[Statistical lower bounds] \label{thm:sparse}  
Consider the $\mathsf{PDBS}{\left(n,k_{\s{R}},k_{\s{L}},p,q\right)}$ detection problem in Problem~\ref{prob:pdbs}. %, and suppose that $p=n^{-\alpha}$, for $\alpha \in (0,2]$, and $\vert p-q \vert = O{\left(q\right)}$. 
Then, strong detection is impossible with $\s{R}_n^\star>1/2$, if:
\begin{subequations}
  \begin{align}
  &\chi^2(p||q)\leq \frac{n\cdot\gamma_n(k_{\s{R}},k_{\s{L}})}{8k_{\s{R}} k_{\s{L}}},\label{eqn:sparsecond1}\\
  \intertext{and at least one of the following two conditions hold:}
  &\chi^2(p||q)\leq \frac{\gamma_n(k_{\s{R}},k_{\s{R}})}{2k_{\s{L}}}\vee\frac{\gamma_n(k_{\s{L}},k_{\s{L}})}{2k_{\s{R}}},\label{eqn:densecond1}\\
   &\chi^2(p||q)\leq\p{\frac{1}{2k_{\s{L}}}\wedge\frac{n\cdot \gamma_n(k_{\s{L}},k_{\s{L}})}{8k_{\s{R}}^2}}\vee\p{\frac{1}{2k_{\s{R}}}\wedge\frac{n\cdot\gamma_n(k_{\s{R}},k_{\s{R}})}{8k_{\s{L}}^2}}.\label{eqn:sparsecond2}
   %&\chi^2(p||q)\leq\frac{1}{2k_{\s{R}}}\wedge\frac{n\cdot\gamma_n(k_{\s{R}},k_{\s{R}})}{8k_{\s{L}}^2}.\label{eqn:sparsecond3}
  \end{align}\label{eqn:sparsecond}
\end{subequations}
%\begin{align}
%\begin{split}
%\chi^2(p||q) = o\left(\min \left\{\frac{1}{k_{\s{R}}\wedge k_{\s{L}}}, \frac{n^2}{k_{\s{R}}^2 k_{\s{L}}^2}, \frac{n}{\left(k_{\s{R}}\vee k_{\s{L}}\right)^2}\right\}\right). \label{thm:sparse result}
%\end{split}
%\end{align}    
\end{theorem}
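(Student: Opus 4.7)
My plan is to follow the standard second moment / Le Cam route and produce a sufficient condition for $\chi^2(\mathbb{P}_{\calH_1}\|\mathbb{P}_{\calH_0})<1$, from which $R_n^\star>1/2$ follows by the Cauchy-Schwarz inequality $d_{\s{TV}}(\mathbb{P}_{\calH_1},\mathbb{P}_{\calH_0})\leq \frac{1}{2}\sqrt{\chi^2(\mathbb{P}_{\calH_1}\|\mathbb{P}_{\calH_0})}$ combined with $R_n^\star = 1 - d_{\s{TV}}(\mathbb{P}_{\calH_1},\mathbb{P}_{\calH_0})$. Under $\calH_1$, the distribution is a mixture $\mathbb{P}_{\calH_1} = \frac{1}{\mathsf{N}_{\mathsf{K}_{\calR,\calL}}}\sum_{\mathsf{K}\in\calK_{n,k_{\s{R}},k_{\s{L}}}}\mathbb{P}_\mathsf{K}$, and since under $\mathbb{P}_{\calH_0}$ the edges are i.i.d.\ $\s{Bern}(q)$, the standard per-edge identity yields
\begin{align*}
1+\chi^2(\mathbb{P}_{\calH_1}\|\mathbb{P}_{\calH_0}) = \mathbb{E}\left[(1+\chi^2(p\|q))^{I}\right],
\end{align*}
where $(\calR,\calL)$ and $(\calR',\calL')$ are independent uniform planted structures and $I=\abs{E(\mathsf{K}_{\calR,\calL})\cap E(\mathsf{K}_{\calR',\calL'})}$ is the number of common planted edges. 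A direct combinatorial count gives $I = ab+cd$ with $a=\abs{\calR\cap\calR'}$, $b=\abs{\calL\cap\calL'}$, $c=\abs{\calR\cap\calL'}$, $d=\abs{\calL\cap\calR'}$.

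\textbf{Controlling the MGF.} The key task is to show $\mathbb{E}[(1+\chi^2(p\|q))^{ab+cd}]\le 1+o(1)$ under the hypotheses \eqref{eqn:sparsecond1}--\eqref{eqn:sparsecond2}. The joint law of $(a,b,c,d)$ is a constrained multivariate hypergeometric law (due to $\calR\cap\calL=\emptyset$ and $\calR'\cap\calL'=\emptyset$), and each marginal concentrates around its mean $k_{\s{R}}^2/n$, $k_{\s{L}}^2/n$, or $k_{\s{R}}k_{\s{L}}/n$. I would decompose the expectation via a truncated second moment: split $\mathbb{E}[\cdot]$ according to three regimes of the overlap profile, and bound each using the elementary inequality $\log(1+x)\le x$ together with hypergeometric tail bounds of the form $\binom{k}{j}\binom{n-k}{k-j}/\binom{n}{k}\le (ek^2/(jn))^j$.

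\textbf{Matching the three conditions.} I expect the conditions in the theorem to emerge as follows. Condition \eqref{eqn:sparsecond1} controls the \emph{bulk} regime where both $ab$ and $cd$ are near their joint mean $\approx 2k_{\s{R}}^2k_{\s{L}}^2/n^2$; the form $\chi^2(p\|q)\cdot k_{\s{R}}k_{\s{L}}/n\lesssim\gamma_n(k_{\s{R}},k_{\s{L}})$ is exactly the requirement that the MGF exponent $I\log(1+\chi^2(p\|q))$ stays below $\log 2$ when summed over the entire overlap-profile enumeration, which is why the $\log 2 /2$ factor appears inside $\gamma_n$. Condition \eqref{eqn:densecond1} controls the \emph{symmetric dense extreme} where one of $\calR=\calR'$ or $\calL=\calL'$ dominates: in such an event, say $a=k_{\s{R}}$, one has $I\geq k_{\s{R}}\cdot b$ and the requirement $\chi^2(p\|q)\cdot 2k_{\s{L}} \lesssim \gamma_n(k_{\s{R}},k_{\s{R}})$ comes from bounding the MGF of $b$ with the entropy cost $\log\binom{n}{k_{\s{R}}}$ of fixing $\calR=\calR'$. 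Condition \eqref{eqn:sparsecond2} handles the \emph{asymmetric intermediate} regime where only one of the overlaps is close to its maximum (e.g., $a$ large but $b$ typical), yielding the min-of-two-quantities form that interpolates between the two prior bounds.

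\textbf{Main obstacle.} The central technical challenge is tracking the tails of $ab+cd$ uniformly across all overlap profiles without losing the sharp constants that produce $\gamma_n$. The sum $ab+cd$ is a product-of-sums whose MGF is not directly tractable; handling this requires inserting a counting bound for each admissible overlap profile $(a,b,c,d)$, invoking $\binom{n}{a,b,c,d,\ldots}$ upper bounds, and combining with the chi-squared factor in the exponent in a way that the resulting expression factorizes into at most three products, one of which is controlled by each of the three conditions. Getting the $\log 2/2$ constant correctly rather than an unspecified $O(1)$ factor will require using $\log(1+x)\ge x/(1+x)$ rather than the looser $\log(1+x)\le x$ at the final step.
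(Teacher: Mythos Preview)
Your setup is correct and matches the paper exactly through the identity $I=ab+cd$. Where you diverge is in the mechanism for bounding $\mathbb{E}[(1+\chi^2(p\|q))^{ab+cd}]$, and your interpretation of what each condition controls is off.

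The paper does \emph{not} truncate or split into overlap regimes. Instead it applies Cauchy--Schwarz once more to decouple the two products:
\[
\mathbb{E}^2\big[e^{\chi^2(p\|q)(ab+cd)}\big]\leq \mathbb{E}\big[e^{2\chi^2(p\|q)\,ab}\big]\cdot\mathbb{E}\big[e^{2\chi^2(p\|q)\,cd}\big],
\]
then stochastically dominates each hypergeometric overlap by an independent binomial (e.g.\ $a\preceq\mathsf{Binomial}(k_{\s{R}},2k_{\s{R}}/n)$), and finally computes the iterated binomial MGF in closed form. Condition \eqref{eqn:sparsecond1} is exactly what makes $\mathbb{E}[e^{2\chi^2\,cd}]\leq 2$; it is \emph{not} a bulk-regime condition but the bound for the cross-overlap term. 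Conditions \eqref{eqn:densecond1} and \eqref{eqn:sparsecond2} are two \emph{alternative} ways to bound the \emph{same} quantity $\mathbb{E}[e^{2\chi^2\,ab}]$: the first uses the crude deterministic bound $b\leq k_{\s{L}}$ (or $a\leq k_{\s{R}}$) before taking the remaining binomial MGF, and the second assumes $2\chi^2(p\|q)k_{\s{L}}\leq 1$ so that $e^x-1\leq x+x^2$ applies, allowing both binomial MGFs to be iterated without the crude cap. This is why the theorem has the logical structure ``\eqref{eqn:sparsecond1} AND [\eqref{eqn:densecond1} OR \eqref{eqn:sparsecond2}]''---the latter two are not controlling disjoint regimes but are competing bounds on one term.

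Your truncated-second-moment route could in principle be pushed through, but it is substantially more involved and your regime-to-condition correspondence would not survive: the $\gamma_n$ and $\log 2/2$ constants in the paper fall out directly from solving $[1+\tfrac{2k}{n}(e^t-1)]^k\leq 2$ for the binomial MGF, not from balancing entropy costs against tail probabilities over overlap profiles.
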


A few remarks are in order. First, it should be emphasized that while the theorem above hold for any values of $p$ and $q$, we will see later on that when specified to the dense and sparse regimes, some of the conditions in \eqref{eqn:sparsecond} become redundant. Second, we will see that in the polynomial-scale, the thresholds in \eqref{eqn:sparsecond} become much simpler analytically, and easy to interpret.  %Before we turn into our upper bounds, let us simplify the lower bounds in Theorems~\ref{thm:sparse} and \ref{thm:sparse}, and obtain expressions which are more convenient to interpret, but perhaps slightly looser. Starting with Theorem~\ref{thm:sparse}, 
Finally, we would like to mention here that the various constants in Theorem~\ref{thm:sparse} are not optimal; they were obtained with the goal of making the proof transparent, and because we focus mainly on the asymptotic behaviour of the thresholds for detection. 
\paragraph{Upper bounds.} We now describe our upper bounds. To that end, we propose three algorithms and analyze their performance. Define the statistics:
\begin{align}
\mathsf{Scan}{\left(\s{A}\right)} &\triangleq \max_{(\calR',\calL')\in\calS_{k_{\s{R}},k_{\s{L}}}}{\sum_{\substack{i \in \calR', j \in \calL'}}\s{A}_{ij}},\label{algo:scan non-complete}\\
\mathsf{Count}{\left(\s{A}\right)}&\triangleq \sum_{i<j} \s{A}_{ij},\label{algo:count}\\
    \mathsf{MaxDeg}{\left(\s{A}\right)}&\triangleq \max_{1\leq i\leq n}\sum_{j=1}^n \s{A}_{ij},\label{algo:degree}
\end{align}
where $\calS_{k_{\s{R}},k_{\s{L}}}\triangleq\{(\calR',\calL')\subset\left[n\right]\times\left[n\right]:\vert \calR' \vert = k_{\s{R}},\vert \calL' \vert = k_{\s{L}}, \calR'\cap\calL'=\emptyset\}$. Then, our proposed tests are $\phi_{\s{Scan}}\triangleq\Ind\ppp{\mathsf{Scan}{\left(\s{A}\right)}\geq \tau_{\mathsf{Scan}}}$, $\phi_{\s{Count}}\triangleq\Ind\ppp{\mathsf{Count}{\left(\s{A}\right)}\geq \tau_{\mathsf{Count}}}$, and $\phi_{\s{Deg}}\triangleq\Ind\ppp{\mathsf{MaxDeg}{\left(\s{A}\right)}\geq \tau_{\mathsf{Deg}}}$. As mentioned in the introduction, the tests above are rather classic and were proposed in many other related problems, e.g., \cite{kolar2011minimax,butucea2013detection,ma2015computational,Castro2014,Brennan2018,Hajek2015,Huleihel2022,pmlr-v99-brennan19a}. Also, note that the count and maximum degree tests exhibit polynomial computational complexity of $O(n^2)$ operations, and hence efficient. This is not the case for the scan test, which exhibits an exponential computational complexity, and thus inefficient. Indeed, the search space in \eqref{algo:scan non-complete} is of cardinality $|\calS_{k_{\s{R}},k_{\s{L}}}|=\binom{n}{k_{\s{R}}}\binom{n-k_{\s{R}}}{k_{\s{L}}}$, which is at least quasi-polynomial already when $k_{\s{R}}\vee k_{\s{L}} = \omega(1)$. The following result provides sufficient conditions under which the risk of each of the tests above is small. 
\begin{theorem}[Algorithmic upper bounds] \label{thm:pdbs}
Fix $\delta \in \left(0,1\right).$ Consider the $\mathsf{PDBS}{\left(n,k_{\s{R}},k_{\s{L}},p,q\right)}$ detection problem in Problem~\ref{prob:pdbs}, and suppose that $p$ and $q$ are such that $\vert p-q \vert = O{\left(q\right)}$ and $q<c<1$, for some constant $0<c<1$. Then,
\begin{enumerate}
    \item For $\tau_{\mathsf{Scan}} = k_{\s{R}} k_{\s{L}} \frac{p+q}{2}$, we have $\s{R}_n(\phi_{\s{Scan}}) \leq \delta$, provided that,
\begin{align}
\chi^2(p||q) = \Omega {\left(\frac{\log{\left(\frac{n}{k_{\s{R}}\vee k_{\s{L}}}\right)}+\frac{\log{\frac{2} {\delta}}}{k_{\s{R}}\vee k_{\s{L}}}}{k_{\s{R}}\wedge k_{\s{L}}}\right)}.\label{eqn:scanCondmain}
\end{align}
\item For $\tau_{\mathsf{Count}}=\binom{n}{2}q + k_{\s{R}} k_{\s{L}} \frac{p-q}{2}$, we have $\s{R}_n(\phi_{\s{Count}}) \leq \delta$, provided that,
\begin{align}
\chi^2(p||q) =\Omega{\left(\frac{n^2}{k^2_{\s{R}} k^2_{\s{L}}} \cdot \log{\frac{2}{\delta}}\right)}.\label{eqn:countCondmain}
\end{align}
\item For $\tau_{\mathsf{Deg}}=\left(n-1\right)q + k_{\s{R}}\vee k_{\s{L}} \frac{p-q}{2}$, we have $\s{R}_n(\phi_{\s{Deg}}) \leq \delta$, provided that,
\begin{align}
\chi^2(p||q) = \Omega{\left(\frac{n}{k^2_{\s{R}}\vee k^2_{\s{L}}} \cdot \left(\log n + \log{\frac{2}{\delta}}\right)\right)}.\label{eqn:degCondmain}
\end{align}
\end{enumerate}
\end{theorem}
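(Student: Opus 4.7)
My plan is to treat each of the three tests separately and show in each case that, for the stated threshold, the Type I and Type II errors are both at most $\delta/2$. In all three cases the basic tool is Bernstein's inequality applied to sums of independent Bernoulli variables, combined with an appropriate union bound. The assumption $|p-q|=O(q)$ with $q<c<1$ guarantees that the variance of any such sum is of the same order under $\calH_0$ and $\calH_1$ (both are $\Theta$ of the edge count times $q$), so the concentration bounds under both hypotheses have the same rate up to constants.

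For the \textbf{count test}, $\mathsf{Count}(\s{A})$ is $\s{Binomial}(\binom{n}{2},q)$ under $\calH_0$, while under $\calH_1$ its mean shifts up by exactly $k_{\s{R}}k_{\s{L}}(p-q)$. The threshold $\tau_{\mathsf{Count}}$ sits halfway between the two means, so I would apply Bernstein's inequality to both sides: the deviation is $s=k_{\s{R}}k_{\s{L}}(p-q)/2$ and the dominant variance term is $\Theta(n^2 q)$. Requiring $\exp(-\Theta(s^2/(n^2 q)))\leq \delta/2$ gives $(p-q)^2/q\gtrsim (n^2/(k_{\s{R}}^2k_{\s{L}}^2))\log(2/\delta)$, which is \eqref{eqn:countCondmain}. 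For the \textbf{degree test}, under $\calH_0$ every vertex degree is $\s{Binomial}(n-1,q)$, so a union bound over $n$ vertices followed by Bernstein gives $\Pr_{\calH_0}(\mathsf{MaxDeg}\geq \tau_{\mathsf{Deg}})\leq \delta/2$ provided $(k_{\s{R}}\vee k_{\s{L}})^2(p-q)^2/(nq)\gtrsim \log n+\log(2/\delta)$. Under $\calH_1$, I single out a vertex that connects to the larger side (say $i^\star\in\calR$ when $k_{\s{L}}\geq k_{\s{R}}$), whose expected degree is $(n-1)q+(k_{\s{R}}\vee k_{\s{L}})(p-q)$, a gap of $(k_{\s{R}}\vee k_{\s{L}})(p-q)/2$ above $\tau_{\mathsf{Deg}}$; Bernstein again yields the Type II bound, which is dominated by the Type I condition, matching \eqref{eqn:degCondmain}.

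The \textbf{scan test} is the delicate one, because the supremum is over the exponentially large family $\calS_{k_{\s{R}},k_{\s{L}}}$. Under $\calH_0$, each fixed $(\calR',\calL')$ contributes a $\s{Binomial}(k_{\s{R}}k_{\s{L}},q)$ sum, so by Bernstein with gap $s=k_{\s{R}}k_{\s{L}}(p-q)/2$ the tail is at most $\exp(-\Theta(k_{\s{R}}k_{\s{L}}(p-q)^2/q))=\exp(-\Theta(k_{\s{R}}k_{\s{L}}\chi^2(p||q)))$. The key step is to control the union bound by writing
\[
\abs{\calS_{k_{\s{R}},k_{\s{L}}}}\leq \binom{n}{k_{\s{R}}+k_{\s{L}}}\binom{k_{\s{R}}+k_{\s{L}}}{k_{\s{R}}}\leq\pp{\frac{2en}{k_{\s{R}}+k_{\s{L}}}}^{k_{\s{R}}+k_{\s{L}}},
\]
so $\log\abs{\calS_{k_{\s{R}},k_{\s{L}}}}=O((k_{\s{R}}\vee k_{\s{L}})\log(n/(k_{\s{R}}\vee k_{\s{L}})))$ (using $k_{\s{R}}+k_{\s{L}}=\Theta(k_{\s{R}}\vee k_{\s{L}})$). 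Balancing against the Bernstein exponent and dividing by $k_{\s{R}}k_{\s{L}}=(k_{\s{R}}\wedge k_{\s{L}})(k_{\s{R}}\vee k_{\s{L}})$ produces exactly the right-hand side of \eqref{eqn:scanCondmain}. For the Type II direction it suffices to examine the planted pair $(\calR^\star,\calL^\star)$: its sum is $\s{Binomial}(k_{\s{R}}k_{\s{L}},p)$ with mean $k_{\s{R}}k_{\s{L}}(p-q)/2$ above $\tau_{\mathsf{Scan}}$, and one more Bernstein step shows this lower-tail probability is $\exp(-\Omega(k_{\s{R}}k_{\s{L}}\chi^2(p||q)))\leq \delta/2$ whenever \eqref{eqn:scanCondmain} holds, since that condition implies $k_{\s{R}}k_{\s{L}}\chi^2(p||q)\gtrsim (k_{\s{R}}\vee k_{\s{L}})\log(n/(k_{\s{R}}\vee k_{\s{L}}))+\log(2/\delta)\geq \log(2/\delta)$.

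The main obstacle is the bookkeeping in the scan-test union bound: one must choose the right estimate on $\abs{\calS_{k_{\s{R}},k_{\s{L}}}}$ so that the $\log(n/(k_{\s{R}}\vee k_{\s{L}}))$ factor (rather than the looser $\log(n/(k_{\s{R}}\wedge k_{\s{L}}))$ or $\log n$) appears in the final bound. Everything else is a routine application of Bernstein's inequality with the observation that under $|p-q|=O(q)$ the $\chi^2$-divergence equals $(p-q)^2/(q(1-q))\asymp (p-q)^2/q$, which is what actually governs the concentration exponents.
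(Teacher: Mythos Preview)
Your proposal is correct and follows essentially the same approach as the paper: Bernstein/Chernoff on each side, with a union bound over $n$ vertices for the degree test and over $\calS_{k_{\s{R}},k_{\s{L}}}$ for the scan test. The only cosmetic difference is in bounding $|\calS_{k_{\s{R}},k_{\s{L}}}|$: the paper writes $\binom{n}{k_{\s{R}}}\binom{n-k_{\s{R}}}{k_{\s{L}}}\leq (en/(k_{\s{R}}\vee k_{\s{L}}))^{2(k_{\s{R}}\vee k_{\s{L}})}$ via the monotonicity of $a\mapsto(en/a)^a$, whereas you use the equivalent identity $\binom{n}{k_{\s{R}}+k_{\s{L}}}\binom{k_{\s{R}}+k_{\s{L}}}{k_{\s{R}}}$ before bounding; both yield $\log|\calS_{k_{\s{R}},k_{\s{L}}}|=O\big((k_{\s{R}}\vee k_{\s{L}})\log(n/(k_{\s{R}}\vee k_{\s{L}}))\big)$.
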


At this point, we would like to reduce the bounds in Theorems~\ref{thm:sparse} and \ref{thm:pdbs} to the dense and sparse regimes, starting with the former. In the dense case, we have $p,q=\Theta(1)$, and therefore $\chi^2(p||q) = \Theta(1)$ as well. Due to the fact that $k_{\s{R}},k_{\s{L}} = o(n)$, it is clear that the condition in \eqref{eqn:sparsecond1} is looser than the condition in \eqref{eqn:densecond1}, and thus, among the two only the later prevail. Accordingly, strong detection is impossible if \eqref{eqn:densecond1} or \eqref{eqn:sparsecond2} hold; as we show next \eqref{eqn:densecond1} can be complemented by nearly matching upper bounds, and thus \eqref{eqn:sparsecond2} is not needed. This is most easily seen when some of the poly-log factors are neglected. Specifically, note that when $k_{\s{R}}^2\vee k_{\s{L}}^2\gg n$ then \eqref{eqn:densecond1} can never hold because $\chi^2(p||q) = \Theta(1)$. Therefore, in this case it must be that $k_{\s{R}}^2\vee k_{\s{L}}^2\ll n$, in which case, \eqref{eqn:densecond1} boils down to the condition that $k_{\s{R}}\wedge k_{\s{L}} = O(\log n)$. To conclude, in the dense regime, strong detection is statistically impossible if,
\begin{align}
   k_{\s{R}}\wedge k_{\s{L}} = O(\log n)\quad\s{and}\quad k_{\s{R}}^2\vee k_{\s{L}}^2\ll n.\label{eqn:denselower}
\end{align}
This mimics the asymptotic performance of the scan and degree tests. Indeed, using Theorem~\ref{thm:pdbs}, we see from \eqref{eqn:scanCondmain} and \eqref{eqn:degCondmain} that strong detection is possible if
\begin{align}
   k_{\s{R}}\wedge k_{\s{L}} = \Omega(\log n)\quad\s{or}\quad k_{\s{R}}^2\vee k_{\s{L}}^2\gg n.\label{eqn:denseupper}
\end{align}
Specifically, the condition in the left-hand-side of \eqref{eqn:denseupper} guarantees that  $\s{R}_n(\phi_{\s{Scan}})\to0$, as $n\to\infty$, while the condition in the right-hand-side of \eqref{eqn:denseupper} guarantees that $\s{R}_n(\phi_{\s{Deg}})\to0$, as $n\to\infty$. 
Note that the above results imply that in the dense regime, the statistical barrier depends on the bipartite planted graph through the quantities $k_{\s{R}}\wedge k_{\s{L}}$ and $k_{\s{R}}\vee k_{\s{L}}$, which essentially are, up to a constant factor, the maximal subgraph density and the maximal vertex degree of the planted structure, respectively. Also, we would like to mention here that while in the above discussion we focused on the polynomial scale, especially, with regard to the conditions at the right-hand-side of \eqref{eqn:denselower} and \eqref{eqn:denseupper}, Theorems~\ref{thm:sparse} and \ref{thm:pdbs} give logarithmic factors as well. Finally, it is interesting to note that in the dense regime, the count test is redundant. This is not the case in the sparse regime as we explain next.

We move forward to the sparse regime, where $p,q = \Theta(n^{-\alpha})$, for some $\alpha\in(0,2]$, implying that $\chi^2(p||q) = \Theta(n^{-\alpha})$. In this case, we need both conditions \eqref{eqn:sparsecond1} and \eqref{eqn:sparsecond2} in Theorem~\ref{thm:sparse}. Let us simplify those conditions in the polynomial scale, as we for dense case. To that end, we use the fact that $\gamma_n(x,y)=\Omega(1\wedge\frac{n}{xy})$. Then, \eqref{eqn:sparsecond1} can be written as $\chi^2(p||q)= O\p{\frac{n}{k_{\s{R}}k_{\s{L}}}\wedge \frac{n^2}{k_{\s{R}}^2k_{\s{L}}^2}}$. Similarly, \eqref{eqn:sparsecond2} can be written as the union between $\chi^2(p||q)= O\p{\frac{1}{k_{\s{L}}}\wedge\frac{n}{k^2_{\s{R}}}\wedge \frac{n^2}{k_{\s{R}}^2k_{\s{L}}^2}}$ and $\chi^2(p||q)= O\p{\frac{1}{k_{\s{R}}}\wedge\frac{n}{k^2_{\s{L}}}\wedge \frac{n^2}{k_{\s{R}}^2k_{\s{L}}^2}}$. Intersecting these conditions we get that strong detection is impossible if $\chi^2(p||q)= O\p{\frac{1}{k_{\s{L}}}\wedge\frac{n^2}{k_{\s{R}}^2k_{\s{L}}^2}\wedge\frac{n}{k^2_{\s{R}}}}$ or $\chi^2(p||q)= O\p{\frac{1}{k_{\s{R}}}\wedge\frac{n^2}{k_{\s{R}}^2k_{\s{L}}^2}\wedge\frac{n}{k^2_{\s{L}}}}$, where we have used the fact that asymptotically $\frac{n}{k_{\s{R}}k_{\s{L}}}$ can never survive the minimum as $\frac{1}{k_{\s{R}}},\frac{1}{k_{\s{L}}}=o\p{\frac{n}{k_{\s{R}}k_{\s{L}}}}$, due to the assumption that $k_{\s{R}},k_{\s{L}}=o(n)$. Finally, it is not difficult to check that the union of the last two conditions give,
\begin{align}
    \chi^2(p||q)= O\p{\frac{1}{k_{\s{R}}\wedge k_{\s{L}}}\wedge\frac{n^2}{k_{\s{R}}^2k_{\s{L}}^2}\wedge\frac{n}{k^2_{\s{R}}\vee k^2_{\s{L}}}}.\label{eqn:sparselower}
\end{align}
This barrier mimics the asymptotic performance of the scan, count, and degree tests. Indeed, using Theorem~\ref{thm:pdbs}, we see from \eqref{eqn:scanCondmain}, \eqref{eqn:countCondmain}, and \eqref{eqn:degCondmain} that strong detection is possible if
\begin{align}
   \chi^2(p||q) = \Omega\left(\frac{\log n}{k_{\s{R}}\wedge k_{\s{L}}}\wedge\frac{n^2\log n}{k_{\s{R}}^2 k_{\s{L}}^2}\wedge\frac{n\log n}{k^2_{\s{R}}\vee k^2_{\s{L}}}\right).\label{eqn:sparseupper}
\end{align}
Indeed, each of the terms from left to right inside the $\Omega(\cdot)$ expression at the right-hand-side of \eqref{eqn:sparseupper}, guarantee that $\s{R}_n(\phi_{\s{Scan}})\to0$, $\s{R}_n(\phi_{\s{Count}})\to0$, and $\s{R}_n(\phi_{\s{Deg}})\to0$, respectively, as $n\to\infty$. Later on, we illustrate the above bounds using phase diagrams that illustrate the tradeoff between the various parameters as a function of $n$. One evident and important observation here is that both the efficient count and degree tests, and the exhaustive scan test, are needed to attain the information-theoretic lower bounds (up to poly-log factors). As discussed above, however, the scan test is not efficient. Accordingly, from the computational point of view, one may wonder whether the inefficient scan test can be replaced by an efficient test with the same performance guarantees, i.e., one that achieves the same statistical barrier as that of the scan test. We next give evidence that, based on the \emph{low-degree polynomial conjecture}, efficient algorithms that run in polynomial-time do not exist in the regime where the scan test succeeds and the count and degree tests fail. 

\paragraph{Computational lower bounds.} Inspecting our lower and upper bounds above, it can be seen that there is a gap in terms of what can be achieved statistically and efficiently. In particular, in the region where $k^2_{\s{R}}\vee k^2_{\s{L}} \ll n$, for the dense regime, or the region where $\frac{1}{k_{\s{R}}\wedge k_{\s{L}}}\ll\chi^2(p||q)\ll \frac{n^2}{k^2_{\s{R}} k^2_{\s{L}}}\wedge\frac{n}{k^2_{\s{R}}\vee k^2_{\s{L}}}$, for the sparse regime, while the detection problem can be solved by an exhaustive search, we do not have a polynomial-time algorithm. As mentioned above, we conjecture that, in fact, polynomial-time algorithms do not exist in this regions of parameters, and we next provide an evidence for this claim. To that end, we start with a brief introduction to the method of \emph{low-degree polynomials}. 

The basic premise of this method is that all polynomial-time algorithms for solving detection problems are captured by polynomials of low-degree. By now, there is growing strong evidence in support of this conjecture. The ideas below were first developed in a sequence of works in the sum-of-squares optimization literature \cite{barak2016nearly,Hopkins18,hopkins2017bayesian,hopkins2017power}. We follow the notations and definition in \cite{Hopkins18,Dmitriy19}. Any distribution $\pr_{\calH_0}$ on $\Omega_n=\{0,1\}^{\binom{n}{2}}$ induces an inner product of measurable functions $f,g:\Omega_n\to\mathbb{R}$ given by $\left\langle f,g \right\rangle_{\calH_0} = \bE_{\calH_0}[f(\s{G})g(\s{G})]$, and norm $\norm{f}_{\calH_0} = \left\langle f,f \right\rangle_{\calH_0}^{1/2}$. We Let $L^2(\pr_{\calH_0})$ denote the Hilbert space consisting of functions $f$ for which $\norm{f}_{\calH_0}<\infty$, endowed with the above inner product and norm. 
%In the computationally-unbounded case, the Neyman-Pearson lemma shows that the likelihood ratio test achieves the optimal tradeoff between $\mathsf{Type}$-$\mathsf{I}$ and $\mathsf{Type}$-$\mathsf{II}$ error probabilities. Furthermore, it is well-known that the same test optimally distinguishes $\pr_{\calH_0}$ from $\pr_{\calH_1}$ in the $L^2$ sense. In fact, denoting by $\s{L}_n\triangleq\pr_{\calH_1}/\pr_{\calH_0}$ the likelihood ratio, then the second moment method for contiguity shows that if $\norm{\s{L}_n}_{\calH_0}^2$ remains bounded as $n\to\infty$, then $\pr_{\calH_1}$ is contiguous to $\pr_{\calH_0}$. This implies that $\pr_{\calH_1}$ and $\pr_{\calH_0}$ are statistically indistinguishable, i.e., no test can have both $\mathsf{Type}$-$\mathsf{I}$ and $\mathsf{Type}$-$\mathsf{II}$ error probabilities tending to zero. 
The underlying idea is to find the low-degree polynomial that best distinguishes $\pr_{\calH_0}$ from $\pr_{\calH_1}$ in the $L^2$ sense. Let $\calL_{n,\leq\s{D}}\subset L^2(\pr_{\calH_0})$ denote the linear subspace of polynomials $\Omega_n\to\mathbb{R}$ of degree at most $\s{D}\in\mathbb{N}$. %We define further $\calP_{\leq \s{D}}: L^2(\pr_{\calH_0})\to\calV_{n,\leq\s{D}}$ the orthogonal projection operator. 
Then, the \emph{$\s{D}$-low-degree likelihood ratio} $\s{L}_{n,\leq \s{D}}$ is the projection of a function $\s{L}_{n}$ to the span of coordinate-degree-$\s{D}$ functions, where the projection is orthogonal with respect to the inner product $\left\langle \cdot,\cdot \right\rangle_{\calH_0}$. As discussed above, the likelihood ratio optimally distinguishes $\pr_{\calH_0}$ from $\pr_{\calH_1}$ in the $L^2$ sense. The next lemma shows that over the set of low-degree polynomials, the $\s{D}$-low-degree likelihood ratio exhibits the same property.
\begin{lemma}[Optimally of $\s{L}_{n,\leq \s{D}}$ {\cite{hopkins2017bayesian,hopkins2017power,Dmitriy19}}]\label{lem:Dmitriy}
Consider the following optimization problem:
\begin{equation}
\begin{aligned}
\mathrm{max}
\;\bE_{\calH_1}f(\s{G})
\quad\mathrm{s.t.}
\quad\bE_{\calH_0}f^2(\s{G}) = 1,\; f\in\calL_{n,\leq\s{D}},
\end{aligned}\label{eqn:optimizationProblem}
\end{equation}
Then, the unique solution $f^\star$ for \eqref{eqn:optimizationProblem} is the $\s{D}$-low degree likelihood ratio $f^\star = \s{L}_{n,\leq \s{D}}/\norm{\s{L}_{n,\leq \s{D}}}_{\calH_0}$, and the value of the optimization problem is $\norm{\s{L}_{n,\leq \s{D}}}_{\calH_0}$. 
\end{lemma}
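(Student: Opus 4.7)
The plan is to rewrite the objective as an inner product in $L^2(\pr_{\calH_0})$ and then reduce to a one-line Cauchy--Schwarz argument inside the finite-dimensional subspace $\calL_{n,\leq\s{D}}$. First, I would use the change-of-measure identity
\begin{align}
\bE_{\calH_1} f(\s{G}) \;=\; \bE_{\calH_0}\!\left[f(\s{G})\,\s{L}_n(\s{G})\right] \;=\; \left\langle f,\s{L}_n\right\rangle_{\calH_0},
\end{align}
which is legitimate because in our setting $\pr_{\calH_1}\ll\pr_{\calH_0}$ (both measures are supported on the finite set $\{0,1\}^{\binom{n}{2}}$ and, since $0<q<1$, $\pr_{\calH_0}$ gives positive mass to every realization). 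So the problem becomes maximizing the inner product $\langle f,\s{L}_n\rangle_{\calH_0}$ subject to $f\in\calL_{n,\leq\s{D}}$ and $\|f\|_{\calH_0}=1$.

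The second step is to replace $\s{L}_n$ by its low-degree projection. By definition, $\s{L}_{n,\leq\s{D}}$ is the orthogonal projection of $\s{L}_n$ onto $\calL_{n,\leq\s{D}}$, so the residual $\s{L}_n-\s{L}_{n,\leq\s{D}}$ lies in the orthogonal complement of $\calL_{n,\leq\s{D}}$ in $L^2(\pr_{\calH_0})$. Hence for every feasible $f\in\calL_{n,\leq\s{D}}$,
\begin{align}
\left\langle f,\s{L}_n\right\rangle_{\calH_0} \;=\; \left\langle f,\s{L}_{n,\leq\s{D}}\right\rangle_{\calH_0} + \underbrace{\left\langle f,\s{L}_n-\s{L}_{n,\leq\s{D}}\right\rangle_{\calH_0}}_{=\,0} \;=\; \left\langle f,\s{L}_{n,\leq\s{D}}\right\rangle_{\calH_0}.
\end{align}

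Finally, I would apply the Cauchy--Schwarz inequality inside the Hilbert space $\calL_{n,\leq\s{D}}$ (a finite-dimensional subspace of $L^2(\pr_{\calH_0})$, so $\|\s{L}_{n,\leq\s{D}}\|_{\calH_0}<\infty$ automatically):
\begin{align}
\left\langle f,\s{L}_{n,\leq\s{D}}\right\rangle_{\calH_0} \;\leq\; \|f\|_{\calH_0}\cdot\|\s{L}_{n,\leq\s{D}}\|_{\calH_0} \;=\; \|\s{L}_{n,\leq\s{D}}\|_{\calH_0},
\end{align}
with equality iff $f$ is a nonnegative scalar multiple of $\s{L}_{n,\leq\s{D}}$; combined with the normalization $\|f\|_{\calH_0}=1$, this forces the unique maximizer $f^\star = \s{L}_{n,\leq\s{D}}/\|\s{L}_{n,\leq\s{D}}\|_{\calH_0}$ and attained value $\|\s{L}_{n,\leq\s{D}}\|_{\calH_0}$. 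The entire argument is essentially routine once the change of measure is set up; the only mild subtleties are checking absolute continuity (trivial here) and noting that the orthogonality of $f$ to $\s{L}_n-\s{L}_{n,\leq\s{D}}$ is immediate from the definition of the projection, so there is really no substantive obstacle to overcome.
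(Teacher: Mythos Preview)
The paper does not actually prove this lemma; it is quoted from the cited references \cite{hopkins2017bayesian,hopkins2017power,Dmitriy19} and used as a black box. Your argument is the standard one found in those references: change of measure to rewrite $\bE_{\calH_1}f$ as $\langle f,\s{L}_n\rangle_{\calH_0}$, kill the high-degree part of $\s{L}_n$ by orthogonality of the projection, then apply Cauchy--Schwarz. It is correct. One small point worth making explicit for the uniqueness claim: you need $\|\s{L}_{n,\leq\s{D}}\|_{\calH_0}>0$, which holds because the constant function $1\in\calL_{n,\leq\s{D}}$ and $\langle 1,\s{L}_n\rangle_{\calH_0}=\bE_{\calH_0}\s{L}_n=1$, so the projection has norm at least $1$.
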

As was mentioned above, in the computationally-unbounded regime, an important property of the likelihood ratio is that if $\norm{\s{L}_n}_{\calH_0}$ is bounded then $\pr_{\calH_0}$ and $\pr_{\calH_1}$ are statistically indistinguishable. The following conjecture states that a computational analogue of this property holds, with $\s{L}_{n,\leq \s{D}}$ playing the role of the likelihood ratio. In fact it also postulates that polynomials of degree $\approx\log n$ are a proxy for polynomial-time algorithms. The conjecture below is based on \cite{Hopkins18,hopkins2017bayesian,hopkins2017power}, and \cite[Conj. 2.2.4]{Hopkins18}. We give an informal statement of this conjecture which appears in \cite[Conj. 1.16]{Dmitriy19}. For a precise statement, we refer the reader to, e.g., \cite[Conj. 2.2.4]{Hopkins18} and \cite[Sec. 4]{Dmitriy19}.
\begin{conjecture}[Low-degree conj., informal]\label{conj:1}
Given a sequence of probability measures $\pr_{\calH_0}$ and $\pr_{\calH_1}$, if there exists $\epsilon>0$ and $\s{D} = \s{D}(n)\geq (\log n)^{1+\epsilon}$, such that $\norm{\s{L}_{n,\leq \s{D}}}_{\calH_0}$ remains bounded as $n\to\infty$, then there is no polynomial-time algorithm that distinguishes $\pr_{\calH_0}$ and $\pr_{\calH_1}$.
\end{conjecture}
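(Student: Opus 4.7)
The plan is to split the conjecture into two parts that together would yield the stated implication. The first part is the straightforward \emph{low-degree statistical} direction: if $\norm{\s{L}_{n,\leq\s{D}}}_{\calH_0}$ remains bounded as $n\to\infty$, then no polynomial $f\in\calL_{n,\leq\s{D}}$ can strongly distinguish $\pr_{\calH_0}$ from $\pr_{\calH_1}$. This follows directly from Lemma~\ref{lem:Dmitriy}: any putative low-degree separator satisfies
\begin{align}
\bE_{\calH_1} f - \bE_{\calH_0}f \leq \sqrt{\norm{\s{L}_{n,\leq\s{D}}}^2_{\calH_0}-1}\cdot\sqrt{\Var_{\calH_0}(f)},
\end{align}
by Cauchy--Schwarz after normalizing $f$, which rules out mean-separation relative to fluctuations once the prefactor is $O(1)$. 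A thresholding test based on $f$ then has risk bounded away from zero by Chebyshev, ruling out all degree-at-most-$\s{D}$ polynomial distinguishers.

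The second part is the genuine content of the conjecture, namely that the class of polynomial-time algorithms is no more powerful than $\calL_{n,\leq\s{D}}$ with $\s{D}=\Theta(\log n)$. Here I would not attempt a direct proof but rather marshal the body of evidence accumulated in \cite{Hopkins18,hopkins2017bayesian,hopkins2017power}. The heuristic argument is that every known polynomial-time algorithm for planted-model detection --- spectral thresholding, semidefinite relaxations of bounded degree, approximate message passing, belief propagation --- is either already a low-degree polynomial in the observed entries or admits a truncation to one of degree $\mathrm{polylog}(n)$ with negligible loss in $L^2$ performance. One would then try to formalize this by showing that each primitive operation (matrix power iteration, sum-of-squares rounding of bounded level, local message update) admits a degree-$\s{D}$ polynomial approximation that preserves the $L^2$ separation up to lower-order error. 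Combining such a reduction with the first part would close the implication.

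The hard part --- and the reason Conjecture~\ref{conj:1} is a conjecture rather than a theorem --- is exactly this second step in its full generality. Ruling out exotic polynomial-time algorithms that exploit dependencies beyond the reach of any polylogarithmic-degree approximation appears out of reach of current techniques, and would likely demand average-case complexity barriers strictly stronger than those accessible via planted-clique reductions. Consequently, my proposal (mirroring the paper's actual use of the conjecture) is to treat the second step as a black box and instead verify its hypothesis in the $\mathsf{PDBS}$ model: I would expand $\s{L}_{n,\leq\s{D}}$ in the $q$-biased Fourier basis over $\{0,1\}^{\binom{n}{2}}$, identify $\norm{\s{L}_{n,\leq\s{D}}}_{\calH_0}^2$ as a weighted sum indexed by subgraphs of ${\mathsf{K}}_{\calR,\calL}$ governed by the intersection statistics of two independent planted copies, and show this sum is $O(1)$ throughout the conjecturally hard regime. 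This reduces the computational lower bound to a concrete subgraph-counting estimate, while leaving the foundational conjecture itself untouched.
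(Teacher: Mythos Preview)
The statement is labeled a conjecture in the paper and is not proved there; the paper simply cites \cite{Hopkins18,hopkins2017bayesian,hopkins2017power,Dmitriy19} for its provenance and then uses it as a standing hypothesis in Corollary~\ref{cor:gap}. Your proposal correctly identifies this: the second step you describe---that polynomial-time algorithms are captured by degree-$\mathrm{polylog}(n)$ polynomials---is precisely the open content of the conjecture, and the paper makes no attempt to justify it beyond pointing to the same body of heuristic evidence you summarize. Your first step (that bounded $\norm{\s{L}_{n,\leq\s{D}}}_{\calH_0}$ rules out low-degree distinguishers via Lemma~\ref{lem:Dmitriy} and Cauchy--Schwarz) is a standard and correct observation, though the paper does not spell it out either.

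Your final paragraph matches exactly what the paper actually does in the proof of Theorem~\ref{thm:gap}: expand $\s{L}_{n,\leq\s{D}}$ in the $q$-biased Fourier characters $\chi_\alpha$, observe that only bipartite $\alpha$ contribute (since any subgraph of $\mathsf{K}_{\calR,\calL}$ is bipartite), compute $\pr_{\mathsf{K}_{\calR,\calL}}[\alpha_{\s{Bip}}\subseteq\mathsf{K}_{\calR,\calL}]$ combinatorially, and bound the resulting sum over bipartite subgraphs with at most $\s{D}$ edges. So your proposal is correct and fully aligned with the paper's treatment---namely, to leave Conjecture~\ref{conj:1} unproved and instead verify its hypothesis in the $\mathsf{PDBS}$ model.
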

In the sequel, we will rely on Conjecture~\ref{conj:1} to give an evidence for the statistical-computational gap observed in our problem. At this point we would like to mention \cite[Hypothesis 2.1.5]{Hopkins18}, which states a more general form of Conjecture~\ref{conj:1} in the sense that it postulates that degree-$\s{D}$ polynomials are a proxy for $n^{O(D)}$-time algorithms. %Note that if $\norm{\s{L}_{n,\leq \s{D}}}_{\calH_0} = O(1)$, then we expect strong detection in time $\s{T}(n) = e^{\s{D}(n)}$ to be impossible.

\begin{theorem}\label{thm:gap}
Consider the $\mathsf{PDBS}{\left(n,k_{\s{R}},k_{\s{L}},p,q\right)}$ detection problem in Problem~\ref{prob:pdbs}. 
\begin{enumerate}
    \item In the regime where $p,q=\Theta(1)$, if $(k^2_{\s{R}}\vee k^2_{\s{L}}) \ll n$, then $\norm{\s{L}_{n,\leq \s{D}}}_{\calH_0}\leq O(1)$, for any $\s{D}\leq C\log n$ and $C>0$. Conversely, there exists a positive integer $\s{D}$ such that if $(k^2_{\s{R}}\vee k^2_{\s{L}}) \gg n$, then $\norm{\s{L}_{n,\leq \s{D}}}_{\calH_0}\geq \omega(1)$.
    \item In the regime where $p,q=\Theta(n^{-\alpha})$, for some $\alpha\in(0,2]$, if $\frac{1}{(k_{\s{R}}\wedge k_{\s{L}})}\ll\chi^2(p||q)\ll \frac{n^2}{k^2_{\s{R}} k^2_{\s{L}}}\wedge\frac{n}{(k^2_{\s{R}}\vee k^2_{\s{L}})}$, then $\norm{\s{L}_{n,\leq \s{D}}}_{\calH_0}\leq O(1)$, for any $\s{D} = n^{o(1)}$. Conversely, there exists a positive integer $\s{D}$ such that if $\chi^2(p||q)\gg \frac{n^2}{k^2_{\s{R}} k^2_{\s{L}}}\wedge\frac{n}{(k^2_{\s{R}}\vee k^2_{\s{L}})}$, then $\norm{\s{L}_{n,\leq \s{D}}}_{\calH_0}\geq \omega(1)$.
\end{enumerate}
\end{theorem}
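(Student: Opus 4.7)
\textbf{Proof plan for Theorem~\ref{thm:gap}.} The plan is to compute $\norm{\s{L}_{n,\leq\s{D}}}_{\calH_0}^2$ explicitly via the Fourier decomposition associated with the product Bernoulli measure $\pr_{\calH_0}=\calG(n,q)$, in the spirit of \cite{Hopkins18,Dmitriy19}. For each potential edge $e$, the normalized character $\chi_e(x)\triangleq(x-q)/\sqrt{q(1-q)}$ is mean-zero and unit-variance under $\pr_{\calH_0}$, so the monomials $\{\chi_S\triangleq\prod_{e\in S}\chi_e:S\subseteq\binom{[n]}{2}\}$ form an orthonormal basis of $L^2(\pr_{\calH_0})$. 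Conditioning on the planted pair $(\calR,\calL)$, every edge in $S\cap\mathscr{E}(\calR,\calL)$ contributes a factor $(p-q)/\sqrt{q(1-q)}$ while any edge outside contributes a mean-zero factor, yielding
\begin{align}
\bE_{\calH_1}[\chi_S]=\chi^2(p||q)^{|S|/2}\cdot\pr\left[S\subseteq\mathscr{E}(\calR,\calL)\right].\notag
\end{align}
Squaring and summing over $|S|\leq\s{D}$, and introducing an independent copy $(\calR',\calL')$ of the planted pair, one arrives at the master identity
\begin{align}
\norm{\s{L}_{n,\leq\s{D}}}_{\calH_0}^2=\bE\left[\sum_{k=0}^{\s{D}\wedge N}\binom{N}{k}\chi^2(p||q)^k\right],\quad N\triangleq\abs{\mathscr{E}(\calR,\calL)\cap\mathscr{E}(\calR',\calL')}.\notag
\end{align}

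The overlap admits the bipartite-aware decomposition $N=\abs{\calR\cap\calR'}\cdot\abs{\calL\cap\calL'}+\abs{\calR\cap\calL'}\cdot\abs{\calL\cap\calR'}$, whose four factors are marginally hypergeometric with typical sizes $k_{\s{R}}^2/n$, $k_{\s{L}}^2/n$ and $k_{\s{R}}k_{\s{L}}/n$. For the upper bound on $\norm{\s{L}_{n,\leq\s{D}}}_{\calH_0}^2$ (the hard regime), I would split the expectation at a suitable threshold on $N$: on the bulk event one uses $\binom{N}{k}\leq(eN/k)^k$ and sums over $k$ geometrically after bounding $N\cdot\chi^2(p||q)$, while on the tail event one invokes the sharp hypergeometric bound $\pr[\abs{\calR\cap\calR'}\geq m]\leq(ek_{\s{R}}^2/(mn))^m$ (and its analogues), which when multiplied by $\chi^2(p||q)^k\leq\chi^2(p||q)^m$ decays super-polynomially in the conjecturally hard region. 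The truncation at $\s{D}$ becomes binding in the dense case, where $\chi^2(p||q)=\Theta(1)$ and one exploits $\s{D}\leq C\log n$, and is essentially slack in the sparse case where $\s{D}=n^{o(1)}$ is already much larger than the typical overlap.

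For the converse (unboundedness in the easy regime), I would lower bound $\norm{\s{L}_{n,\leq\s{D}}}_{\calH_0}^2$ by restricting the Fourier sum to a carefully chosen family of small $S$. Summing over single edges $|S|=1$ gives a contribution of order $\chi^2(p||q)\cdot k_{\s{R}}^2k_{\s{L}}^2/n^2$, which diverges exactly when $\chi^2(p||q)\gg n^2/(k_{\s{R}}^2k_{\s{L}}^2)$, recovering the count-test threshold. Taking $S$ to be a star of $d$ edges sharing a common vertex and optimizing over $d$ yields the degree-test threshold $\chi^2(p||q)\gg n/(k_{\s{R}}^2\vee k_{\s{L}}^2)$. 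Specializing these to $\chi^2(p||q)=\Theta(1)$ in the dense regime collapses both boundaries to $k_{\s{R}}^2\vee k_{\s{L}}^2\gg n$, as asserted in part 1; in the sparse regime one obtains the claimed $\frac{n^2}{k_{\s{R}}^2k_{\s{L}}^2}\wedge\frac{n}{k_{\s{R}}^2\vee k_{\s{L}}^2}$ threshold of part 2.

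The main technical obstacle will be the achievability bound. Since $N$ is a sum of products of dependent hypergeometric factors, its truncated moment generating function is heavier-tailed than that of a single hypergeometric, and the two ``matched'' ($\calR\cap\calR',\calL\cap\calL'$) and ``crossed'' ($\calR\cap\calL',\calL\cap\calR'$) contributions to $N$ must be controlled simultaneously. Choosing the bulk/tail threshold so that the resulting bound matches the count, degree, and scan boundaries in a unified manner rather than any one of them in isolation, while respecting both the $\s{D}\leq C\log n$ budget in the dense regime and the $\s{D}=n^{o(1)}$ budget in the sparse regime, is the delicate point of the argument.
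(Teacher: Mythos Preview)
Your two-replica ``master identity'' is correct and gives a genuinely different route from the paper's proof: the paper never passes to the overlap $N$ but instead keeps the sum $\norm{\s{L}_{n,\leq\s{D}}}_{\calH_0}^2=1+\sum_{0<|\alpha|\leq\s{D}}[\chi^2(p||q)]^{|\alpha|}\pr^2[\alpha\subseteq\mathsf{K}_{\calR,\calL}]$ in subgraph form, observes that only bipartite $\alpha$ contribute, stratifies by the shape $(\ell_1,\ell_2,m)=(v_{\s{R}}(\alpha),v_{\s{L}}(\alpha),|\alpha|)$, bounds $|\calB_{\ell_1,\ell_2,m}|\leq n^{\ell_1+\ell_2}\s{D}^{2m}$, and then reduces to three geometric series in $(\ell_1,\ell_2)$. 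Your converse plan (single edges for the count threshold, stars for the degree threshold) is essentially the same as the paper's, which also isolates the $\ell_1=\ell_2=1$ and the $\ell_2=1$, $\ell_1$-growing terms.

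There is, however, a real conceptual gap in your sparse upper bound: the truncation at $\s{D}$ is \emph{not} ``essentially slack''. Without the truncation your master identity is exactly $\bE[(1+\chi^2(p||q))^N]$, i.e., the second moment computed in Theorem~\ref{thm:sparse}, and that quantity diverges throughout the hard regime $\frac{1}{k_{\s{R}}\wedge k_{\s{L}}}\ll\chi^2(p||q)$. The divergence is driven precisely by the tail event where the overlap factors are atypically large (e.g., $|\calL\cap\calL'|$ of order $k_{\s{L}}$), and on that event $(1+\lambda)^N\approx e^{\lambda N}$ beats the hypergeometric probability once $\lambda(k_{\s{R}}\wedge k_{\s{L}})\gg 1$. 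Your proposed tail bound ``$\pr[\,|\calR\cap\calR'|\geq m]\cdot\lambda^k\leq\lambda^m$'' does not control this, because the inner sum over $k$ runs from $0$ to $\s{D}\wedge N$ and for small $k$ one has $\lambda^k\approx 1$; you cannot simply pull out a factor $\lambda^m$. What actually saves you on the tail is that $\sum_{k\leq\s{D}}\binom{N}{k}\lambda^k\leq(\s{D}+1)(eN\lambda/\s{D})^{\s{D}}$ is only polynomial in $N$ when $\s{D}=n^{o(1)}$, rather than exponential, and this is exactly where the $\s{D}$-cutoff enters. So either carry the truncation through the tail analysis explicitly, or switch to the paper's shape-enumeration, where the role of $\s{D}$ is transparent via the bound $|\calB_{\ell_1,\ell_2,m}|\leq n^{\ell_1+\ell_2}\s{D}^{2m}$ and the standing assumption $\s{D}^2\chi^2(p||q)<1$.
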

Together with Conjecture~\ref{conj:1}, Theorem~\ref{thm:gap} implies that if we take degree-$\log n$ polynomials as a proxy for all efficient algorithms, our calculations predict that an $n^{O(\log n)}$ algorithm does not exist when $(k^2_{\s{R}}\vee k^2_{\s{L}}) \ll n$ in the dense regime, and $\frac{1}{(k_{\s{R}}\wedge k_{\s{L}})}\ll\chi^2(p||q)\ll \frac{n^2}{k^2_{\s{R}} k^2_{\s{L}}}\wedge\frac{n}{(k^2_{\s{R}}\vee k^2_{\s{L}})}$ in the sparse regime. This is summarized in the following corollary.
\begin{corollary}[Computational lower bound]\label{cor:gap}
Consider the $\mathsf{PDBS}{\left(n,k_{\s{R}},k_{\s{L}},p,q\right)}$ detection problem in Problem~\ref{prob:pdbs}, and assume that Conjecture~\ref{conj:1} holds. A polynomial-time algorithm that achieves strong detection does not exist in the following situations:
\begin{enumerate}
    \item If $p,q=\Theta(1)$ and $(k^2_{\s{R}}\vee k^2_{\s{L}}) \ll n$.
    \item If $p,q = \Theta(n^{-\alpha})$, for some $\alpha\in(0,2]$, and $\frac{1}{(k_{\s{R}}\wedge k_{\s{L}})}\ll\chi^2(p||q)\ll \frac{n^2}{k^2_{\s{R}} k^2_{\s{L}}}\wedge\frac{n}{(k^2_{\s{R}}\vee k^2_{\s{L}})}$.
\end{enumerate}
\end{corollary}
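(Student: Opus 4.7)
The proof will follow the standard low-degree polynomial framework applied to a planted Bernoulli detection problem. For each potential edge $e$, let $\phi_e(\s{A}) \triangleq (\s{A}_e - q)/\sqrt{q(1-q)}$, so that the products $\phi_S \triangleq \prod_{e \in S}\phi_e$ form an orthonormal basis of $L^2(\pr_{\calH_0})$. Setting $\theta \triangleq (p-q)/\sqrt{q(1-q)}$ and computing the Fourier coefficients as $\hat{\s{L}}_n(S) = \theta^{|S|}\pr[S\subseteq E({\mathsf{K}}_{\calR,\calL})]$, Parseval together with an independent copy $({\calR'},{\calL'})$ of the planted location yields the cornerstone identity
\begin{align}
\norm{\s{L}_{n,\leq \s{D}}}_{\calH_0}^2 = \mathbb{E}_{(\calR,\calL),(\calR',\calL')}\sum_{k=0}^{\s{D}}\binom{|E({\mathsf{K}}_{\calR,\calL} \cap {\mathsf{K}}_{\calR',\calL'})|}{k}\chi^2(p\|q)^k.
\end{align}
The bipartite geometry then gives the crucial decomposition $|E({\mathsf{K}}_{\calR,\calL}\cap{\mathsf{K}}_{\calR',\calL'})| = |\calR\cap\calR'|\cdot|\calL\cap\calL'| + |\calR\cap\calL'|\cdot|\calL\cap\calR'|$, so the entire computation reduces to controlling a truncated moment generating function of a sum of two products of hypergeometric overlap variables.

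For the upper bounds in both parts, I would condition on the four overlap sizes $(a,b,c,d) = (|\calR\cap\calR'|,|\calL\cap\calL'|,|\calR\cap\calL'|,|\calL\cap\calR'|)$, use the standard tail bound $\pr[|\calR\cap\calR'| = a]\leq \binom{k_{\s{R}}}{a}(k_{\s{R}}/n)^a$ (and its three analogues), and estimate $\binom{ab+cd}{k} \leq (ab+cd)^k/k!$. Substituting these turns the squared norm into a multi-index sum that factorizes over the two rectangular contributions. In the dense case $\chi^2(p\|q) = \Theta(1)$, the assumption $k_{\s{R}}^2\vee k_{\s{L}}^2\ll n$ forces the $(a,b,c,d)=(0,0,0,0)$ term to dominate, while the truncation $\s{D}\leq C\log n$ kills higher-order cross terms, much as in the planted clique low-degree analysis. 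In the sparse case, the three hypothesized upper bounds on $\chi^2(p\|q)$ are precisely what is needed so that for every $k\leq \s{D}=n^{o(1)}$, each summand $\chi^2(p\|q)^k\binom{k_{\s{R}}}{a}\binom{k_{\s{L}}}{b}(k_{\s{R}}/n)^a(k_{\s{L}}/n)^b\binom{ab+cd}{k}$ decays geometrically in $(a,b,c,d,k)$ and sums to $1+o(1)$; the three terms in the $\wedge$ correspond exactly to the critical regimes $k=\Theta(1)$ with $a\vee b=1$, $a=b=1$, and the fully generic $k$ contribution respectively.

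For the lower bounds, by Lemma~\ref{lem:Dmitriy} it suffices to exhibit an explicit polynomial $f\in\calL_{n,\leq\s{D}}$ with $\mathbb{E}_{\calH_1}f(\s{G})/\sqrt{\mathbb{E}_{\calH_0}f^2(\s{G})} = \omega(1)$. In the dense regime with $k_{\s{R}}^2\vee k_{\s{L}}^2\gg n$, the degree-two statistic $f(\s{A}) = \sum_{i\in[n]}\bigl(\sum_{j\neq i}(\s{A}_{ij}-q)\bigr)^2$ (essentially a centered version of the maximum-degree test) has null variance $\Theta(n^2 q(1-q))$ while being boosted under $\calH_1$ by $\Theta((k_{\s{R}}\vee k_{\s{L}})^2)$, exceeding the null standard deviation by the hypothesis. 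In the sparse regime above the computational threshold, either the degree-one edge count $\sum_e(\s{A}_e - q)$ (when $\chi^2(p\|q)\gg n^2/(k_{\s{R}}^2 k_{\s{L}}^2)$) or the same degree-two degree-based polynomial (when $\chi^2(p\|q)\gg n/(k_{\s{R}}^2\vee k_{\s{L}}^2)$) supplies the required witness, directly mirroring the count and degree tests of Theorem~\ref{thm:pdbs}.

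The principal technical obstacle is the upper bound in the sparse regime. Unlike the $\s{PDS}$ setting, where the edge overlap is a single hypergeometric product, here it is a \emph{sum} of two products, producing rectangular sampling regions $ab$ and $cd$ that can be moderately large even when each individual overlap is small. Carefully balancing these rectangular contributions against the factorial savings $1/k!$ in the binomial, against $\chi^2(p\|q)^k$, and against the prior probabilities $(k_{\s{R}}/n)^a(k_{\s{L}}/n)^b$, is the delicate quantitative step; the fact that the three-condition structure of the theorem emerges naturally from the three dominant regimes of this multi-index sum (corresponding, respectively, to the count, degree, and scan test thresholds) is what makes the matching of the algorithmic upper bounds work.
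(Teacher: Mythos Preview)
The corollary itself is an immediate consequence of Theorem~\ref{thm:gap} combined with Conjecture~\ref{conj:1}; the paper does not give it a separate proof. What you are sketching is therefore really a proof of (the upper-bound half of) Theorem~\ref{thm:gap}, and it is worth noting that your ``lower bound'' paragraph (exhibiting explicit degree-$1$ and degree-$2$ witnesses) is unnecessary for the corollary: only the bound $\norm{\s{L}_{n,\leq\s{D}}}_{\calH_0}=O(1)$ is needed to invoke the conjecture.

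Your route to that bound is genuinely different from the paper's. The paper expands directly in the Fourier basis, writes
\begin{align}
\norm{\s{L}_{n,\leq\s{D}}}_{\calH_0}^2 = 1+\sum_{0<|\alpha|\leq\s{D}}[\chi^2(p\|q)]^{|\alpha|}\,\pr^2_{\mathsf{K}_{\calR,\calL}}\pp{\alpha\subseteq \mathsf{K}_{\calR,\calL}},
\end{align}
observes that only \emph{bipartite} $\alpha$ survive, and then enumerates the contributing subgraphs by their parameters $(\ell_1,\ell_2,m)=(v_{\s{R}}(\alpha),v_{\s{L}}(\alpha),|\alpha|)$; the truncation $|\alpha|\leq\s{D}$ becomes simply $m\leq\s{D}$, and the three summands $\calW_1,\calW_2,\calW_3$ (diagonal, $\ell_1>\ell_2$, $\ell_2>\ell_1$) yield exactly the conditions $\chi^2\ll n^2/(k_{\s{R}}^2k_{\s{L}}^2)$ and $\chi^2\ll n/(k_{\s{R}}^2\vee k_{\s{L}}^2)$. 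By contrast, you introduce an independent copy and rewrite the norm as a truncated MGF of the edge overlap $X=|\calR\cap\calR'|\cdot|\calL\cap\calL'|+|\calR\cap\calL'|\cdot|\calL\cap\calR'|$, then condition on the four hypergeometric overlaps $(a,b,c,d)$. This is precisely the machinery the paper uses for the \emph{statistical} lower bound (Theorem~\ref{thm:sparse}), now carried with the degree-$\s{D}$ truncation. The two-copies identity you state is correct, and the approach is viable; what it buys is a direct connection to the overlap geometry, while what the paper's approach buys is that the truncation is trivially encoded as $m\leq\s{D}$ rather than having to be balanced against $\binom{ab+cd}{k}$.

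Two small corrections. First, in part~2 the constraint $\chi^2\gg 1/(k_{\s{R}}\wedge k_{\s{L}})$ is a \emph{lower} bound on $\chi^2$ that merely delimits the hard region; it plays no role in bounding $\norm{\s{L}_{n,\leq\s{D}}}_{\calH_0}$ and is not one of your ``three hypothesized upper bounds''. Second, the delicate step you flag---that in the hard region the untruncated MGF of $X$ diverges (since $\chi^2\cdot k_{\s{L}}$ can be $\gg1$) and one must exploit the cutoff at $\s{D}$---is real, and your sketch does not quite show how the truncation resolves it. The paper's enumeration over $(\ell_1,\ell_2,m)$ sidesteps this, because the restriction $m\leq\s{D}$ together with $\s{D}^2\chi^2(p\|q)<1$ (automatic in the sparse regime) makes the geometric sum over $m$ converge outright; in your coordinates the analogous argument requires a case split on whether $k_{\s{R}}^2/n$ and $k_{\s{L}}^2/n$ exceed~$1$, controlling $\mathbb{E}[(ab)^k]$ accordingly. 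This can be done, but it is the missing ingredient in your outline.
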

These predictions agree precisely with the previously established statistical-computational tradeoffs in the previous subsections. A more explicit formula for the computational barrier which exhibits dependency on $\s{D}$ can be deduced from the proof of Theorem~\ref{thm:gap}; to keep the exposition simple we opted to present the refined result above. 

\begin{rmk}
    We would like to mention that while the focus in second part of Theorem~\ref{thm:gap} is on the specific scaling of $p,q=\Theta(n^{-\alpha})$, the result in Theorem~\ref{thm:gap} and its proof, hold also in the case where $p = \Theta(n^{-\alpha})$ and $q= \Theta(n^{-\beta})$, and $\alpha\leq\beta\leq 2\alpha$. This is known as the log-density regime.
\end{rmk}

\paragraph{Phase diagrams.} The statistical and computational barriers derived in Theorems~\ref{thm:sparse}--\ref{thm:gap} are best interpreted and understood using phase diagrams. Below, we consider the asymptotic regime where $k = k_{\s{R}}+k_{\s{L}}$ and $(p,q)$ are polynomials in~$n$, i.e., $k = \Theta(n^{\beta})$, for $\beta\in(0,1)$, and $p = c\cdot q = \Theta(n^{-\alpha})$, for $\alpha\in(0,2]$, and fixed $c>1$. We assume without any loss of generality that $k_{\s{R}} \geq k_{\s{L}}$. The phase diagram below include three important regions:
\begin{enumerate}
    \item \emph{Computationally easy regime (blue and green regions):} there are polynomial-time algorithms (count and degree tests) for the detection task.
    \item \emph{Computationally hard regime (red region):} there is an inefficient algorithm for detection (scan test), but the problem is computationally conjecturally hard (no polynomial-time algorithm exist) in the sense that the class of low-degree polynomials fails.
    \item \emph{Statistically impossible regime (gray region):} detection is statistically (or, information-theoretically) impossible.
\end{enumerate}
To obtain the following phase diagrams, we substitute $k_{\s{R}},k_{\s{L}} = \Theta(n^{\beta})$ and $\chi^2(p||q) = \Theta(n^{-\alpha})$ in \eqref{eqn:sparselower} and \eqref{eqn:sparseupper}, and Corollary~\ref{cor:gap}. It should be emphasized here that the phase diagrams below reflect the polynomial scale of the statistical and computational barriers we derive, as a function of the polynomial scale of the various parameters, i.e., $k_{\s{R}}$, $k_{\s{L}}$, $p$, and $q$, in $n$. As so, poly-log factors gaps between our lower and upper bounds are not captured in these diagrams. We draw the resulted phase transitions for various scaling regimes of $k_{\s{R}}$ and $k_{\s{L}}$:
\begin{itemize}
\item{Figure~\ref{fig:sparse bipartite 1} illustrates the ``balanced case", where $k_{\s{R}} = \Theta(k_{\s{L}})$, for which we have $m{\left({\mathsf{K}}_{\calR, \calL}\right)} = \Theta\left(k\right)$. Interestingly, we note that the obtained phase diagram is the same as for the $\mathsf{PDS}$ problem (e.g., \cite[Figure 1]{Hajek2015}), for which the maximal subgraph density behaves the same. Note that the degree test is redundant in this case. 

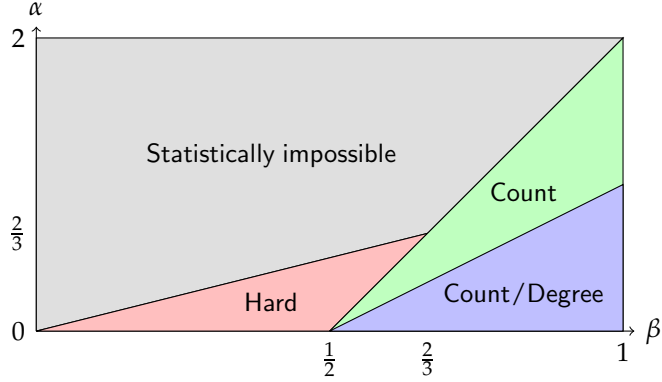
\begin{figure}[t!]
\centering
%\resizebox{6cm}{7cm}{

\begin{tikzpicture}[scale=1.5]
\tikzstyle{every node}=[font=\footnotesize]
\def\xUnit{5.2}
\def\yUnit{1.3}
\def\xmin{0}
\def\xmax{\xUnit + 0.1}
\def\ymin{0}
\def\ymax{\yUnit * 2 + 0.1}

\draw[->] (\xmin,\ymin) -- (\xmax,\ymin) node[right] {$\beta$};
\draw[->] (\xmin,\ymin) -- (\xmin,\ymax) node[above] {$\alpha$};

\node at (\xUnit, 0) [below] {$1$};
\node at (\xUnit * 0.667, 0) [below] {$\frac{2}{3}$};
\node at (\xUnit * 0.5, 0) [below] {$\frac{1}{2}$};
\node at (0, \yUnit * 2) [left] {$2$};
\node at (0, \yUnit * 0.667) [left] {$\frac{2}{3}$};
\node at (0, 0) [left] {$0$};

\filldraw[fill=gray!25, draw=black] (0, 0) -- (0, 2 * \yUnit) -- (1 * \xUnit, 2 * \yUnit) -- (0.667 * \xUnit, 0.667 * \yUnit) -- (0, 0);
\filldraw[fill=red!25, draw=black] (0, 0) -- (0.667 * \xUnit, 0.667 * \yUnit) -- (0.5 * \xUnit, 0) -- (0, 0);
\filldraw[fill=green!25, draw=black] (0.5 * \xUnit, 0) -- (1 * \xUnit, 2 * \yUnit) -- (1 * \xUnit, 0) -- (0.5 * \xUnit, 0);
\filldraw[fill=blue!25, draw=black] (0.5 * \xUnit, 0) -- (1 * \xUnit, 1 * \yUnit) -- (1 * \xUnit, 0) -- (0.5 * \xUnit, 0);

\node at (\xUnit * 0.4 , \yUnit * 1.2) {$\s{Statistically}\;\s{impossible}$};
\node at (\xUnit * 0.4 , \yUnit * 0.2) {$\s{Hard}$};
\node at (\xUnit * 0.83 , \yUnit * 0.95) {$\s{Count}$};
\node at (\xUnit * 0.83 , \yUnit * 0.25) {$\s{Count}/\s{Degree}$};

\end{tikzpicture}%}

\caption{Phase diagram as a function of $k= \Theta(n^{\beta})$ and $p,q=\Theta(n^{-\alpha})$, for the balanced case $k_{\s{R}} = \Theta(k_{\s{L}})$.}
\label{fig:sparse bipartite 1}
\end{figure}}

\item{Figure~\ref{fig:sparse bipartite 2} shows  the ``lightly unbalanced" case, where $k_{\s{R}} = \Theta{\left(k\right)}$, and $\omega(\sqrt{k}) \leq k_{\s{L}} \leq o\left(k\right)$. In this regime, $\omega(\sqrt{k}) \leq m{\left({\mathsf{K}}_{\calR, \calL}\right)} \leq o\left(k\right)$, and for illustration we take $k_{\s{L}} = \Theta(k^{\frac{2}{3}})$.

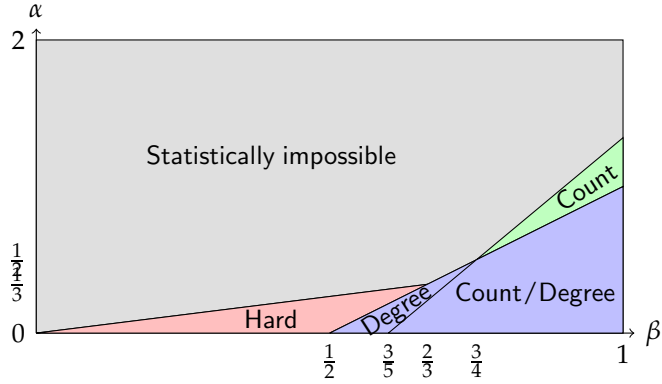
\begin{figure}[t!]
\centering
%\resizebox{6cm}{7cm}{

\begin{tikzpicture}[scale=1.5]
\tikzstyle{every node}=[font=\footnotesize]
\def\xUnit{5.2}
\def\yUnit{1.3}
\def\xmin{0}
\def\xmax{\xUnit + 0.1}
\def\ymin{0}
\def\ymax{\yUnit * 2 + 0.1}

\draw[->] (\xmin,\ymin) -- (\xmax,\ymin) node[right] {$\beta$};
\draw[->] (\xmin,\ymin) -- (\xmin,\ymax) node[above] {$\alpha$};

\node at (\xUnit, 0) [below] {$1$};
\node at (\xUnit * 0.75, 0) [below] {$\frac{3}{4}$};
\node at (\xUnit * 0.667, 0) [below] {$\frac{2}{3}$};
\node at (\xUnit * 0.5, 0) [below] {$\frac{1}{2}$};
\node at (\xUnit * 0.6, 0) [below] {$\frac{3}{5}$};
\node at (0, \yUnit * 2) [left] {$2$};
\node at (0, \yUnit * 0.5) [left] {$\frac{1}{2}$};
\node at (0, \yUnit * 0.333) [left] {$\frac{1}{3}$};
\node at (0, 0) [left] {$0$};

\filldraw[fill=gray!25, draw=black] (0, 0) -- (0, 2 * \yUnit) -- (1 * \xUnit, 2 * \yUnit) -- (1 * \xUnit, 1 * \yUnit) -- (0.667 * \xUnit, 0.333 * \yUnit) -- (0, 0);
\filldraw[fill=red!25, draw=black] (0, 0) -- (0.667 * \xUnit, 0.333 * \yUnit) -- (0.5 * \xUnit, 0) -- (0, 0);
\filldraw[fill=blue!25, draw=black] (0.5 * \xUnit, 0) -- (1 * \xUnit, 1 * \yUnit) -- (1 * \xUnit, 0) -- (0.5 * \xUnit, 0);
\filldraw[fill=green!25, draw=black] (0.75 * \xUnit, 0.5 * \yUnit) -- (1 * \xUnit, 1.333 * \yUnit) -- (1 * \xUnit, 1 * \yUnit) -- (0.75 * \xUnit, 0.5 * \yUnit);
\filldraw[fill=blue!25, draw=black] (0.6 * \xUnit, 0 * \yUnit) -- (0.75 * \xUnit, 0.5 * \yUnit) -- (1 * \xUnit, 1*\yUnit) -- (1 * \xUnit, 0 * \yUnit) -- (0.6 * \xUnit, 0 * \yUnit);

\node at (\xUnit * 0.4 , \yUnit * 1.2) {$\s{Statistically}\;\s{impossible}$};
\node at (\xUnit * 0.4 , \yUnit * 0.1) {$\s{Hard}$};
\node at (\xUnit * 0.85 , \yUnit * 0.27) {$\s{Count}/\s{Degree}$};
\node[rotate=33] at (\xUnit * 0.937, \yUnit * 1) {$\s{Count}$};
\node[rotate=33] at (\xUnit * 0.615, \yUnit * 0.15) {$\s{Degree}$};
\end{tikzpicture}%}

\caption{Phase diagram as a function of $k= \Theta(n^{\beta})$ and $p,q=\Theta(n^{-\alpha})$, for the lightly unbalanced case $k_{\s{R}} = \Theta{\left(k\right)}$ and $k_{\s{L}} = \Theta{(k^{\frac{2}{3}})}$.}
\label{fig:sparse bipartite 2}
\end{figure}}

\item{Figure~\ref{fig:sparse bipartite 3} shows the ``moderately unbalanced" case, where $k_{\s{R}} = \Theta{\left(k\right)}$, and $\omega(1) \leq k_{\s{L}} \leq O(\sqrt{k})$. In this regime, $\omega(1) \leq m{\left({\mathsf{K}}_{\calR, \calL}\right)} \leq O(\sqrt{k})$. Note that the count test is redundant, and for illustration we take $k_{\s{L}} = \Theta(k^{\frac{1}{2}})$.

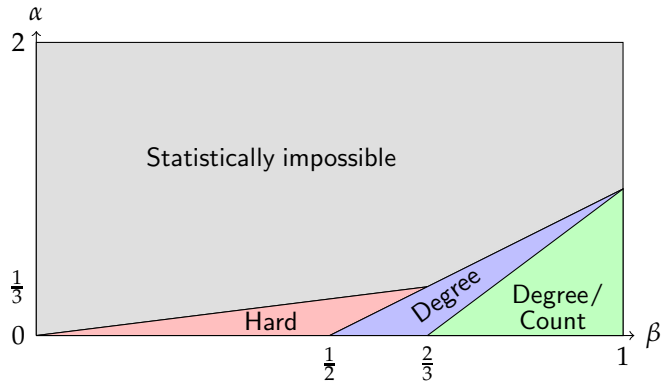
\begin{figure}[t!]
\centering
%\resizebox{6cm}{7cm}{

\begin{tikzpicture}[scale=1.5]
\tikzstyle{every node}=[font=\footnotesize]
\def\xUnit{5.2}
\def\yUnit{1.3}
\def\xmin{0}
\def\xmax{\xUnit + 0.1}
\def\ymin{0}
\def\ymax{\yUnit * 2 + 0.1}

\draw[->] (\xmin,\ymin) -- (\xmax,\ymin) node[right] {$\beta$};
\draw[->] (\xmin,\ymin) -- (\xmin,\ymax) node[above] {$\alpha$};

\node at (\xUnit, 0) [below] {$1$};
\node at (\xUnit * 0.667, 0) [below] {$\frac{2}{3}$};
\node at (\xUnit * 0.5, 0) [below] {$\frac{1}{2}$};
\node at (0, \yUnit * 2) [left] {$2$};
\node at (0, \yUnit * 0.333) [left] {$\frac{1}{3}$};
\node at (0, 0) [left] {$0$};

\filldraw[fill=gray!25, draw=black] (0, 0) -- (0, 2 * \yUnit) -- (1 * \xUnit, 2 * \yUnit) -- (1 * \xUnit, 1 * \yUnit) -- (0.667 * \xUnit, 0.333 * \yUnit) -- (0, 0);
\filldraw[fill=red!25, draw=black] (0, 0) -- (0.667 * \xUnit, 0.333 * \yUnit) -- (0.5 * \xUnit, 0) -- (0, 0);
\filldraw[fill=blue!25, draw=black] (0.5 * \xUnit, 0) -- (1 * \xUnit, 1 * \yUnit) -- (1 * \xUnit, 0) -- (0.5 * \xUnit, 0);
\filldraw[fill=green!25, draw=black] (2/3 * \xUnit, 0) -- (1 * \xUnit, 1 * \yUnit) -- (1 * \xUnit, 0) -- (2/3 * \xUnit, 0);

\node at (\xUnit * 0.4 , \yUnit * 1.2) {$\s{Statistically}\;\s{impossible}$};
\node at (\xUnit * 0.4 , \yUnit * 0.1) {$\s{Hard}$};
\node at (\xUnit * 0.89 , \yUnit * 0.27) {$\s{Degree}/$};
\node at (\xUnit * 0.88 , \yUnit * 0.11) {$\s{Count}$};
\node[rotate=36] at (\xUnit * 0.7 , \yUnit * 0.27) {$\s{Degree}$};

\end{tikzpicture}%}

\caption{Phase diagram as a function of $k= \Theta(n^{\beta})$ and $p,q=\Theta(n^{-\alpha})$, for the moderately unbalanced case $k_{\s{R}} = \Theta{\left(k\right)}$ and $\omega(1) \leq k_{\s{L}} \leq O(\sqrt{k})$, and for illustration we take $k_{\s{L}} = \Theta(k^{\frac{1}{2}})$.}
\label{fig:sparse bipartite 3}
\end{figure}}
\item{Finally, Figure~\ref{fig:sparse bipartite 4} shows the ``extremely unbalanced" case, where $k_{\s{R}} = \Theta{\left(k\right)}$ and $k_{\s{L}} = \Theta\left(1\right)$. In this regime, $m{\left({\mathsf{K}}_{\calR, \calL}\right)} = \Theta\left(1\right)$. This case is asymptotically equivalent to planting a star subgraph, which was studied in \cite{Laurent2019}. Note that both the count and the scan tests are redundant.

\begin{figure}[t!]
\centering
%\resizebox{6cm}{7cm}{

\begin{tikzpicture}[scale=1.5]
\tikzstyle{every node}=[font=\footnotesize]
\def\xUnit{5.2}
\def\yUnit{1.3}
\def\xmin{0}
\def\xmax{\xUnit + 0.1}
\def\ymin{0}
\def\ymax{\yUnit * 2 + 0.1}

\draw[->] (\xmin,\ymin) -- (\xmax,\ymin) node[right] {$\beta$};
\draw[->] (\xmin,\ymin) -- (\xmin,\ymax) node[above] {$\alpha$};

\node at (\xUnit, 0) [below] {$1$};
\node at (\xUnit * 0.5, 0) [below] {$\frac{1}{2}$};
\node at (0, \yUnit * 2) [left] {$2$};
\node at (0, 0) [left] {$0$};

\filldraw[fill=gray!25, draw=black] (0, 0) -- (0, 2 * \yUnit) -- (1 * \xUnit, 2 * \yUnit) -- (1 * \xUnit, 1 * \yUnit) -- (0.667 * \xUnit, 0.333 * \yUnit) -- (0.5 * \xUnit, 0) -- (0, 0);
\filldraw[fill=blue!25, draw=black] (0.5 * \xUnit, 0) -- (1 * \xUnit, 1 * \yUnit) -- (1 * \xUnit, 0) -- (0.5 * \xUnit, 0);

\node at (\xUnit * 0.4 , \yUnit * 1.2) {$\s{Statistically}\;\s{impossible}$};
\node at (\xUnit * 0.83 , \yUnit * 0.27) {$\s{Degree}$};

\end{tikzpicture}%}

\caption{Phase diagram as a function of $k= \Theta(n^{\beta})$ and $p,q=\Theta(n^{-\alpha})$, for the extremely unbalanced case $k_{\s{R}} = \Theta{\left(k\right)}$ and $k_{\s{L}} = \Theta\left(1\right)$.}
\label{fig:sparse bipartite 4}
\end{figure}
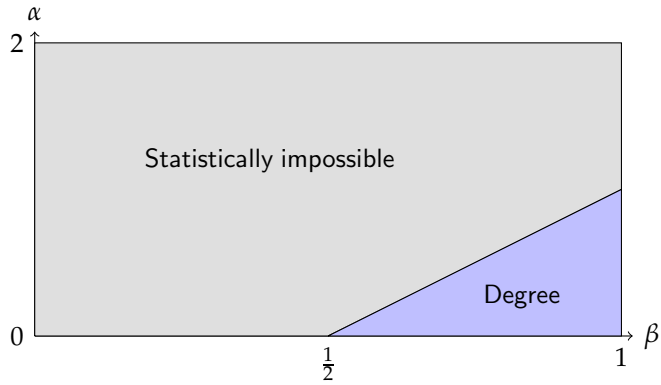}

\end{itemize}

\section{Proofs} \label{sec:proofs}
\subsection{Proof of Theorem \ref{thm:sparse}} \label{proof dense sparse lower bounds}
In this subsection, we prove Theorem~\ref{thm:sparse}, by finding conditions under which the risk of any algorithm strictly positive. To that end, we will use the following well-known fact that the optimal test, denoted by $\phi^\star_n$, minimizing the average risk, is given by,
\begin{align}
\phi^\star_n\left(\s{G}\right) \triangleq \begin{cases}
1,\ &\text{if }\mathsf{L}_n\left(\s{G}\right) \geq 1\\
0,\ &\text{otherwise},   
\end{cases}
\end{align}
where $\mathsf{L}_n\left(\s{G}\right) \triangleq \frac{\pr_{\calH_1}\left(\s{G}\right)}{\pr_{\calH_0}\left(\s{G}\right)}$. 
The optimal risk $\mathsf{R}_n^\star$ associated with the above optimal test, can be lower bounded using the Cauchy–Schwarz inequality as follows,
\begin{align}
\mathsf{R}_n^\star &= 1-d_{\mathsf{TV}}\left(\pr_{\calH_0}, \pr_{\calH_1}\right)\\
&=1-\frac{1}{2}\mathbb{E}_{\calH_0}\vert\mathsf{L}_n\left(\s{G}\right)-1\vert   \\
&\geq1-\frac{1} {2}\sqrt{\mathbb{E}_{\calH_0}[\mathsf{L}_n^2\left(\s{G}\right)]-1}.\label{risk lower bound}
\end{align}
Thus, for finding conditions under which the optimal risk is positive we can find conditions under which $\mathbb{E}_{\calH_0}[\mathsf{L}_n^2\left(\s{G}\right)]$ lies in the interval $(1,5)$. 

%\subsubsection{A general upper bound on the second moment}
We start by finding a formula for the second moment of the likelihood under $\calH_0$. Note that the likelihood ratio is given by
\begin{align}
    \mathsf{L}_n(\s{G}) = \bE_{\mathsf{K}_{\calR, \calL}}\pp{\frac{\pr_{\calH_1}(\s{G}\vert\mathsf{K}_{\calR, \calL})}{\pr_{\calH_0}(\s{G})}},
\end{align}
where the expectation is taken w.r.t. $\mathsf{K}_{\calR, \calL}$ that is uniformly distributed over $\calK_{n,k_{\s{R}
,k_{\s{L}}}}$, and $\pr_{\calH_1}(\s{G}\vert\mathsf{K}_{\calR, \calL})$ denote the conditional distribution of $\s{G}$ given that $\mathsf{K}_{\calR, \calL}$ is the planted bipartite graph under $\calH_1$. Introducing an independent copy $\mathsf{K}_{\calR, \calL}'$ of $\mathsf{K}_{\calR, \calL}$, the squared likelihood ratio can be expressed as
\begin{align}
    \mathsf{L}^2_n(\s{G}) = \bE_{\mathsf{K}_{\calR, \calL}\indep\mathsf{K}_{\calR, \calL}'}\pp{\frac{\pr_{\calH_1}(\s{G}\vert\mathsf{K}_{\calR, \calL})\pr_{\calH_1}(\s{G}\vert\mathsf{K}'_{\calR, \calL})}{\pr_{\calH_0}^2(\s{G})}},
\end{align}
where $\mathsf{K}_{\calR, \calL}\indep\mathsf{K}_{\calR, \calL}'$ denotes that $\mathsf{K}_{\calR, \calL}$ and $\mathsf{K}'_{\calR, \calL}$ are independent. Interchanging the expectations yields,
\begin{align}
\mathbb{E}_{\calH_0}[\mathsf{L}_n^2\left(\s{G}\right)] &= \bE_{\mathsf{K}_{\calR, \calL}\indep\mathsf{K}_{\calR, \calL}'}\pp{\bE_{\calH_0}\p{\frac{\pr_{\calH_1}(\s{G}\vert\mathsf{K}_{\calR, \calL})\pr_{\calH_1}(\s{G}\vert\mathsf{K}'_{\calR, \calL})}{\pr_{\calH_0}^2(\s{G})}}}.
\end{align}
Let $\s{A}$ denote the adjacency matrix of $\s{G}$, and for any pair $(i,j)\in\binom{\left[n\right]}{2}$, we define $f(\s{A}_{ij}) \triangleq \p{p/q}^{\s{A}_{ij}}\pp{(1-p)/(1-q)}^{\s{A}_{ij}}$. Then, note that $\bE_{\calH_0}[f(\s{A}_{ij})]=1$ and $\bE_{\calH_0}[f^2(\s{A}_{ij})]=1+\chi^2(p||q)$. Thus, 
\begin{align}
&\mathbb{E}_{\calH_0}[\mathsf{L}_n^2\left(\s{G}\right)]\nonumber\\
&\quad=\mathbb{E}_{\mathsf{K}_{\calR, \calL}\indep\mathsf{K}_{\calR, \calL}'}\left[\prod_{(i,j) \in {\mathsf{K}}_{\calR, \calL} \cap \mathsf{K}'_{\calR, \calL}}\mathbb{E}_{\calH_0}\left[f^2\left(\s{A}_{ij}\right)\right] \cdot \prod_{(i,j) \in {\mathsf{K}}_{\calR, \calL} \triangle \mathsf{K}'_{\calR, \calL}}\mathbb{E}_{\calH_0}\left[f\left(\s{A}_{ij}\right)\right] \right]    \\ 
&\quad=\mathbb{E}_{\mathsf{K}_{\calR, \calL}\indep\mathsf{K}_{\calR, \calL}'}\left[\left(1+\chi^2(p||q)\right)^{\vert E\left({\mathsf{K}}_{\calR, \calL} \cap \mathsf{K}'_{\calR, \calL}
\right)\vert}\right] \label{L square equal} \\
&\quad\leq \mathbb{E}_{\mathsf{K}_{\calR, \calL}\indep\mathsf{K}_{\calR, \calL}'}{\left[\exp\p{\chi^2(p||q)\cdot {\vert E\left({\mathsf{K}}_{\calR, \calL} \cap \mathsf{K}'_{\calR, \calL} \right)\vert}}\right]},\label{L square bound}
\end{align}
where in the inequality follows form the fact that $1+x\leq \exp(x)$, for $x\geq0$. Note that \eqref{L square bound} is simply the moment generating function of the intersection between two random copies of ${\mathsf{K}}_{\calR, \calL}$. Our main observation here is the fact that
\begin{align}
\vert E\left({\mathsf{K}}_{\calR, \calL} \cap \mathsf{K}'_{\calR, \calL}\right)\vert = \vert \calR \cap \calR'\vert \cdot \vert \calL \cap \calL'\vert + \vert \calR \cap \calL'\vert \cdot \vert \calL \cap \calR'\vert,
\end{align}
which follows directly from the structure of bipartite graphs. Then, by Cauchy–Schwartz inequality,
\begin{align}
\mathbb{E}^2{\left[e^{\chi^2(p||q) \vert E\left({\mathsf{K}}_{\calR, \calL} \cap \mathsf{K}'_{\calR, \calL}\right)\vert}\right]} &= \mathbb{E}^2{\left[e^{\chi^2(p||q) \left( \vert \calR \cap \calR'\vert \cdot \vert \calL \cap \calL'\vert + \vert \calR \cap \calL'\vert \cdot \vert \calL \cap \calR'\vert \right)}\right]} \\
&\leq \mathbb{E}{\left[e^{2 \chi^2(p||q) \vert \calR \cap \calR'\vert \cdot \vert \calL \cap \calL'\vert}\right]} \cdot \mathbb{E}{\left[e^{2\chi^2(p||q) \vert \calR \cap \calL'\vert \cdot \vert \calL \cap \calR'\vert}\right]} \label{Cauchy–Schwarz Sparse Full}.
\end{align}
We next bound the first term at the right-hand-side of \eqref{Cauchy–Schwarz Sparse Full}. Since $\calR$ and $\calR'$ are drawn uniformly at random, given $H\triangleq\vert \calL \cap \calL'\vert$, we have,
\begin{align}
\vert \calR \cap \calR'\vert \preceq \mathsf{Hypergeometric}\left(n-H,k_{\s{R}},k_{\s{R}}\right),  
\end{align}
where $X \preceq Y$ if the random variable $X$ is stochastically dominated by $Y$. Now it is well-known that (see, e.g., \cite[Lemma 5]{Castro2014}) $\mathsf{Hypergeometric}\left(n-H,k_{\s{R}},k_{\s{R}}\right)\preceq \mathsf{Binomial}\left(k_{\s{R}},\frac{k_{\s{R}}}{n-H}\right)$. Since we assume that $k_{\s{R}} + k_{\s{L}} = o\left(n\right)$, and as so $k_{\s{R}},k_{\s{L}}\leq n/2$, which in turn implies that $H \leq k_{\s{L}}\leq n/2$, we further have that $\mathsf{Binomial}\left(k_{\s{R}},\frac{k_{\s{R}}}{n-H}\right)\preceq\mathsf{Binomial}\left(k_{\s{R}},\frac{2k_{\s{R}}}{n}\right)$. Therefore, by the law of total probability we get,
\begin{align} \label{after CS start}
\mathbb{E}{\left[e^{2 \chi^2(p||q) \vert \calR \cap \calR'\vert \cdot \vert \calL \cap \calL'\vert}\right]} &= \mathbb{E}{\left[\mathbb{E}{\left.\left[e^{2 \chi^2(p||q) \vert \calR \cap \calR'\vert \cdot \vert \calL \cap \calL'\vert} \right|  \vert \calL \cap \calL'\vert = H\right]}\right]} \\
&\leq \mathbb{E}{\left[e^{2 \chi^2(p||q) \mathsf{Binomial}\left(k_{\s{R}},\frac{2k_{\s{R}}}{n}\right) \cdot \vert \calL \cap \calL'\vert} \right]} \\
&\leq \mathbb{E}{\left[e^{2 \chi^2(p||q) \mathsf{Binomial}\left(k_{\s{R}},\frac{2k_{\s{R}}}{n}\right) \cdot \mathsf{Binomial}\left(k_{\s{L}},\frac{2k_{\s{L}}}{n}\right)} \right]} \\
&\triangleq \mathbb{E}{\left[e^{2 \chi^2(p||q) \s{B} \cdot \s{B}'} \right]}, \label{MGF B1 B2}
\end{align}
where $\s{B} \triangleq \mathsf{Binomial}\left(k_{\s{R}},\frac{2k_{\s{R}}}{n}\right)$ and $\s{B}' \triangleq \mathsf{Binomial}\left(k_{\s{L}},\cdot \frac{2k_{\s{L}}}{n}\right)$. In the same way, the second term at the right-hand-side of \eqref{Cauchy–Schwarz Sparse Full} can be bounded as,
\begin{align}
\mathbb{E}{\left[e^{2 \chi^2(p||q) \vert \calR \cap \calL'\vert \cdot \vert \calL \cap \calR'\vert}\right]} \leq \mathbb{E}{\left[e^{2 \chi^2(p||q) \bar{\s{B}} \cdot \bar{\s{B}}'} \right]},\label{MGF B3 B4}
\end{align}
where $\bar{\s{B}} \triangleq \mathsf{Binomial}\left(k_{\s{R}},\frac{2k_{\s{L}}}{n}\right)$ and $\bar{\s{B}}' \triangleq \mathsf{Binomial}\left(k_{\s{L}},\frac{2k_{\s{R}}}{n}\right)$. Our goal is now to find conditions under which each one of the  expectation terms in \eqref{Cauchy–Schwarz Sparse Full} is upper bounded by, for example, 2. To that end, we will use the upper bounds we derived in \eqref{MGF B1 B2} and \eqref{MGF B3 B4}. These in turn will imply that \eqref{L square bound}, and as so, the second moment, is upper bounded by 2 as well, which due to \eqref{risk lower bound} entail the impossibility of strong detection. %Next, we consider the dense and the sparse regimes separately, starting with the former.

%\subsubsection{Proof of Theorem \ref{thm:sparse}}
We start by upper bounding \eqref{MGF B1 B2} using two different approaches. The first approach will lead to the condition in \eqref{eqn:densecond1}, while the second approach will lead to the condition in \eqref{eqn:sparsecond2}. Accordingly, under \emph{either} one of those conditions \eqref{MGF B1 B2} will be bounded by 2. Let us start with the first approach. Using the fact that $\s{B}' \leq k_{\s{L}}$ with probability one, we have,
\begin{align}
\mathbb{E}{\left[e^{2 \chi^2(p||q) \s{B} \cdot \s{B}'} \right]} &\leq \mathbb{E}{\left[e^{2 \chi^2(p||q) \s{B} \cdot k_{\s{L}}} \right]}\label{eqn:B2up}\\
&={\left[1+2\frac{k_{\s{R}}}{n} \left(e^{2 \chi^2(p||q) k_{\s{L}}}-1\right)\right]}^{k_{\s{R}}},\label{eqn:by2}
\end{align}
where we have used the moment generating function of the binomial random variable $\s{B}$. Now, since $2^{\frac{1}{k_{\s{R}}}}-1 \geq \frac{\log{2}}{k_{\s{R}}}$, the right-hand-side of \eqref{eqn:by2} is upper bounded by 2 if
\begin{align}
2 \frac{k_{\s{R}}}{n} \left(e^{2 \chi^2(p||q) k_{\s{L}}}-1\right) \leq \frac{\log{2}}{k_{\s{R}}},
\end{align}
which is equivalent to
\begin{align}
\chi^2(p||q)\leq \frac{1}{2k_{\s{L}}}\log{\left(1+\frac{n}{{k^2_{\s{R}}}} \frac{\log{2}}{2}\right)}. \label{dense full cond}
\end{align}
%Since $\chi^2(p||q) = \Theta\left(1\right)$ in the dense regime, the last condition holds 
%when ${k^2_{\s{R}}} = o\left(n\right)$ if,
%\begin{align}
% k_{\s{L}} \leq \tilde{C} \log{\left(n^{1-o\left(1\right)}\right)},\label{dense explicit cond}  
%\end{align}
%for some constant $\tilde{C}>0$. On the other hand, if ${k^2_{\s{R}}} = \omega\left(n\right)$, we get that \eqref{dense full cond} can never hold true. 
If we repeat the same arguments as above but use the bound $\s{B}\leq k_{\s{R}}$ in \eqref{eqn:B2up} instead, we will get that if 
\begin{align}
\chi^2(p||q)\leq \frac{1}{2k_{\s{R}}}\log{\left(1+\frac{n}{{k^2_{\s{L}}}} \frac{\log{2}}{2}\right)},\label{dense full cond2}
\end{align}
then the right-hand-side of \eqref{MGF B1 B2} is upper bounded by 2. Combining both conditions we get,
\begin{align}
    \chi^2(p||q)\leq \max\ppp{\frac{1}{2k_{\s{L}}}\log{\left(1+\frac{n}{{k^2_{\s{R}}}} \frac{\log{2}}{2}\right)},\frac{1}{2k_{\s{R}}}\log{\left(1+\frac{n}{{k^2_{\s{L}}}} \frac{\log{2}}{2}\right)}},
\end{align}
which is exactly the condition in \eqref{eqn:densecond1}. 

Next, we propose another approach for bounding \eqref{MGF B1 B2}. Using the moment generating function of the binomial random variable $\s{B}$, we obtain,
\begin{align}
\mathbb{E}{\left[e^{2 \chi^2(p||q) \s{B} \cdot \s{B}'} \right]} &= \mathbb{E}{\left[\left(1+2\frac{k_{\s{R}}}{n} \left(e^{ 2 \chi^2(p||q) \s{B}'}-1\right)\right)^{k_{\s{R}}}\right]}.
\end{align}
We assume that $2\chi^2(p||q) k_{\s{L}} \leq 1$, and thus, combined with the fact that $\s{B}' \leq k_{\s{L}}$ with probability one, we have $2\chi^2(p||q) \s{B}' \leq 1$ with probability one. Therefore, using the fact that $\exp(x) - 1\leq x + x^2$, for $x < 1.79$, we obtain
\begin{align}
&\mathbb{E}{\left[\left(1+2\frac{k_{\s{R}}}{n} \left(e^{ 2 \chi^2(p||q) \s{B}'}-1\right)\right)^{k_{\s{R}}}\right]} \nonumber\\
&\hspace{2.5cm}\leq \mathbb{E}{\left[\left(1+2 \frac{k_{\s{R}}}{n} \left(2 \chi^2(p||q) \s{B}' + \left(2 \chi^2(p||q) \s{B}'\right)^2\right)\right)^{k_{\s{R}}}\right]} \\ 
&\hspace{2.5cm}\leq \mathbb{E}{\left[\left(1+8\frac{k_{\s{R}}}{n} \chi^2(p||q) \s{B}'\right)^{k_{\s{R}}}\right]}\\
&\hspace{2.5cm}\leq \mathbb{E}{\left[\exp\p{8 \frac{k_{\s{R}}^2}{n}\chi^2(p||q) \s{B}'}\right]}\\
&\hspace{2.5cm}=\left[1+2\frac{k_{\s{L}}}{n}\left(e^{8 \frac{k_{\s{R}}^2}{n} \chi^2(p||q)}-1\right)\right]^{k_{\s{L}}},\label{after two MGFs}
\end{align}
where the second inequality is because $2 \chi^2(p||q) \s{B}'\leq1$, and so $\left(2 \chi^2(p||q) \s{B}'\right)^2\leq 2 \chi^2(p||q) \s{B}'$, with probability one, the third inequality is due to the fact that $1+x\leq\exp(x)$, for any $x\geq0$, and in the last equality we have used the moment generating function of the binomial random variable $\s{B}'$. As in the previous subsection, using the inequality $2^{\frac{1}{k_{\s{L}}}}-1 \geq \frac{\log{2}}{k_{\s{L}}}$, it can be seen that \eqref{after two MGFs}, as as so \eqref{MGF B1 B2} as well, is upper bounded by 2, if
\begin{align} 
\chi^2(p||q)\leq\frac{n}{8k_{\s{R}}^2} \cdot {\log{\left(1+\frac{n}{k_{\s{L}}^2}  \frac{\log 2}{2}\right)}}.
\end{align}
Combined with our assumption that $2\chi^2(p||q) k_{\s{L}} \leq 1$, we obtain that \eqref{MGF B1 B2} is upper bounded by 2 if,
\begin{align}\label{after CS first term}
   \chi^2(p||q)\leq\min\ppp{\frac{1}{2k_{\s{L}}},\frac{n}{8k_{\s{R}}^2} \cdot {\log{\left(1+\frac{n}{k_{\s{L}}^2}  \frac{\log 2}{2}\right)}}}.
\end{align}
If we repeat the same arguments as above, but replace the roles of $\s{B}$ and $\s{B}'$, and assume that $2\chi^2(p||q) k_{\s{R}} \leq 1$ instead, we will get that \eqref{MGF B1 B2} is upper bounded by 2, if
\begin{align} \label{after CS first term2}
\chi^2(p||q)\leq\min\ppp{\frac{1}{2k_{\s{R}}},\frac{n}{8k_{\s{L}}^2} \cdot {\log{\left(1+\frac{n}{k_{\s{R}}^2}  \frac{\log 2}{2}\right)}}},
\end{align}
and so when combined, \eqref{MGF B1 B2} is upper bounded by 2 if either \eqref{after CS first term} or \eqref{after CS first term2} hold, which is exactly the condition stated in \eqref{eqn:sparsecond2}.

Finally, we upper bound \eqref{MGF B3 B4}. Repeating the exact same arguments as in the later approach above, it can be shown that the right-hand-side of \eqref{MGF B3 B4} is upper bounded by 2 if,
\begin{align} \label{after CS second term}
\chi^2(p||q)\leq \frac{n}{8k_{\s{R}} k_{\s{L}}} \cdot {\log{\left(1+\frac{n}{k_{\s{R}} k_{\s{L}}}  \frac{\log 2}{2}\right)}},
\end{align}
which is the condition we stated in \eqref{eqn:sparsecond1}. This concludes the proof.

\subsection{Proof of Theorem~\ref{thm:pdbs}}

In this subsection, we prove Theorem~\ref{thm:pdbs} by upper bounding the risk associated with each of the three tests proposed in Section~\ref{sec:main}, starting with the scan test.
\subsubsection{Scan test}
Under the null hypothesis, for any $(\calR',\calL')\in\calS_{k_{\s{R}},k_{\s{L}}}$ we have that $\sum_{\substack{i \in \calR', j \in \calL'}}\s{A}_{ij} \sim \mathsf{Binomial}{\left(k_{\s{R}} k_{\s{L}}, q\right)}$. Recall the definition of $\mathsf{Scan}{\left(\s{A}\right)}$ in \eqref{algo:scan non-complete}, and note that the maximum in \eqref{algo:scan non-complete} taken over a set of size $|\calS_{k_{\s{R}},k_{\s{L}}}| = \binom{n}{k_{\s{R}}}\binom{n-k_{\s{R}}}{k_{\s{L}}}$, which can be upper bounded as follows,
\begin{align}
\binom{n}{k_{\s{R}}}\binom{n-k_{\s{R}}}{k_{\s{L}}} \leq \binom{n}{k_{\s{R}}}\binom{n}{k_{\s{L}}} \leq \left(\frac{en}{k_{\s{R}}}\right)^{k_{\s{R}}} \left(\frac{en}{k_{\s{L}}}\right)^{k_{\s{L}}} \leq \left(\frac{en}{k_{\s{R}}\vee k_{\s{L}} }\right)^{2 k_{\s{R}}\vee k_{\s{L}}},
\end{align}
where the second inequality follows from $(y/x)^x\leq\binom{y}{x}\leq (ey/x)^x$, and the last inequality is because $(n/a)^a$ is monotonically non-decreasing as a function of $a$ if $a=o\left(n\right)$. Thus, by the union bound and Bernstein's inequality we have,
\begin{align}
    \pr_{\calH_0}\p{\phi_{\s{Scan}}(\s{A})=1}&=\pr_{\calH_0}{\left(\mathsf{Scan}{\left(\s{A}\right)} \geq \tau_{\mathsf{Scan}}\right)}\\
    &\leq \sum_{(\calR',\calL')\in\calS_{k_{\s{R}},k_{\s{L}}}}{\pr_{\calH_0}{\left[\sum_{\substack{i \in \calR', j \in \calL'}}\s{A}_{ij} \geq \tau_{\mathsf{Scan}}\right]}}    \\
    &\leq \left(\frac{en}{k_{\s{R}}\vee k_{\s{L}} }\right)^{2 k_{\s{R}}\vee k_{\s{L}}} \cdot\exp{\left(-
    \frac{k^2_{\s{R}} k^2_{\s{L}} \cdot {\left(p-q\right)}^2/4}{2k_{\s{R}} k_{\s{L}} q+k_{\s{R}} k_{\s{L}}\cdot{\left(p-q\right)}/3}\right)}    \\
    &\leq \exp{\left(k_{\s{R}}\vee k_{\s{L}} \cdot \log{\left(\frac{en}{k_{\s{R}}\vee k_{\s{L}}}\right)}-C_1\cdot k_{\s{R}} k_{\s{L}} \cdot \frac{{\left(p-q\right)}^2}{q\left(1-q\right)}\right)},\label{eqn:scanType1}
\end{align}
where $C_1>0$, and the last inequality follows from the assumptions that $|p-q|=O(q)$ and $q<c<1$, for some $0<c<1$. Under the alternative hypothesis, if ${\mathsf{K}}^\star_{\calR,\calL}$ denotes the underlying planted bipartite subgraph, then there exist right and left sets of vertices $(\calR^{\ast}, \calL^{\ast})$ such that $\sum_{\substack{i \in \calR^{\ast}, j \in \calL^{\ast}}}\s{A}_{ij} \sim \mathsf{Binomial}{\left(k_{\s{R}} k_{\s{L}}, p\right)}$. Therefore, by the multiplicative Chernoff's bound, we have,
\begin{align}
    \pr_{\calH_1}\p{\phi_{\s{Scan}}(\s{A})=0}&=\pr_{\calH_1}{\left(\mathsf{Scan}{\left(\s{A}\right)} \leq \tau_{\mathsf{Scan}}\right)}\\
    &\leq \pr_{\calH_1}{\left[\sum_{\substack{i \in \calR^{\ast}, j \in \calL^{\ast}}} \s{A}_{ij} \leq \tau_{\mathsf{Scan}}\right]}    \\
    &\leq \exp{\left(-
    \frac{k^2_{\s{R}} k^2_{\s{L}} \cdot {\left(p-q\right)}^2/4}{2k_{\s{R}} k_{\s{L}} p}\right)}   \\
    &\leq \exp{\left(-C_2\cdot k_{\s{R}} k_{\s{L}} \cdot \frac{{\left(p-q\right)}^2}{q\left(1-q\right)}\right)},\label{eqn:scanType2}
\end{align}
for some constant $C_2> 0$, and again the last inequality follows from the assumptions that $|p-q|=O(q)$ and $q<c<1$. Thus, from \eqref{eqn:scanType1} and \eqref{eqn:scanType2}, it is clear that there exists a constant $C_3>0$ such that,
\begin{align}
\s{R}_n(\phi_{\s{Scan}}) \leq 2 \cdot \exp{\left(k_{\s{R}}\vee k_{\s{L}} \cdot \log{\left(\frac{n}{k_{\s{R}}\vee k_{\s{L}}}\right)}-C_3\cdot k_{\s{R}} k_{\s{L}} \cdot \frac{{\left(p-q\right)}^2}{q\left(1-q\right)}\right)},
\end{align}
which is less than $\delta$ provided that,
\begin{align}
\chi^2(p||q)=\frac{{\left(p-q\right)}^2}{q\left(1-q\right)} &\geq C {\left[\frac{k_{\s{R}}\vee k_{\s{L}}}{k_{\s{R}} k_{\s{L}}} \cdot \log{\left(\frac{n}{k_{\s{R}}\vee k_{\s{L}}}\right)}+\frac{\log{\frac{2} {\delta}}}{k_{\s{R}} k_{\s{L}}}\right]}\\&=C
{\left[\frac{\log{\left(\frac{n}{k_{\s{R}}\vee k_{\s{L}}}\right)}+\frac{\log{\frac{2} {\delta}}}{k_{\s{R}}\vee k_{\s{L}}}}{k_{\s{R}}\wedge k_{\s{L}}}\right]},
\end{align}
for some $C>0$, which completes the proof.
\subsubsection{Count test}
Recall the count statistics in \eqref{algo:count}. Under the null hypothesis, it is clear that $\mathsf{Count}{\left(\s{A}\right)} \sim \mathsf{Binomial}{\left(\binom{n}{2}, q\right)}$. Therefore, by Bernstein's inequality, 
\begin{align}
    \pr_{\calH_0}\p{\phi_{\s{Count}}(\s{A})=1}&=\pr_{\calH_0}{\left(\mathsf{Count}{\left(\s{A}\right)} \geq \tau_{\mathsf{Count}}\right)} 
    \\
    &\leq \exp{\left(-
    \frac{k^2_{\s{R}} k^2_{\s{L}} \cdot {\left(p-q\right)}^2/4}{2{\binom{n}{2}}q+k_{\s{R}} k_{\s{L}}\cdot{\left(p-q\right)}/3}\right)}    \\
    &\leq \exp{\left(-C_1\cdot \frac{k^2_{\s{R}} k^2_{\s{L}}}{n^2} \cdot \frac{{\left(p-q\right)}^2}{q\left(1-q\right)}\right)},\label{eqn:countType1}
\end{align}
where $C_1>0$, and the last inequality follows from the assumptions that $|p-q|=O(q)$ and $q<c<1$, for some $0<c<1$. Under the alternative hypothesis, the random variable $\mathsf{Count}{\left(\s{A}\right)}$ is distributed as the independent sum of $\mathsf{Binomial}{\left(k_{\s{R}} k_{\s{L}}, p\right)}$ and $\mathsf{Binomial}{\left(\binom{n}{2} - k_{\s{R}} k_{\s{L}}, q\right)}$. Therefore, by the multiplicative Chernoff's bound,
\begin{align}
    \pr_{\calH_1}\p{\phi_{\s{Count}}(\s{A})=0}&=\pr_{\calH_1}{\left(\mathsf{Count}{\left(\s{A}\right)} \leq \tau_{\mathsf{Count}}\right)} 
    \\
    &\leq \exp{\left(-
    \frac{k^2_{\s{R}} k^2_{\s{L}} \cdot {\left(p-q\right)}^2/4}{2{\binom{n}{2}}q+2\cdot k_{\s{R}} k_{\s{L}}\cdot{\left(p-q\right)}}\right)}    \\
    &\leq \exp{\left(-C_2\cdot \frac{k^2_{\s{R}} k^2_{\s{L}}}{n^2} \cdot \frac{{\left(p-q\right)}^2}{q\left(1-q\right)}\right)},\label{eqn:countType2}
\end{align}
where $C_2>0$. Thus, combining \eqref{eqn:countType1} and \eqref{eqn:countType2}, we obtain that there exists a constant $C_3>0$, such that,
\begin{align}
\s{R}_n(\phi_{\s{Count}}) \leq 2 \cdot \exp{\left(-C_3\cdot \frac{k^2_{\s{R}} k^2_{\s{L}}}{n^2} \cdot \frac{{\left(p-q\right)}^2}{q\left(1-q\right)}\right)}.
\end{align}
This is less than $\delta$ provided that,
\begin{align}
\chi^2(p||q)=\frac{{\left(p-q\right)}^2}{q\left(1-q\right)} \geq \frac{n^2}{C_3\cdot k^2_{\s{R}} k^2_{\s{L}}} \cdot \log{\frac{2}{\delta}},
\end{align}
which completes the proof.
\subsubsection{Degree test}
Recall the degree statistics in \eqref{algo:degree}. Under the null hypothesis, we clearly have that $\sum_{j} \s{A}_{ij} \sim \mathsf{Binomial}{\left(n-1,q\right)}$. Therefore, by the union bound and Bernstein's inequality, we get,
\begin{align}
    \pr_{\calH_0}\p{\phi_{\s{Deg}}(\s{A})=1}&=\pr_{\calH_0}{\left(\mathsf{MaxDeg}{\left(\s{A}\right)} \geq \tau_{\mathsf{Deg}}\right)} 
    \\
    &\leq n \cdot \exp{\left(-
    \frac{{\left(k_{\s{R}}\vee k_{\s{L}}\right)}^2 \cdot {\left(p-q\right)}^2/4}{2\left(n-1\right)q+k_{\s{R}}\vee k_{\s{L}}\cdot{\left(p-q\right)}/3}\right)}    \\
    &\leq \exp{\left(\log{n}-C_1\cdot \frac{{\left(k_{\s{R}}\vee k_{\s{L}}\right)}^2}n \cdot \frac{{\left(p-q\right)}^2}{q\left(1-q\right)}\right)},\label{eqn:degreeType1}
\end{align}
where $C_1>0$, and the last inequality follows from the assumptions that $|p-q|=O(q)$ and $q<c<1$, for some $0<c<1$. Under the alternative hypothesis, there is at least one row $i^\ast$ such that $\sum_{j} \s{A}_{i^\ast j}$ can be represented as the independent sum of $\mathsf{Binomial}{\left(k_{\s{R}}\vee k_{\s{L}}, p\right)}$ and $\mathsf{Binomial}{\left(n-1-k_{\s{R}}\vee k_{\s{L}}, q\right)}$. Therefore, by the multiplicative Chernoff's bound, we obtain,
\begin{align}
    \pr_{\calH_1}\p{\phi_{\s{Deg}}(\s{A})=0}&=\pr_{\calH_1}{\left(\mathsf{MaxDeg}{\left(\s{A}\right)} \leq \tau_{\mathsf{Deg}}\right)} 
    \\
    &\leq \exp{\left(-
    \frac{{\left(k_{\s{R}}\vee k_{\s{L}}\right)}^2 \cdot {\left(p-q\right)}^2/4}{2\left(n-1\right)q+2\cdot k_{\s{R}}\vee k_{\s{L}}\cdot{\left(p-q\right)}}\right)}    \\
    &\leq \exp{\left(-C_2\cdot \frac{{\left(k_{\s{R}}\vee k_{\s{L}}\right)}^2}n \cdot \frac{{\left(p-q\right)}^2}{q\left(1-q\right)}\right)},\label{eqn:degreeType2}
\end{align}
for some constant $C_2 > 0$. Thus, combining \eqref{eqn:degreeType1} and \eqref{eqn:degreeType2}, we obtain,
\begin{align}
\s{R}_n(\phi_{\s{Deg}}) \leq 2 \cdot \exp{\left(\log{n}-C\cdot \frac{{\left(k_{\s{R}}\vee k_{\s{L}}\right)}^2}n \cdot \frac{{\left(p-q\right)}^2}{q\left(1-q\right)}\right)},
\end{align}
for some constant $C>0$. This is less than $\delta$ provided that,
\begin{align}
\chi^2(p||q)=\frac{{\left(p-q\right)}^2}{q\left(1-q\right)} \geq \frac{n}{C\cdot {\left(k_{\s{R}}\vee k_{\s{L}}\right)}^2} \cdot \left(\log n + \log{\frac{2}{\delta}}\right),
\end{align}
which completes the proof.

\subsection{Proof of Theorem~\ref{thm:gap}}

Let us expand the likelihood ratio $\s{L}_n$ in a basis of orthogonal polynomials with respect to $\pr_{\calH_0}$. Specifically, suppose that $f_0,f_1,\ldots,f_m:\Omega^n\to\mathbb{R}$ is an orthonormal basis for the coordinate-degree $\s{D}$ functions (with respect to $\left\langle \cdot,\cdot \right\rangle_{\calH_0}$), and that $f_0$ is the unit constant function. Therefore, $\left\langle f_i,f_j \right\rangle_{\calH_0}=\delta_{ij}$, where $\delta_{ij}=0$ if $i\neq j$, and $\delta_{ii}=1$. Then, measuring the norm of $\s{L}_{n,\leq\s{D}}$ in this basis, we have
\begin{align}
\norm{\s{L}_{n,\leq\s{D}}}_{\calH_0}^2 &= \sum_{1\leq i\leq m}\left\langle f_i,\s{L}_{n,\leq\s{D}} \right\rangle_{\calH_0}^2\\
& = \sum_{1\leq i\leq m}\pp{\bE_{\calH_0}\pp{\s{L}_n(\s{G})f_i(\s{G})}}^2\\
& = \sum_{1\leq i\leq m}\pp{\bE_{\calH_1}f_i(\s{G})}^2,\label{eqn:orthogonalDecomposition}
\end{align}
where we have used the fact that $(\s{L}_n-\s{L}_{n,\leq\s{D}})$ is orthogonal to $\{f_i\}_{i=0}^m$. Therefore, we need to compute $\bE_{\calH_1}f_i(\s{G})$ for some orthonormal basis functions $f_i$. In our setting, for $\alpha\subseteq\binom{[n]}{2}$, define the Fourier character
\begin{align}
\chi_{\alpha}(\s{G}) = \prod_{\{i,j\}\in\alpha}\frac{\s{G}_{ij}-q}{\sqrt{q(1-q)}},
\end{align}
for each $\s{G}\in\{0,1\}^{\binom{n}{2}}$. Note that any subset $\alpha\subseteq\binom{[n]}{2}$ can be identified with the graph on vertex set $[n]$ induced by edges in $\alpha$. Therefore, we can say ``graph $\alpha$" without ambiguity. Then, $|\alpha|$ denotes the number of edges in the graph $\alpha$. Also, with some abuse of notation, we let $V(\alpha)$ denote the vertex set of $\alpha$, i.e., the set of vertices $v\in[n]$ that are non-isolated by the edges of $\alpha$, and finally we let $v(\alpha)$ denote its size, namely, $v(\alpha)=|V(\alpha)|$. It is known \cite{odonnell_2014} that $\{\chi_{\alpha}\}_{\alpha\subseteq\binom{[n]}{2},|\alpha|\leq\s{D}}$ form an orthonormal basis for the degree-$\s{D}$ functions with respect to $\pr_{\calH_0}$. %We recall that the Fourier character are a basis, i.e., complete orthonormal system of the Boolean cube equipped with the Bernoulli measure under $\calH_0$, and therefore by taking $\s{D}\to\infty$,
%\begin{align}
%  \norm{\s{L}_{n}}_{\calH_0}^2 = \lim_{\s{D}\to\infty}\norm{\s{L}_{n,\leq\s{D}}}_{\calH_0}^2.
%\end{align}
In light of \eqref{eqn:orthogonalDecomposition}, we next compute $\bE_{\calH_1}\chi_{\alpha}(\s{G})$ for each such $\alpha$. We have,
\begin{align}
    \bE_{\calH_1}\chi_{\alpha}(\s{G}) &= \bE_{\mathsf{K}_{\calR, \calL}}\bE_{\calH_1\vert\mathsf{K}_{\calR, \calL}}\chi_{\alpha}(\s{G})\\
    & = \bE_{\mathsf{K}_{\calR, \calL}}\bE_{\calH_1\vert\mathsf{K}_{\calR, \calL}}\p{\prod_{\{i,j\}\in\alpha}\frac{\s{G}_{ij}-q}{\sqrt{q(1-q)}}}\\
    & = \bE_{\mathsf{K}_{\calR, \calL}}\prod_{\{i,j\}\in\alpha}\bE_{\calH_1\vert\mathsf{K}_{\calR, \calL}}\p{\frac{\s{G}_{ij}-q}{\sqrt{q(1-q)}}},
\end{align}
where we have used the fact that conditioned on the planted structure $\mathsf{K}_{\calR, \calL}$, the edges of $\s{G}$ become independent. Depending on the edge $\{i,j\}$ and its relation to $\alpha$ and $\s{G}$, there are two possible cases:
\begin{itemize}
    \item If $\{i,j\}\in\alpha$ is such that $i,j\not\in V(\mathsf{K}_{\calR, \calL})$, then
\begin{align}
\bE\pp{\left.\frac{\s{G}_{ij}-q}{\sqrt{q(1-q)}}\right|i,j\not\in V(\mathsf{K}_{\calR, \calL})} = 0,
\end{align}
since if $i$ or $j$ is not in $\mathsf{K}_{\calR, \calL}$, then the edge $\{i,j\}$ is included in $\s{G}$ with probability $q$.
\item If $\{i,j\}\in\alpha$ is such that $i,j\in V(\mathsf{K}_{\calR, \calL})$  and $\{i,j\}\in E(\mathsf{K}_{\calR, \calL})$, then
\begin{align}
\bE\pp{\left.\frac{\s{G}_{ij}-q}{\sqrt{q(1-q)}}\right|i,j\in V(\mathsf{K}_{\calR, \calL}),\{i,j\}\in E(\mathsf{K}_{\calR, \calL})} &= p\sqrt{\frac{1-q}{q}}-(1-p)\sqrt{\frac{q}{1-q}}\nonumber\\
& = \sqrt{\chi^2(p||q)}.
\end{align}
\item If $\{i,j\}\in\alpha$ is such that $i,j\in V(\mathsf{K}_{\calR, \calL})$  and $\{i,j\}\not\in E(\mathsf{K}_{\calR, \calL})$, then
\begin{align}
\bE\pp{\left.\frac{\s{G}_{ij}-q}{\sqrt{q(1-q)}}\right|i,j\in V(\mathsf{K}_{\calR, \calL}),\{i,j\}\not\in E(\mathsf{K}_{\calR, \calL})} = 0.
\end{align}
\end{itemize}
Thus, we get
\begin{align}
    \bE_{\calH_1\vert\mathsf{K}_{\calR, \calL}}\p{\frac{\s{G}_{ij}-q}{\sqrt{q(1-q)}}} = \sqrt{\chi^2(p||q)}\cdot\Ind\ppp{\{i,j\}\in\alpha\cap E(\mathsf{K}_{\calR, \calL})}.
\end{align}
This in turn implies that
\begin{align}
    \prod_{\{i,j\}\in\alpha}\bE_{\calH_1\vert\mathsf{K}_{\calR, \calL}}\p{\frac{\s{G}_{ij}-q}{\sqrt{q(1-q)}}} %= (\chi^2(p||q))^{\frac{1}{2}|\alpha\cap\mathsf{K}_{\calR, \calL}|}\Ind\ppp{v(\alpha)\subseteq v(\mathsf{K}_{\calR, \calL})},
    = (\chi^2(p||q))^{\frac{1}{2}|\alpha\cap\mathsf{K}_{\calR, \calL}|}\Ind\ppp{\alpha\subseteq \mathsf{K}_{\calR, \calL}}.
\end{align}
Combining the above we have,
\begin{align}
    \bE_{\calH_1}\chi_{\alpha}(\s{G}) &= \bE_{\mathsf{K}_{\calR, \calL}}\pp{(\chi^2(p||q))^{\frac{1}{2}|\alpha\cap{\mathsf{K}_{\calR, \calL}}|}\Ind\ppp{\alpha\subseteq \mathsf{K}_{\calR, \calL}}}\\
    & = [\chi^2(p||q)]^{\frac{1}{2}|\alpha|}\pr_{\mathsf{K}_{\calR, \calL}}\pp{\alpha\subseteq \mathsf{K}_{\calR, \calL}}.\label{eqn:coeffGenera}
\end{align}
Therefore, from \eqref{eqn:orthogonalDecomposition} we get that
\begin{align}
    \norm{\s{L}_{n,\leq\s{D}}}_{\calH_0}^2 = %\sum_{0<|\alpha|\leq\s{D}}\abs{\bE_\Gamma\pp{(\chi^2(p||q))^{\frac{1}{2}|\alpha\cap\Gamma|}\Ind\ppp{\alpha\subseteq e(\Gamma)}}}^2
    1+\sum_{0<|\alpha|\leq\s{D}}[\chi^2(p||q)]^{|\alpha|}\pr^2_{\mathsf{K}_{\calR, \calL}}\pp{\alpha\subseteq \mathsf{K}_{\calR, \calL}},\label{eqn:sumoveralphapri} 
\end{align}
where the summand $1$ correspond to the empty graph $\alpha=\emptyset$. Now, the main observation here is that if $\alpha$ is \emph{not} a bipartite graph itself, then $\pr_{\mathsf{K}_{\calR, \calL}}\pp{\alpha\subseteq \mathsf{K}_{\calR, \calL}}=0$. This follows from the fact that any subgraph of a bipartite graph is, itself, bipartite. This can be easily seen as follows: it is well-known that a graph is bipartite if and only if it does not contain a cycle of odd length. Since the subgraph of a bipartite graph could not gain a cycle of odd length, it must also be bipartite. Therefore, the summation in \eqref{eqn:sumoveralphapri} over $\alpha$ can be reduced to bipartite graphs only. Accordingly, let $\calB_{\s{D}}$ denote the set of all possible bipartite subgraphs in the complete graph over $n$ vertices, with at most $\s{D}$ edges, and at least a single edge (because $|\alpha|>0$). For clarity, we will use the notation $\alpha_{\s{Bip}}$ in place of $\alpha$, and we let $v_{\s{L}}(\alpha_{\s{Bip}})$ and $v_{\s{R}}(\alpha_{\s{Bip}})$ denote the number of left and right vertices of $\alpha_{\s{Bip}}$. Thus,
\begin{align}
    \norm{\s{L}_{n,\leq\s{D}}}_{\calH_0}^2 = %\sum_{0<|\alpha|\leq\s{D}}\abs{\bE_\Gamma\pp{(\chi^2(p||q))^{\frac{1}{2}|\alpha\cap\Gamma|}\Ind\ppp{\alpha\subseteq e(\Gamma)}}}^2
    1+\sum_{\alpha_{\s{Bip}}\in\calB_{\s{D}}}[\chi^2(p||q)]^{|\alpha_{\s{Bip}}|}\pr^2_{\mathsf{K}_{\calR, \calL}}\pp{\alpha_{\s{Bip}}\subseteq \mathsf{K}_{\calR, \calL}}.\label{eqn:sumoveralpha} 
\end{align}
We next evaluate the summation in \eqref{eqn:sumoveralpha}, separately for the dense and sparse regimes, starting with the former. We would like to mention here that, in fact, our analysis for the sparse regime can be generalized so that it will include the dense regime as a special case. Nonetheless, we opted to present the analysis of both cases, as the bounding techniques we use for the dense case are simpler, and might be of use in other related problems.
\subsubsection{Dense regime} 
Consider the case where $p,q = \Theta(1)$, for which $\chi^2(p||q) = \Theta(1)$. Our goal is to prove that if $k_{\s{R}}\vee k_{\s{L}}\ll \sqrt{n}$ then $\norm{\s{L}_{n,\leq\s{D}}}_{\calH_0}^2$ is bounded. For simplicity of notation, we denote $\lambda\triangleq\chi^2(p||q)$, and without loss of generality we assume that $k_{\s{R}}\geq k_{\s{L}}$. Then,
\begin{align}
    \pr_{\mathsf{K}_{\calR, \calL}}\p{\alpha_{\s{Bip}}\subseteq {\mathsf{K}_{\calR, \calL}}} & = \frac{\binom{n-v_{\s{R}}(\alpha_{\s{Bip}})}{k_{\s{R}}-v_{\s{R}}(\alpha_{\s{Bip}})}\binom{n-k_{\s{R}}-v_{\s{L}}(\alpha_{\s{Bip}})}{k_{\s{L}}-v_{\s{L}}(\alpha_{\s{Bip}})}}{\binom{n}{k_{\s{R}}}\binom{n-k_{\s{R}}}{k_{\s{L}}}}\\
    & = \frac{\binom{k_{\s{R}}}{v_{\s{R}}(\alpha_{\s{Bip}})}\binom{k_{\s{L}}}{v_{\s{L}}(\alpha_{\s{Bip}})}}{\binom{n}{v_{\s{R}}(\alpha_{\s{Bip}})}\binom{n-k_{\s{R}}}{v_{\s{L}}(\alpha_{\s{Bip}})}}\label{eqn:coeffGenera2}\\
    & \leq \p{\frac{ek_{\s{R}}}{n}}^{v_{\s{R}}(\alpha_{\s{Bip}})}\p{\frac{ek_{\s{L}}}{n-k_{\s{R}}}}^{v_{\s{L}}(\alpha_{\s{Bip}})}\label{eqn:coeffGenera3}\\
    &\leq \p{\frac{ek_{\s{R}}}{n}}^{v_{\s{R}}(\alpha_{\s{Bip}})}\p{\frac{ek_{\s{R}}}{n-k_{\s{R}}}}^{v_{\s{L}}(\alpha_{\s{Bip}})}\\
    & = \p{\frac{ek_{\s{R}}}{n}}^{v_{\s{R}}(\alpha_{\s{Bip}})+v_{\s{L}}(\alpha_{\s{Bip}})}\p{\frac{n}{n-k_{\s{R}}}}^{v_{\s{L}}(\alpha_{\s{Bip}})}\\
    &\leq \p{\frac{ek_{\s{R}}}{n}}^{v(\alpha_{\s{Bip}})}\exp\p{v_{\s{L}}(\alpha_{\s{Bip}})\cdot\frac{k_{\s{R}}}{n-k_{\s{R}}}},
\end{align}
where the first inequality follows from $(y/x)^x\leq\binom{y}{x}\leq (ey/x)^x$, the second inequality is because $k_{\s{R}}\geq k_{\s{L}}$, and in the last inequality we use $1+x\leq\exp(x)$, for $x\geq0$, and $v(\alpha_{\s{Bip}}) = |V(\alpha_{\s{Bip}})|$ is the total number of non-isolated vertices in $\alpha_{\s{Bip}}$. Now, notice that for any $\alpha_{\s{Bip}}\in\calB_{\s{D}}$, we must have $v_{\s{L}}(\alpha_{\s{Bip}})\leq\s{D}$. Thus, in the regime where $k_{\s{R}}\vee k_{\s{L}}\ll \sqrt{n}$, since $\s{D} = n^{o(1)}$, we have $\exp\p{v_{\s{L}}(\alpha_{\s{Bip}})\cdot\frac{k_{\s{R}}}{n-k_{\s{R}}}} = 1+o(1)$, for sufficiently large $n$. Therefore, we obtain 
\begin{align}
    \pr_{\mathsf{K}_{\calR, \calL}}\p{\alpha_{\s{Bip}}\subseteq {\mathsf{K}_{\calR, \calL}}}\leq(1+o(1))\cdot\p{\frac{ek_{\s{R}}}{n}}^{v(\alpha_{\s{Bip}})}\triangleq g(|V(\alpha_{\s{Bip}})|).
\end{align}
Thus, using \eqref{eqn:sumoveralpha} we get 
\begin{align}
\norm{\s{L}_{n,\leq\s{D}}}_{\calH_0}^2 &\leq 1+\sum_{\alpha_{\s{Bip}}\in\calB_{\s{D}}}\lambda^{|\alpha_{\s{Bip}}|}g^2(|V(\alpha_{\s{Bip}})|).\label{eqn:orthogonalDecomposition2}
\end{align}
%Let us focus on the case where $\chi^2(p||q)>1$, for which $\lambda = \chi^2(p||q)$. 
It is convenient to analyze the cases $\lambda>1$ and $\lambda\leq1$ separately, and we start with the former. For any set $\calV\subseteq[n]$, we have
\begin{align}
\sum_{\alpha_{\s{Bip}}\in\calB_{\s{D}}:V(\alpha_{\s{Bip}})=\calV}\lambda^{|\alpha_{\s{Bip}}|}g^2(|V(\alpha_{\s{Bip}})|)
& = g^2(|\calV|)\cdot\sum_{\alpha_{\s{Bip}}\in\calB_{\s{D}}:\;v(\alpha_{\s{Bip}})=\calV}\lambda^{|\alpha_{\s{Bip}}|}\\
& \leq g^2(|\calV|)\cdot\sum_{\alpha_{\s{Bip}}\in\calB_{\s{D}}:\;\alpha_{\s{Bip}}\subseteq\binom{\calV}{2}}\lambda^{|\alpha_{\s{Bip}}|}\\
& = g^2(|\calV|)\cdot\sum_{\ell=1}^{\min\p{\s{D},\binom{|\calV|}{2}}}\binom{\binom{|\calV|}{2}}{\ell}\lambda^{\ell}\\
&\leq g^2(|\calV|)\cdot\sum_{\ell=1}^{\min\p{\s{D},\binom{|\calV|}{2}}}\pp{\binom{|\calV|}{2}\lambda}^{\ell}\\
&\leq g^2(|\calV|)\p{1+\binom{|\calV|}{2}\lambda}^{\min\p{\s{D},\binom{|\calV|}{2}}},
\end{align}
where in the last inequality we have used the fact that $\sum_{i=0}^k\binom{n}{i}\leq\sum_{i=0}^k n^i1^{k-i}\leq(1+n)^k$. Next, every $\alpha_{\s{Bip}}$ with $|\alpha_{\s{Bip}}|\leq\s{D}$ has $|v(\alpha_{\s{Bip}})|\leq 2\s{D}$. Also, there are $\binom{n}{t}\leq n^t$ sets $\calV\subseteq[n]$ sets $\calV$ such that $|\calV|=t$. Thus,
\begin{align}
\norm{\s{L}_{n,\leq\s{D}}}_{\calH_0}^2 &\leq1+(1+o(1))\cdot\sum_{t=2}^{2\s{D}}(1+t)^2\p{\frac{ek_{\s{R}}}{n}}^{2t}n^t(1+t^2\lambda/2)^{\min\p{\s{D},t^2/2}}\\
& \leq 1+(1+o(1))\cdot\sum_{t=2}^{2\s{D}}(1+t)^2\p{\frac{e^2k^2_{\s{R}}}{n}}^{t}\pp{t\sqrt{\lambda}}^{2\min\p{\s{D},t^2/2}}\\
&= 1+(1+o(1))\cdot\sum_{t\leq\sqrt{2\s{D}}}(1+t)^2\p{\frac{e^2k^2_{\s{R}}}{n}}^{t}\pp{t\sqrt{\lambda}}^{t^2}\nonumber\\
&\quad+(1+o(1))\cdot\sum_{\sqrt{2\s{D}}<t<2\s{D}}(1+t)^2\p{\frac{e^2k^2_{\s{R}}}{n}}^{t}\pp{t\sqrt{\lambda}}^{2\s{D}},\label{eqn:normL2upper}
\end{align}
where the second inequality follows from the fact that $\lambda>1$ and thus $t^2\lambda>2$. Denote the summand in the first series at the right-hand-side of \eqref{eqn:normL2upper} by $\s{T}_t$. Then, applying the ratio test on the first term we get
\begin{align}
\frac{\s{T}_{t+1}}{\s{T}_t} &= \frac{(2+t)^2}{(1+t)^2}\frac{e^2k^2_{\s{R}}}{n}\frac{(t+1)^{(t+1)^2}}{t^{t^2}}\lambda^{\frac{1}{2}+t}\\
&\leq \frac{(2+t)^2}{(1+t)^2}\frac{e^2k^2_{\s{R}}}{n}e^{t}(1+t)^{1+2t}\lambda^{\frac{1}{2}+t}\\
& \leq (2+t)^{1+2t}\frac{e^2k^2_{\s{R}}}{n}e^{t}\lambda^{\frac{1}{2}+t}\label{eqn:polyMethod0}\\
&\leq (2+\sqrt{2\s{D}})^{1+2\sqrt{2\s{D}}}\frac{e^2k^2_{\s{R}}}{n}e^{\sqrt{2\s{D}}}\lambda^{\frac{1}{2}+\sqrt{2\s{D}}},\label{eqn:polyMethod1}
\end{align}
where in the first inequality we have used the fact that $(1+x)\leq \exp(x)$, for $x\geq0$, the second inequality is because $t\geq 2$, and finally, the last inequality is because $t\leq\sqrt{2\s{D}}$ and the fact that \eqref{eqn:polyMethod0} is monotonically increasing in $t$. Thus, if,
\begin{align}
\frac{e^2k_{\s{R}}^2}{n} e^{\sqrt{2\s{D}}}(2+\sqrt{2\s{D}})^{2\sqrt{2\s{D}}+1}\lambda^{\frac{1}{2}+\sqrt{2\s{D}}}<c<1,\label{eqn:condClique1}
\end{align}
for some $0<c<1$, then the series at the right-hand-side of \eqref{eqn:normL2upper} is upper bounded by a constant. Note that for any $\s{D}\leq C\log n$, and any $C>0$, we have $e^{\sqrt{2\s{D}}}(2+\sqrt{2\s{D}})^{2\sqrt{2\s{D}}+1}\lambda^{\frac{1}{2}+\sqrt{2\s{D}}} = n^{o(1)}$, and thus this factor in \eqref{eqn:condClique1} do not contribute to the polynomial scale of the term in the left-hand-side of \eqref{eqn:condClique1}. Accordingly, it is easy to check that \eqref{eqn:condClique1} holds if $k_{\s{R}} \leq n^{\frac{1}{2}-\epsilon}$, for any $\epsilon>0$. Next, in the same way, we denote the summand in the second term at the right-hand-side of \eqref{eqn:normL2upper} by $\s{T}_t'$. Then, applying the ratio test on the second term we get
\begin{align}
\frac{\s{T}'_{t+1}}{\s{T}'_t} &= \frac{(2+t)^2}{(1+t)^2}\frac{e^2k^2_{\s{R}}}{n}\frac{(1+t)^{2\s{D}}}{t^{2\s{D}}}\\
&\leq \frac{(2+t)^2}{(1+t)^2}\frac{e^2k^2_{\s{R}}}{n}e^{\frac{2\s{D}}{t}}\\
&\leq e^{\sqrt{2\s{D}}}\frac{(2+\sqrt{2\s{D}})^2}{(1+\sqrt{2\s{D}})^2}\frac{e^2k^2_{\s{R}}}{n}.\label{eqn:polyMethod3}
\end{align}
Again, it is easy to check that since $\s{D} = O(\log n)$ then $e^{\sqrt{2\s{D}}}\frac{(2+\sqrt{2\s{D}})^2}{(1+\sqrt{2\s{D}})^2} = n^{o(1)}$, and thus if $k_{\s{R}} \leq n^{\frac{1}{2}-\epsilon}$, for any $\epsilon>0$, then \eqref{eqn:polyMethod3} is strictly less than one, which implies that the second series at the right-hand-side of \eqref{eqn:normL2upper} is upper bounded by a constant. Therefore, for \eqref{eqn:normL2upper} to be bounded as $n\to\infty$ we need $k_{\s{R}} \ll\sqrt{n}$, as claimed. Recall the fact that we have assumed that $k_{\s{R}}\geq k_{\s{L}}$. If reversed, then we will get by symmetry that \eqref{eqn:normL2upper} is bounded as $n\to\infty$ if $k_{\s{L}}\ll\sqrt{n}$; therefore, in general $\norm{\s{L}_{n,\leq\s{D}}}_{\calH_0}^2$ is bounded if $k_{\s{R}}\vee k_{\s{L}}\ll\sqrt{n}$. Finally, we discuss the case where $\lambda\leq1$. In this case, we further bound $\lambda^{|\alpha_{\s{Bip}}|}\leq1$, for any $\alpha_{\s{Bip}}\in\calB_{\s{D}}$, and then \eqref{eqn:orthogonalDecomposition2} becomes,
\begin{align}
\norm{\s{L}_{n,\leq\s{D}}}_{\calH_0}^2 &\leq \sum_{\alpha_{\s{Bip}}\in\calB_{\s{D}}}g^2(|V(\alpha_{\s{Bip}})|)\\
& \leq 1+(1+o(1))\cdot\sum_{t=2}^{2\s{D}}(1+t)^2\p{\frac{e\cdot k_{\s{R}}\vee k_{\s{L}}}{n}}^{2t}n^t.
\end{align}
Using the same arguments as above, we get from the ratio test that $\norm{\s{L}_{n,\leq\s{D}}}_{\calH_0}^2$ is bounded under the same condition, namely, $k_{\s{R}}\vee k_{\s{L}}\ll \sqrt{n}$.

\subsubsection{Sparse regime} 
Next, we analyze the sparse regime where $p,q = \Theta(n^{-\alpha})$, for $\alpha\in(0,2]$. Without loss of generality we assume that $k_{\s{R}}\geq k_{\s{L}}$. Also, recall that $k_{\s{R}},k_{\s{L}}=o(n)$, and thus $k_{\s{R}},k_{\s{L}}\leq n/2$. Finally, we may assume that $\chi^2(p||q)<\frac{1}{2\s{D}^2}$; indeed, in the sparse regime we have $\chi^2(p||q) = \Theta(n^{-\alpha})$, and thus clearly $2\s{D}^2\chi^2(p||q)<1$ for $n$ sufficiently large, and any $\s{D} = n^{o(1)}$. 

Using \eqref{eqn:sumoveralpha} and \eqref{eqn:coeffGenera3}, we have
\begin{align}
    \norm{\s{L}_{n,\leq\s{D}}}_{\calH_0}^2 \leq
    1+\sum_{\alpha_{\s{Bip}}\in\calB_{\s{D}}}[\chi^2(p||q)]^{|\alpha_{\s{Bip}}|}\p{\frac{ek_{\s{R}}}{n}}^{2v_{\s{R}}(\alpha_{\s{Bip}})}\p{\frac{ek_{\s{L}}}{n-k_{\s{R}}}}^{2v_{\s{L}}(\alpha_{\s{Bip}})}.\label{eqn:sumoveralphaSparse} 
\end{align}
Let $\calB_{\ell_1,\ell_2,m}$ denote the set of all edge-induced bipartite subgraphs of the complete graph $\s{K}_n$ on $n$ vertices, having $\ell_1$ right vertices, $\ell_2$ left vertices, and $m$ edges. Formally, 
\begin{align}
    \calB_{\ell_1,\ell_2,m}\triangleq\ppp{\alpha_{\s{Bip}}\subseteq\s{K}_n:v_\s{R}(\alpha_{\s{Bip}})=\ell_1,\;v_\s{L}(\alpha_{\s{Bip}})=\ell_2,\;|\alpha_{\s{Bip}}|=m}.
\end{align}
Note that as the graphs in $\calB_{\ell_1,\ell_2,m}$ are edge induced and we only consider $\alpha_{\s{Bip}}$ such that $|\alpha_{\s{Bip}}|\leq\s{D}$, we have
\begin{align}
    1\leq\ell_1\leq\s{D},\;1\leq\ell_2\leq\s{D},\;\ell_1\vee\ell_2\leq m\leq\s{D}.
\end{align}
Now, using the fact that 
\begin{align}
    \calB_{\s{D}}=\bigcup_{1\leq\ell_1,\ell_2\leq\s{D}}\bigcup_{\ell_1\vee\ell_2\leq m\leq\s{D}}\calB_{\ell_1,\ell_2,m},
\end{align} 
we get
\begin{align}
    \norm{\s{L}_{n,\leq\s{D}}}_{\calH_0}^2 &\leq 1+\sum_{\ell_1,\ell_2=1}^{\s{D}}\sum_{m = \ell_1\vee\ell_2}^{\s{D}}\sum_{\alpha_{\s{Bip}}\in\calB_{\ell_1,\ell_2,m}}(\chi^2(p||q))^{m}\p{\frac{ek_{\s{R}}}{n}}^{2\ell_1}\p{\frac{ek_{\s{L}}}{n-k_{\s{R}}}}^{2\ell_2}\\
    & = 1+\sum_{\ell_1,\ell_2=1}^{\s{D}}\sum_{m = \ell_1\vee\ell_2}^{\s{D}}|\calB_{\ell_1,\ell_2,m}|\cdot(\chi^2(p||q))^{m}\p{\frac{ek_{\s{R}}}{n}}^{2\ell_1}\p{\frac{ek_{\s{L}}}{n-k_{\s{R}}}}^{2\ell_2}.
\end{align}
Let us upper bound $|\calB_{\ell_1,\ell_2,m}|$. To that end, we first enumerate the possible choices for the right and left vertices among the $n$ possible vertices, and then the $m$ edges among the $\ell_1\cdot\ell_2$ possible edges. Specifically,
\begin{align}
    |\calB_{\ell_1,\ell_2,m}|&=\binom{n}{\ell_1}\binom{n-\ell_1}{\ell_2}\binom{\ell_1\cdot\ell_2}{m}\\
    &\leq n^{\ell_1+\ell_2}(\ell_1\cdot\ell_2)^m\\
    &\leq n^{\ell_1+\ell_2}\s{D}^{2m}.
\end{align}
Thus,
\begin{align}
    \norm{\s{L}_{n,\leq\s{D}}}_{\calH_0}^2 &\leq 1+\sum_{\ell_1,\ell_2=1}^{\s{D}}\sum_{m = \ell_1\vee\ell_2}^{\s{D}}(\s{D}^2\chi^2(p||q))^{m}\p{\frac{e^2k^2_{\s{R}}}{n}}^{\ell_1}\p{\frac{e^2k^2_{\s{L}}n}{(n-k_{\s{R}})^2}}^{\ell_2}\\
    &\leq 1+\sum_{\ell_1,\ell_2=1}^{\s{D}}\p{\frac{e^2k^2_{\s{R}}}{n}}^{\ell_1}\p{\frac{4e^2k^2_{\s{L}}}{n}}^{\ell_2}\sum_{m = \ell_1\vee\ell_2}^{\s{D}}(\s{D}^2\chi^2(p||q))^{m}\\
    &\leq 1+2\sum_{\ell_1,\ell_2=1}^{\s{D}}\p{\frac{e^2k^2_{\s{R}}}{n}}^{\ell_1}\p{\frac{4e^2k^2_{\s{L}}}{n}}^{\ell_2}(\s{D}^2\chi^2(p||q))^{\ell_1\vee\ell_2}\\
    & \triangleq1+2\cdot (\calW_{1}+\calW_{2}+\calW_{3}),\label{eqn:sumthree}
\end{align}
where in the second inequality we have used the fact that $k_{\s{R}}\leq n/2$, and the third inequality follows from the inequality $\sum_{i=j}^\infty \beta^i = \frac{\beta^j}{1-\beta}\leq 2\beta^j$, for $\beta<1/2$, along with the fact that $\chi^2(p||q)<\frac{1}{2\s{D}^2}$, and finally in \eqref{eqn:sumthree} we defined,
\begin{align}
    \calW_1&\triangleq \sum_{\ell=1}^{\s{D}}\p{\frac{4e^4k^2_{\s{R}}k^2_{\s{L}}}{n^2}}^{\ell}(\s{D}^2\chi^2(p||q))^{\ell},\label{eqn:W1}\\
    \calW_2&\triangleq \sum_{\ell_1>\ell_2}^{\s{D}}\p{\frac{e^2k^2_{\s{R}}}{n}}^{\ell_1}\p{\frac{4e^2k^2_{\s{L}}}{n}}^{\ell_2}(\s{D}^2\chi^2(p||q))^{\ell_1},\label{eqn:W2}\\
    \calW_3&\triangleq \sum_{\ell_2>\ell_1}^{\s{D}}\p{\frac{e^2k^2_{\s{R}}}{n}}^{\ell_1}\p{\frac{4e^2k^2_{\s{L}}}{n}}^{\ell_2}(\s{D}^2\chi^2(p||q))^{\ell_2}. \label{eqn:W3}   
\end{align}
Next, we find the conditions under which $\calW_1$, $\calW_2$, and $\calW_3$ are all bounded simultaneously, which by using \eqref{eqn:sumthree} will imply that $\norm{\s{L}_{n,\leq\s{D}}}_{\calH_0}^2$ is bounded as well under the same conditions. Starting with $\calW_1$, the geometric series in \eqref{eqn:W1} is bounded provided that $\frac{4\s{D}^2e^4k^2_{\s{R}}k^2_{\s{L}}}{n^2}\chi^2(p||q)<c<1$, for some constant $c$. This is clearly the case when $\chi^2(p||q) \ll \frac{n^2}{k_\s{L}^2k_\s{R}^2}$. Next, we analyze $\calW_2$. We have,
\begin{align}
    \calW_2 = \sum_{\ell_1=1}^{\s{D}}\p{\frac{e^2k^2_{\s{R}}}{n}}^{\ell_1}(\s{D}^2\chi^2(p||q))^{\ell_1}\sum_{\ell_2=1}^{\ell_1}\p{\frac{4e^2k^2_{\s{L}}}{n}}^{\ell_2}.\label{eqn:W2ini}
\end{align}
We consider two cases. If $\frac{4e^2k^2_{\s{L}}}{n}<1$, then we use the fact that $\sum_{i=1}^j\beta^i\leq j\cdot\beta$, for $\beta<1$, and get,
\begin{align}
    \calW_2 &\leq \sum_{\ell_1=1}^{\s{D}}\p{\frac{e^2k^2_{\s{R}}}{n}}^{\ell_1}(\s{D}^2\chi^2(p||q))^{\ell_1}\ell_1\frac{4e^2k^2_{\s{L}}}{n}\\
    &\leq \s{D}\sum_{\ell_1=1}^{\s{D}}\p{\frac{e^2k^2_{\s{R}}}{n}}^{\ell_1}(\s{D}^2\chi^2(p||q))^{\ell_1}.\label{eqn:W20}
\end{align}
Accordingly, the geometric series in \eqref{eqn:W20} is bounded provided that $\frac{\s{D}^2e^2k^2_{\s{R}}}{n}\chi^2(p||q)<c<1$, for some constant $c$. This holds when $\chi^2(p||q) \ll \frac{n}{k^2_\s{R}}$. Conversely, if $\frac{4e^2k^2_{\s{L}}}{n}\geq1$, then we use the fact that $\sum_{i=1}^j\beta^i\leq j\cdot\beta^j$, for $\beta\geq1$, and get,
\begin{align}
    \calW_2 &\leq \sum_{\ell_1=1}^{\s{D}}\p{\frac{e^2k^2_{\s{R}}}{n}}^{\ell_1}(\s{D}^2\chi^2(p||q))^{\ell_1}\ell_1\p{\frac{4e^2k^2_{\s{L}}}{n}}^{\ell_1}\\
    &\leq \s{D}\sum_{\ell_1=1}^{\s{D}}\p{\frac{4e^4k^2_{\s{R}}k^2_{\s{L}}}{n^2}}^{\ell_1}(\s{D}^2\chi^2(p||q))^{\ell_1}.\label{eqn:W21}
\end{align}
As for $\calW_1$, the series in \eqref{eqn:W21} is bounded if $\chi^2(p||q) \ll \frac{n^2}{k_\s{L}^2k_\s{R}^2}$. Finally, we analyze $\calW_3$ exactly as we analyzed $\calW_2$. We get that if $\frac{e^2k^2_{\s{R}}}{n}<1$ then $\calW_3$ is bounded provided that $\frac{4\s{D}^2e^2k^2_{\s{L}}}{n}\chi^2(p||q)<c<1$, for some constant $c$. This is clearly the case when $\chi^2(p||q) \ll \frac{n}{k^2_\s{L}}$; since we assume that $k_\s{R}\geq k_\s{L}$ then this condition is redundant because $\frac{e^2k^2_{\s{R}}}{n}<1$. Conversely, if $\frac{e^2k^2_{\s{R}}}{n}\geq1$, we get that $\calW_3$ is bounded provided that $\chi^2(p||q) \ll \frac{n^2}{k_\s{L}^2k_\s{R}^2}$. Again, since $k_\s{R}\geq k_\s{L}$ then this condition is less stringent compared to the conditions we got for $\calW_2$. Combining the above, we get that $\calW_1$, $\calW_2$, and $\calW_3$, are all bounded when:
\begin{enumerate}
    \item If $k^2_{\s{L}}\ll n$, then we need $\chi^2(p||q) \ll \frac{n^2}{k_\s{L}^2k_\s{R}^2}$ and $\chi^2(p||q) \ll \frac{n}{k^2_\s{R}}$, and since the later dominates, the intersection reduces to $\chi^2(p||q) \ll \frac{n}{k^2_\s{R}}$. 
    \item If $k^2_{\s{L}}\gg n$, then we need $\chi^2(p||q) \ll \frac{n^2}{k_\s{L}^2k_\s{R}^2}$.
\end{enumerate}
Notice that when the above two cases are combined we get the condition $\chi^2(p||q) \ll \frac{n^2}{k_\s{L}^2k_\s{R}^2}\wedge\frac{n}{k^2_\s{R}}$, and if we remove the assumption that $k_\s{R}\geq k_\s{L}$, we obtain
$\chi^2(p||q) \ll \frac{n^2}{k_\s{L}^2k_\s{R}^2}\wedge\frac{n}{k^2_\s{R}\vee k^2_\s{L}}$. This concludes the proof.

\subsection{Converse}

In this subsection, we prove the converse part of Theorem~\ref{thm:gap}, namely, that in the dense and sparse regimes if $k^2_{\s{R}}\vee k^2_{\s{L}} \gg n$ and $\chi^2(p||q)\gg \frac{n^2}{k^2_{\s{R}} k^2_{\s{L}}}\wedge\frac{n}{k^2_{\s{R}}\vee k^2_{\s{L}}}$ then $\norm{\s{L}_{n,\leq \s{D}}}_{\calH_0}\geq \omega(1)$, respectively. Recall that we assume that $k_{\s{R}},k_{\s{L}}=o(n)$, and thus $k_{\s{R}},k_{\s{L}}\leq n/2$. Next, we lower bound $\norm{\s{L}_{n,\leq\s{D}}}_{\calH_0}^2$ in \eqref{eqn:sumoveralpha}. Using \eqref{eqn:coeffGenera2} and the inequality $(y/x)^x\leq\binom{y}{x}\leq (ey/x)^x$, we have,
\begin{align}
    \pr_{\mathsf{K}_{\calR, \calL}}\pp{\alpha_{\s{Bip}}\subseteq \mathsf{K}_{\calR, \calL}}&=\frac{\binom{k_{\s{R}}}{v_{\s{R}}(\alpha_{\s{Bip}})}\binom{k_{\s{L}}}{v_{\s{L}}(\alpha_{\s{Bip}})}}{\binom{n}{v_{\s{R}}(\alpha_{\s{Bip}})}\binom{n-k_{\s{R}}}{v_{\s{L}}(\alpha_{\s{Bip}})}}\\
    &\geq \p{\frac{k_{\s{R}}}{en}}^{v_{\s{R}}(\alpha_{\s{Bip}})}\p{\frac{k_{\s{L}}}{e(n-k_{\s{R}})}}^{v_{\s{L}}(\alpha_{\s{Bip}})},
\end{align}
for any $\alpha_{\s{Bip}}\in\calB_{\s{D}}$. Then, using similar arguments that lead to \eqref{eqn:sumthree}, we get
\begin{align}
    \norm{\s{L}_{n,\leq\s{D}}}_{\calH_0}^2 &\geq \sum_{\ell_1,\ell_2\geq1}\sum_{m\geq \ell_1\vee\ell_2}|\calB_{\ell_1,\ell_2,m}|\cdot(\chi^2(p||q))^{m}\p{\frac{k_{\s{R}}}{en}}^{2\ell_1}\p{\frac{k_{\s{L}}}{e(n-k_{\s{R}})}}^{2\ell_2}\\
    &\geq \sum_{\ell_1,\ell_2\geq1}\sum_{m\geq \ell_1\vee\ell_2}\p{\frac{n}{e\ell_1}}^{\ell_1}\p{\frac{n-\ell_1}{e\ell_2}}^{\ell_2}(\chi^2(p||q))^{m}\p{\frac{k_{\s{R}}}{en}}^{2\ell_1}\p{\frac{k_{\s{L}}}{e(n-k_{\s{R}})}}^{2\ell_2}\\
    & \geq \sum_{\ell_1,\ell_2\geq1}(\chi^2(p||q))^{\ell_1\vee\ell_2}\p{\frac{k^2_{\s{R}}}{e^3n\ell_1}}^{\ell_1}\p{\frac{k^2_{\s{L}}(n-\ell_1)}{e^3(n-k_{\s{R}})^2\ell_2}}^{\ell_2}\\
    & = \tilde\calW_1+\tilde\calW_2+\tilde\calW_3\\
    &\geq\tilde\calW_1\vee\tilde\calW_2\vee\tilde\calW_3,\label{eqn:tildeW}
\end{align}
where in the second inequality have used the fact that $|\calB_{\ell_1,\ell_2,m}|\geq \binom{n}{\ell_1} \binom{n-\ell_1}{\ell_2}\geq \p{\frac{n}{e\ell_1}}^{\ell_1}\p{\frac{n-\ell_1}{e\ell_2}}^{\ell_2}$, for any $m$, and we defined,
\begin{align}
    \tilde\calW_1&=\sum_{\ell=1}^{\s{D}}(\chi^2(p||q))^{\ell}\p{\frac{k^2_{\s{R}}}{e^3n\ell}}^{\ell}\p{\frac{k^2_{\s{L}}(n-\ell)}{e^3(n-k_{\s{R}})^2\ell}}^{\ell},\label{eqn:tideW1}\\
    \tilde\calW_2&=\sum_{\ell_1>\ell_2}^{\s{D}}(\chi^2(p||q))^{\ell_1}\p{\frac{k^2_{\s{R}}}{e^3n\ell_1}}^{\ell_1}\p{\frac{k^2_{\s{L}}(n-\ell_1)}{e^3(n-k_{\s{R}})^2\ell_2}}^{\ell_2},\label{eqn:tideW2}\\
    \tilde\calW_3&=\sum_{\ell_2>\ell_1}^{\s{D}}(\chi^2(p||q))^{\ell_2}\p{\frac{k^2_{\s{R}}}{e^3n\ell_1}}^{\ell_1}\p{\frac{k^2_{\s{L}}(n-\ell_1)}{e^3(n-k_{\s{R}})^2\ell_2}}^{\ell_2}.\label{eqn:tideW3}
\end{align}
To prove the converse, it turns out that it is suffice to find conditions under which either $\tilde\calW_1$ or $\tilde\calW_2$ at the right-hand-side of \eqref{eqn:tildeW} diverge, which in turn imply that $\norm{\s{L}_{n,\leq\s{D}}}_{\calH_0}^2$ diverge as well under the same conditions. Starting with $\tilde\calW_1$, by lower bounding $\tilde\calW_1$ by the first term in the sum (i.e., the one that corresponds to the index $\ell=1)$, we get $\tilde\calW_1\geq \chi^2(p||q)\frac{k^2_{\s{R}}}{e^3n}\frac{k^2_{\s{L}}(n-1)}{e^3(n-k_{\s{R}})^2}\geq \chi^2(p||q)\frac{k^2_{\s{R}}}{e^3n}\frac{k^2_{\s{L}}}{e^3n}$, where the second inequality follows from the fact that $\frac{n-\ell_1}{(n-k_{\s{R}})^2}\geq\frac{1}{n}$, because $\ell_1<n$. Thus, if $\chi^2(p||q)\gg\frac{n^2}{k_{\s{L}}^2k_{\s{R}}^2}$, then $\tilde\calW_1=\omega(1)$, which implies $\norm{\s{L}_{n,\leq\s{D}}}_{\calH_0}^2=\omega(1)$. Next, we prove that $\norm{\s{L}_{n,\leq\s{D}}}_{\calH_0}^2=\omega(1)$ also if $\chi^2(p||q)\gg\frac{n}{k_{\s{R}}^2\vee k_{\s{L}}^2}$, by analyzing $\tilde\calW_2$. We have,
\begin{align}
    \tilde\calW_2 &= \sum_{\ell_1=1}^{\s{D}}(\chi^2(p||q))^{\ell_1}\p{\frac{k^2_{\s{R}}}{e^3n\ell_1}}^{\ell_1}\sum_{\ell_2=1}^{\ell_1}\p{\frac{k^2_{\s{L}}(n-\ell_1)}{e^3(n-k_{\s{R}})^2\ell_2}}^{\ell_2}\\
    &\geq\sum_{\ell_1=1}^{\s{D}}(\chi^2(p||q))^{\ell_1}\p{\frac{k^2_{\s{R}}}{e^3n\ell_1}}^{\ell_1}\p{\frac{k^2_{\s{L}}(n-\ell_1)}{e^3(n-k_{\s{R}})^2}}\\
    &\geq \frac{k^2_{\s{L}}}{e^3n}\sum_{\ell_1=1}^{\s{D}}(\chi^2(p||q))^{\ell_1}\p{\frac{k^2_{\s{R}}}{e^3n\ell_1}}^{\ell_1}\\
    & = \frac{k^2_{\s{L}}}{e^3n}\frac{\zeta\cdot(\zeta^{\s{D}}-1)}{\zeta-1},\label{eqn:tildeW2ini}
\end{align}
where $\zeta\triangleq\chi^2(p||q)\frac{k^2_{\s{R}}}{\s{D}e^3n}$. We claim that there exist an integer $\s{D}$ such that if $\zeta\gg1$ (which is equivalent to $\chi^2(p||q)\gg\frac{n}{k^2_{\s{R}}}$) then $\tilde\calW_2=\omega(1)$. First, for $\zeta>2^{1/\s{D}}$ note that \eqref{eqn:tildeW2ini} is further lower bounded by $\frac{k^2_{\s{L}}}{2e^3n}\zeta^{\s{D}}$. Now, if $k^2_{\s{L}}\gg n$, then clearly \eqref{eqn:tildeW2ini} diverges when $\zeta\gg1$. If $k^2_{\s{L}}\ll n$, then since the ratio $\frac{k^2_{\s{L}}}{n}$ decreases at most polynomially with $n$, while $\zeta$ increases polynomially with $n$, then there must be an integer $\s{D}$ such that the product $\frac{k^2_{\s{L}}}{2e^3n}\zeta^{\s{D}}$ diverges. Let $\kappa_n$ be any sequence such that $\kappa_n=\omega(1)$. Then, any $\s{D}$ such that \begin{align}
    \s{D}\geq\frac{\log\kappa_n}{\log\xi}+\frac{\log\frac{2e^3n}{k^2_{\s{L}}}}{\log\xi},
\end{align}
implies that $\tilde\calW_2=\omega(1)$. If $\xi\gg1$, we have $\xi = \Theta(n^{\epsilon})$, for some $\epsilon>0$. Also, since $k^2_{\s{L}}\ll n$, then $\frac{2e^3n}{k^2_{\s{L}}} = \Theta(n^{\eta})$, from some $\eta>0$. Thus, for any $\kappa_n = o(\log n)$, if we take any integer $\s{D}$ such that
\begin{align}
    \s{D}\geq \frac{\eta}{\epsilon}+o(1),
\end{align}
then $\tilde\calW_2=\omega(1)$. %If $k_{\s{L}} = \Theta(n^{-\beta_{\s{L}}}), k_{\s{R}} = \Theta(n^{-\beta_{\s{R}}})$, $\chi^2(p||q) = \Theta(n^{-\alpha})$, then $\frac{k^2_{\s{L}}}{2e^3n}\zeta^{\s{D}}=\omega(1)$ for $\s{D}\geq \frac{1-2\beta_{\ell}}{1+2\beta_{\s{R}}-\alpha}$. 
Finally, by replacing the roles of $k_{\s{R}}$ and $k_{\s{L}}$ we obtain the condition $\chi^2(p||q)\gg\frac{n}{k^2_{\s{L}}}$, and thus when combined with the sufficient condition we obtained for $\tilde\calW_1$, we conclude that $ \norm{\s{L}_{n,\leq\s{D}}}_{\calH_0}^2=\omega(1)$ if $\chi^2(p||q)\gg\frac{n^2}{k_{\s{L}}^2k_{\s{R}}^2}\wedge\frac{n}{k^2_{\s{R}}\vee k^2_{\s{L}}}$, as stated.

\section{Conclusion} \label{sec:conclusion}
In this work, we studied the problem of detecting a planted bipartite structure in a large random graph. We derived asymptotically tight lower and upper bounds in both the dense and sparse regimes. It turns out that the, seemingly simple, bipartite structure has essentially a rich geometry, which produces a surprising phase diagram for the detection problem, as can be seen in, for example, Figure~\ref{fig:planted bipartite}. Finally, we provided an evidence that in the region of parameters where the inefficient scan test is successful, but the count and degree tests fail, the problem is conjecturally computationally hard, by proving that the class of low-degree polynomials fail in this region. An interesting direction for future research is to consider the more general case where $p = \Theta(n^{-\beta})$ and $q = \Theta(n^{-\gamma})$, for $\beta<\gamma$. It seems that while some of our results and techniques apply for this case as well, in order to obtain tight statistical and computational barriers, other techniques will be needed, especially in the regime where $k_{\s{R}}\vee k_{\s{R}}\ll\sqrt{n}$ and $\gamma>2\beta$, for which $\chi^2(p||q) = \Theta(n^{\gamma-2\beta})\gg1$. Finally, while in this paper we analyzed the detection problem, it is quite interesting and important to consider the recovery problem as well. Our detection algorithms can be easily converted to a respective recovery algorithm (e.g., the scan test is motivated by the maximum likelihood estimator). For lower bounds, other techniques are needed. For example, for computational lower bounds it will be interesting to use the recent framework of low-degree polynomials for recovery \cite{SCHRAMM22}. 

\bibliographystyle{unsrt}
\bibliography{bibfile,bibfile2}

\end{document}